\newcites{si}{Additional References for the Supplementary Material}
\renewcommand\thesection{\arabic{section}}
\renewcommand{\thesubsection}{\thesection.\arabic{subsection}}
\newcommand{\single}{\renewcommand{\baselinestretch}{1.2}\normalsize}
\newcommand{\double}{\renewcommand{\baselinestretch}{1.63}\normalsize}
\newcommand{\bc}{\begin{center}}
\newcommand{\ec}{\end{center}}
\newcommand{\no}{\noindent}
\DeclareMathOperator*{\argmin}{argmin}
\definecolor{LightGray}{gray}{0.9}
\newcommand\numberthis{\addtocounter{equation}{1}\tag{\theequation}}
\newcommand{\papertitle}{Modeling Persistent Trends in Distributions}
\newtheorem{thm}{Theorem}
\newtheorem{lem}{Lemma}
\newcounter{factnum}
\newcounter{claimnum}
 \newcounter{defnum}
\newtheorem{definition}[defnum]{Definition}
\newcounter{assumptioncount}
\newcommand{\beginsupplement}{ 
        \setcounter{section}{0}
        \renewcommand{\thesection}{S\arabic{section}} %
         \renewcommand{\thesubsection}{\thesection.\arabic{subsection}}
        \setcounter{table}{0}
        \renewcommand{\thetable}{S\arabic{table}} %
        \setcounter{figure}{0}
        \renewcommand{\thefigure}{S\arabic{figure}} %
     }
  \let\oldthebibliography=\thebibliography
  \let\endoldthebibliography=\endthebibliography
\begin{document}
\thispagestyle{empty} \single \bc {\bf \sc \Large \papertitle}\vspace{0.15in}\\
Jonas Mueller, Tommi Jaakkola, and David Gifford \\
MIT Computer Science \& Artificial Intelligence Laboratory \\
Cambridge, MA 02139
 
 \ec 
 
 \vspace{0.4in} \thispagestyle{empty}
  \bc{\sf  \textbf{Abstract}} \ec \vspace{-.1in} \no
We present a nonparametric framework to model a short sequence of probability distributions that vary both due to underlying effects of sequential progression and confounding noise.  To distinguish between these two types of variation and estimate the sequential-progression effects, our approach leverages an assumption that these effects follow a persistent trend.  This work is motivated by the recent rise of single-cell RNA-sequencing experiments over a brief time course, which aim to identify genes relevant to the progression of a particular biological process across diverse cell populations.  While classical statistical tools focus on scalar-response regression or order-agnostic differences between distributions, it is desirable in this setting to consider both the full distributions as well as the structure imposed by their ordering.  We introduce a new regression model for ordinal covariates where responses are univariate distributions and the underlying relationship reflects consistent changes in the distributions over increasing levels of the covariate.   This concept is formalized as a \emph{trend} in distributions, which we define as an evolution that is linear under the Wasserstein metric.  Implemented via a fast alternating projections algorithm, our method exhibits numerous strengths in simulations and analyses of single-cell gene expression data.

 \vspace{0.7in}
\no {KEY WORDS:\quad Wasserstein distance, batch effect, quantile regression, pool adjacent violators algorithm, single cell RNA-seq}.
\thispagestyle{empty} \vfill

 

\newpage
\pagenumbering{arabic} \setcounter{page}{1} \double

\section{Introduction}
\label{sec:intro}

A common type of data in scientific and survey settings consists of real-valued observations sampled in batches, where each batch shares a common label (this numerical/ordinal value is the \emph{covariate}) whose effects on the observations are the item of interest.   When each batch consists of a large number of i.i.d.\ observations, the empirical distribution of the batch may be a good approximation of the underlying population distribution conditioned on the value of the covariate.  A natural goal in this setting is to quantify the covariate's effect on these conditional distributions, considering changes across all segments of the population.  In the case of high-dimensional observations, one can measure this effect separately for each variable to identify which are the most interesting.  However, it may often occur that, in addition to random sampling variability, there exist unmeasured confounding variables (unrelated to the covariate) that affect the observations in a possibly dependent manner within the same batch (cf.\ \emph{batch effects} in \citeauthor{Risso2014} \citeyear{Risso2014}).  

The primary focus of this paper is the introduction of the TRENDS (Temporally Regulated Effects on Distribution Sequences) regression model, which infers the magnitude of these covariate-effects across entire distributions.  TRENDS is an extension of classic regression with a single covariate (typically of fixed-design), where one realization of our dependent variable is a batch's entire empirical distribution (rather than a scalar) and the condition that fitted-values are smooth/linear in the covariate is replaced by the condition that fitted distributions follow a \emph{trend}.  Formally defined in \S\ref{sec:trenddef}, a trend describes a sequence of distributions where the $p^{\text{th}}$ quantile evolves monotonically for all $p \in (0,1)$, though not necessarily in the same direction for different $p$, and there are at most two partitions of the quantiles that move in opposite directions.  Thus, TRENDS extends scalar-valued regression to full distributions while retaining the ability to distinguish effects of interest from extraneous noise.  

Despite the generality of our ideas, we motivate TRENDS with a concrete scientific application: the analysis of single-cell RNA-sequencing time course data (see \S\ref{sec:acs} for a different application to income data; references preceded by `S' are in the Supplementary Material).

The recent introduction of single-cell RNA-seq (scRNA-seq) techniques to obtain transcriptome-wide gene expression profiles from individual cells has drawn great interest \citep{GeilerSamerotte2013}.  Previously only measurable in aggregate over a whole tissue-sample/culture consisting of thousands of cells, gene-expression at the single-cell level offers insight into biological phenomena at a much finer-grained resolution, and is important to quantify as even cells of the same supposed type exhibit dramatic variation in morphology and function.  One promising experimental design made feasible by the advent of this technology involves sampling groups of cells at various times from  tissues / cell-cultures undergoing development and applying scRNA-seq to each sampled cell \citep{Trapnell2014, Buettner2015}.  It is hoped that these data can reveal which  \emph{developmental} genes regulate/mark the emergence of new cell types over the course of development.  

Current scRNA-seq cost/labor constraints prevent dense sampling of cells continuously across the entire time-continuum.  Instead, researchers target a few time-points, simultaneously isolating sets of cells at each time and subsequently generating RNA-seq transcriptome profiles for each individual cell that has been sampled.  More concretely, from a cell population undergoing some biological process like development, one samples $N_\ell \ge 1$ batches of cells from the population at time $t_\ell$ where $\ell = 1,2,\dots, L$ indexes the time-points in the experiment and $i=1,\dots, N = \sum_{\ell=1}^L N_\ell$ indexes the batches.  Each batch consists of $n_i$ cells sampled and sequenced together. We denote by $x^{(g)}_{i, s} \in \mathbb{R}$ the measured expression of gene $g$ in the $s$th cell of the $i$th batch ($1 \le s \le n_i$), sampled at time $t_{\ell_i}$.  

Because expression profiles are restricted to a sparse set of time points in current scRNA-seq experiments, the underlying rate of biological progression can drastically differ between equidistant times.  Thus, changes in the expression of genes regulating different parts of this process may be highly nonuniform over time, invalidating assumptions like linearity or smoothness.  One common solution in standard tissue-level RNA-seq time course analysis is time-warping \citep{Bar-Joseph2003}.  Since our interest lies not in predicting gene-expression at new time-points, we instead aim for a procedure that respects the sequence of times without being  sensitive to their precise values.  In fact, researchers commonly disregard the wall-clock time at which sequencing is done, instead recording the experimental chronology as a sequence of stages corresponding overall qualitative states of the biological sample.  For example, in \citet{Deng2014}: Stage 1 is the oocyte, Stage 2 the zygote, \dots, Stage 11 the late blastocyst.  Attempting to impose a common scale on the stage numbering is difficult because the similarity in expression expected across different pairs of adjacent stages might be highly diverse for different genes.  In this work, we circumvent this issue by disregarding the time-scale and $t_\ell$ values, instead working only with the ordinal levels $\ell$ (so the only information retained about the times is their order $t_1 < t_2 < \dots < t_L$ ), as done by \citet{Bijleveld1998} (Section 2.3.2). 

\begin{figure}[h!] \centering
\includegraphics[width= 0.6\textwidth]{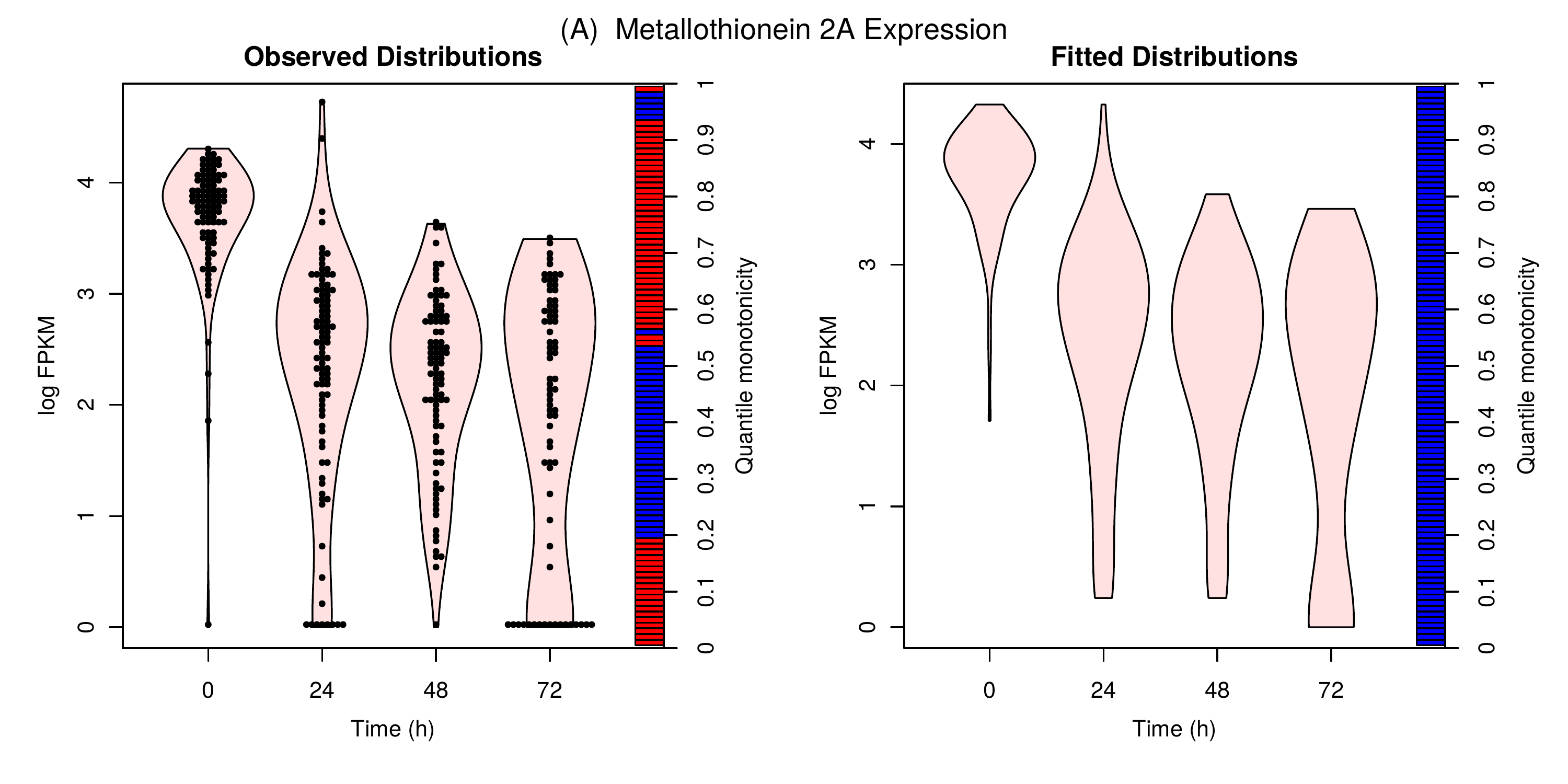} \\
\vspace*{-3mm}
\includegraphics[width= 0.6\textwidth]{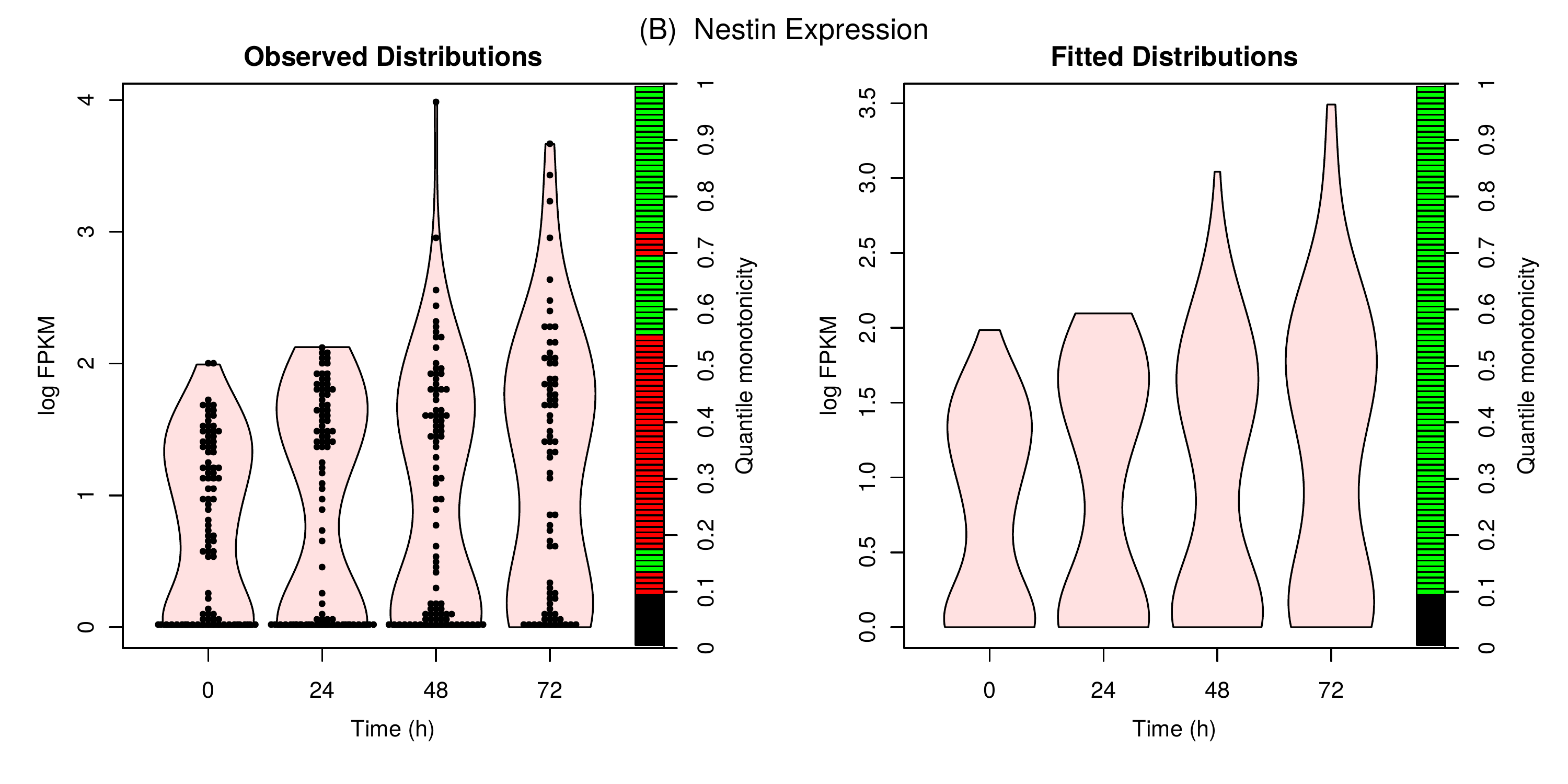} \\
\vspace*{-2mm}
\includegraphics[width=0.28 \textwidth]{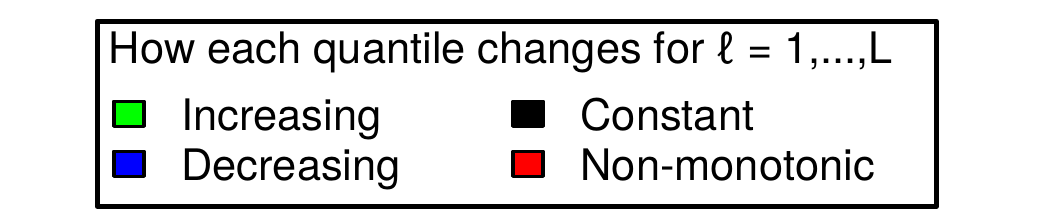}
\caption{Violin plots (kernel density estimates) depicting the empirical distribution of known developmental genes' expression measured in myoblast cells (on left), and the corresponding TRENDS fitted distributions (on right). Each point shows a sampled cell.}
\label{geneexamples}
\end{figure}

Depictions of such data from two genes (where $N_\ell = 1$ for each $\ell$) are shown in the lefthand panels of Figure \ref{geneexamples}.  Lacking longitudinal measurements, these data differ from those studied in time series analysis:  at each time point, one observes a different group of numerous exchangeable samples (no cell is profiled in two time points), and also the number of time points is small (generally $L < 10$).  As a result of falling RNA-seq costs, multiple cell-capture plates (each producing a batch of sampled cells, i.e.\  $N_\ell > 1$) are being used at each time point to observe larger fractions of the cell population \citep{Zeisel2015}.   Because the cells in a batch are simultaneously collected and sequenced (independently of other batches), the measured gene-expression values are often biased by \emph{batch effects}: technical artifacts that perturb observed values in a possibly correlated fashion between cells of the same batch  \citep{Risso2014, Kharchenko2014}.   Rather than treating the cells from a single time point identically, it is desirable to retain  batch information and account for this nuisance variation.  Batch effects are also prevalent in other applications including temporal studies of demographic statistics, where a simultaneously-collected group of survey results may be biased by latent factors like location.

Furthermore, cell populations can exhibit enormous  heterogeneity, particularly in developmental or in vivo settings \citep{Trapnell2014, Buettner2015}.  A few high-expression cells often bias a population's average expression, and transcript levels can vary 1,000-fold between seemingly equivalent cells \citep{GeilerSamerotte2013}\footnote{{\citeauthor{GeilerSamerotte2013} lament: ``analyzing gene expression in a tissue sample is a lot like measuring the average personal income throughout Europe -- many interesting and important phenomena are simply invisible at the aggregate level.  Even when phenotypic measurements have been meticulously obtained from single cells or individual organisms, countless studies ignore the rich information in these distributions, studying the averages alone''.}}.  By fitting a TRENDS model (which accounts for both batch effects and the full distribution of expression across cells) to each gene's expression values, researchers can rank genes based on their presumed developmental relevance or employ hypothesis testing to determine whether observed temporal variation in expression is biologically relevant.  

\section{Related Work}
\label{sec:relatedwork}

To better motivate the ideas subsequently presented in this paper, we first describe why existing methods are not suited for scRNA-seq time course experiments and similar ordered-batched data lacking longitudinal  measurements.  As an alternative to time-series techniques, regression models might be applied in this setting, such as the Tobit generalized linear model of \citet{Trapnell2014}.  However, these models rely on  linearity/smoothness assumptions, which can be inappropriate for  sporadic processes such as development.  More importantly, classic regression models scalar values such as conditional expectations, for which results must be interpreted as the effects in a hypothetical ``average cell''.  

Rather than focusing only on (conditional) expectations or a few quantiles, it is often more appropriate to model the full (conditional) distribution of values in a heterogeneous population \citep{GeilerSamerotte2013, Buettner2015}.  Let $P_{\ell}$ denote the underlying distribution of the observations from covariate-level $\ell$.   An omnibus test for distribution-equality ($H_0: P_{1} = \dots = P_{L} $ vs.\ the alternative that they are not all identical, cf.\ the Komogorov-Smirnov method described in \S\ref{sec:othermthds}) can capture arbitrary changes, but fails to  reflect sequential dynamics.  Significance tests also do not quantify the size of effects, only the evidence for their existence.  \citet{Krishnaswamy2014} have proposed  a mutual-information based measure (DREMI) to quantify effects, which could be applied to our setting.  However, under systematic noise caused by batch effects, measures of general statistical dependence between the batch-values and label $\ell$ (e.g.\  mutual information or hypothesis testing) become highly susceptible to the spurious variation present in the observed distributions (resulting in false positives).  We thus prefer borrowing strength  in the sense that a consistent change in distribution should ideally be observed across multiple time points for an effect to be deemed significant.

Instead of these general approaches, we model the $P_\ell$ as conditional distributions $\Pr(X \mid \ell)$ which follow some assumed structure as $\ell$ increases.  Work in this vein has focused on modeling only a few particular quantiles of interest \citep{Bondell2010} or accurate estimation of the conditional distributions using smooth nonparametric regression techniques \citep{Fan1996, Hall1999}.  While such estimators possess nice theoretical properties and good predictive-power, the relationships they describe may be opaque and it is unclear how to quantify the covariate's effect on the entire distribution.  Note that in the case of classic regression, interpretable linear methods remain favored for measuring effects throughout the sciences, despite the availability of flexible nonlinear function families.  Our TRENDS framework retains this interpretability while modeling effects across full distributions. 

Change-point analysis can also be applied to sequences of distributions, but is designed for detecting the precise locations of change-points over long intervals.  However, scRNA-seq experiments only span a brief time-course (typically $L \le 10$), and the primary analytic goal is rather to quantify how much a gene's expression has changed in a biologically interesting manner.  Many change-point methods require explicit parameterization of the types of distributions, an undesirable necessity given the irregular nature of scRNA-seq expression measurements \citep{Kharchenko2014}.  Moreover, some development-related genes exhibit gradual rather than abrupt temporal temporal changes in expression.   Requiring few statistical assumptions, TRENDS models changes ordinally rather than only considering effects that are either smooth or instantaneous, and this method can therefore accurately quantify both abrupt or gradual effects.

\section{Methods}
\label{sec:methods}
Formally, TRENDS fits a regression model to an ordered sequence of distributions, or more broadly, sample pairs $\{(\ell_i, \widehat{P}_i)\}_{i=1}^N$ where each $\ell_i \in \{1,\dots, L \}$ is an ordinal-valued label associated with the $i$th batch, for which we have univariate empirical distribution  $\widehat{P}_i$.  Here, it is supposed that for each batch $i$: a (empirical) quantile function $\widehat{F}^{-1}_i$ is estimated from $n_i$ scalar observations $\{X_{i,s} \}_{s=1}^{n_i} \sim P_i$  sampled from underlying distribution $P_i = \Pr(X \mid \ell_i)$, which may be contaminated by different batch effects for each $i$.  We assume a fixed-design where each level of the covariate $1,\dots, L$ is associated with at least one batch.  In scRNA-seq data, $\widehat{P}_{i}$ is the empirical distribution of one  gene's measured expression values over the cells captured in the same batch and $\ell_i$ indicates the index of the time point at which the batch was sampled from the population for sequencing.  

Unlike the supervised learning framework where one observes samples of $X$ measured at different $\ell$ and the goal is to infer some property of $P_\ell := \Pr(X | \ell)$, in our setting, we can easily obtain $\widehat{P}_{i}$ as an empirical estimate of $\Pr(X | \ell_i)$.   We thus neither seek to estimate the distributions $P_1,\dots, P_L$, nor test for inequality between them.  Rather, the primary goal of TRENDS analysis is to infer how much of the variation in $\Pr(X \mid \ell)$ across different $\ell$ may be attributed to changes in $\ell$ as opposed to the effects of other unmeasured confounding factors.  To quantify this variation, we introduce conditional effect-distributions $Q_\ell$ for which the sequence of transformations $Q_1 \rightarrow Q_2 \rightarrow \dots \rightarrow Q_L$ entirely captures the effects of $\ell$-progression on $\Pr(X \mid \ell)$, under the assumption that these underlying forces follow a \emph{trend} (defined in \S\ref{sec:trenddef}).  We emphasize that the $Q_\ell$ themselves are not our primary inferential interest, rather it is the variation in these conditional-effect distributions that we attribute to increasing-$\ell$ rather than batch effects.

Thus, the $Q_\ell$ are \emph{not} estimators of the sequence of $P_{\ell_i}$.  Rather, the $Q_\ell$ represent the distributions one would expect see in the absence of exogenous effects and random sampling variability, in the case where the underlying distributions \emph{only} change due to $\ell$-progression and we observe the entire population at each $\ell$.  Because we do not believe exogenous effects unrelated to $\ell$-progression are likely to follow a trend over $\ell$, we can identify the sequence of trending distributions which best models the variation in $\{\widehat{P}_{\ell_i}\}_{i=1}^N$ and reasonably conclude that changes in this sequence reflect the $\ell$-progression-related forces affecting $P_\ell$.  

\section{Wasserstein Distance}
\label{sec:wassdist}

TRENDS employs the Wasserstein distance to measure divergence between distributions.  Intuitively interpreted as the minimal amount of ``work'' that must be done to transform one distribution into the other, this metric has been successfully applied in many domains  \citep{Levina2001, Mueller2015}.  The Wasserstein distance is a natural dissimilarity measure of populations because it accounts for the proportion of individuals that are different as well as \emph{how} different these individuals are.  For univariate distributions, the $L_q$ Wasserstein distance is simply the $L_q$ distance between quantile functions given by:  \vspace*{-2mm}
\begin{equation}
d_{L_q}(P,Q) = \left( \int_0^1 \left| F^{-1}(p) - G^{-1}(p)\right|^q \ \mathrm{d}p \right)^{1/q}
\label{wasserstein}  \vspace*{0mm}
\end{equation}
where $F, G$ are the CDFs of $P, Q$ and $F^{-1}, G^{-1}$ are the corresponding \emph{quantile} functions.  Slightly abusing  notation, we use $d_{L_q}(\cdot,\cdot)$ to denote both Wasserstein distances between distributions or the corresponding quantile functions' $L_q$-distance (both $q = 1,2$ are used in this work).  In addition to being easy to compute (in 1-D), the $L_2$ Wasserstein metric is equipped with a natural space of quantile functions, in which the Fr\'echet mean takes the simple form stated in Lemma \ref{frechet}.   Calling this average the \emph{Wasserstein mean}, we note its implicit use in the popular quantile normalization technique \citep{Bolstad2003}. 
\begin{lem} Let $\mathcal{Q}$ denote the space of all quantile functions.  The Wasserstein mean is the Fr\'echet mean in $\mathcal{Q}$ under the $L_2$ norm:  \vspace*{-3mm}
\begin{equation} 
\overline{\mathbf{F}}^{-1} := \frac{1}{N} \sum_{i=1}^N F^{-1}_{i} = \argmin_{G^{-1} \in \mathcal{Q}} \bigg\{  \sum_{i=1}^N \int_0^1 \left( F^{-1}_{i}(p) -  G^{-1}(p) \right)^2 \mathrm{d}p \bigg\} 
 \label{frechetobjective}
\end{equation}  
  \label{frechet}
\end{lem}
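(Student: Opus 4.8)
The plan is to treat \eqref{frechetobjective} as a least-squares (Fr\'echet-mean) problem in the Hilbert space $L^2(0,1)$ and to argue by an orthogonal ``bias--variance'' decomposition rather than by variational calculus. The underlying intuition is pointwise: for each fixed $p$ the scalar map $y \mapsto \sum_{i=1}^N (F_i^{-1}(p) - y)^2$ is a strictly convex quadratic minimized at the arithmetic mean $y = \tfrac1N\sum_i F_i^{-1}(p) = \overline{\mathbf{F}}^{-1}(p)$, and integrating over $p$ suggests $\overline{\mathbf{F}}^{-1}$ as the global minimizer. The one genuinely new ingredient, relative to the Euclidean case, is that the minimization is constrained to the set $\mathcal{Q}$ of quantile functions, so the crux is to verify that this unconstrained minimizer is in fact feasible, i.e.\ that $\overline{\mathbf{F}}^{-1} \in \mathcal{Q}$.

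First I would fix an arbitrary $G^{-1} \in \mathcal{Q}$ and insert $\pm\,\overline{\mathbf{F}}^{-1}$ into each integrand, writing $F_i^{-1}-G^{-1} = (F_i^{-1} - \overline{\mathbf{F}}^{-1}) + (\overline{\mathbf{F}}^{-1} - G^{-1})$. Expanding the square and summing over $i$ produces three terms; the key observation is that the cross term equals $2\int_0^1 \bigl(\overline{\mathbf{F}}^{-1}(p) - G^{-1}(p)\bigr)\sum_{i=1}^N \bigl(F_i^{-1}(p) - \overline{\mathbf{F}}^{-1}(p)\bigr)\,\mathrm{d}p$, and the inner sum vanishes identically by the very definition $\overline{\mathbf{F}}^{-1} = \tfrac1N\sum_i F_i^{-1}$. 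Hence the objective reduces to $\sum_{i=1}^N \int_0^1 (F_i^{-1} - \overline{\mathbf{F}}^{-1})^2\,\mathrm{d}p + N\int_0^1 (\overline{\mathbf{F}}^{-1} - G^{-1})^2\,\mathrm{d}p$, in which the first term is constant in $G^{-1}$ and the second is nonnegative. This shows at once that $\overline{\mathbf{F}}^{-1}$ minimizes the objective over all of $L^2(0,1)$, with uniqueness up to $\mathrm{d}p$-almost-everywhere equality since the second term vanishes precisely when $G^{-1} = \overline{\mathbf{F}}^{-1}$.

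The step I expect to require the most care is feasibility: confirming that $\overline{\mathbf{F}}^{-1}$ is itself a bona fide quantile function, so that the unconstrained optimum is attained inside $\mathcal{Q}$. This reduces to checking the defining properties of $\mathcal{Q}$: each $F_i^{-1}$ is nondecreasing on $(0,1)$, and a convex combination (here the arithmetic average) of nondecreasing functions is again nondecreasing, so monotonicity is preserved; measurability and square-integrability likewise pass through the finite sum under the standing assumption that each $F_i^{-1} \in L^2(0,1)$. Since $\overline{\mathbf{F}}^{-1} \in \mathcal{Q}$ and it already minimizes the objective over the larger set $L^2(0,1) \supseteq \mathcal{Q}$, it is \emph{a fortiori} the unique minimizer over $\mathcal{Q}$, which is exactly the claimed identity \eqref{frechetobjective}.
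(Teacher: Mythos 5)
Your proof is correct and follows essentially the same route as the paper's: the paper writes $G^{-1} = \overline{\mathbf{F}}^{-1} + H$ and observes pointwise that the sum of squares is minimized at $H(p)=0$, which is exactly your orthogonal decomposition with vanishing cross term. The only addition is your explicit check that $\overline{\mathbf{F}}^{-1} \in \mathcal{Q}$ (an average of nondecreasing functions is nondecreasing), a feasibility step the paper leaves implicit but which is a worthwhile inclusion.
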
 
\vspace*{-10mm}

\section{Characterizing trends in distributions}
\label{sec:trenddef}

\begin{definition} \normalfont Let $F_\ell^{-1}(p)$ denote the $p$th quantile of distribution $P_\ell$ with CDF $F_\ell$.  A sequence of distributions $P_1,\dots, P_L$  follows a \textbf{\emph{trend}} if:
\begin{enumerate} \setlength\itemsep{0em}
\item For any $p \in (0,1)$, the sequence $[F_1^{-1}(p), \dots, F_L^{-1}(p)]$ is monotonic.
\item There exists $p^* \in [0,1)$ and two intervals $A, B$ that partition the unit-interval at $p^*$ (one of $A$ or $B$  equals $(0,p^*)$ and the other equals $[p^*, 1)$) such that:  for all $p \in A$, the sequences $[F_1^{-1}(p), \dots, F_L^{-1}(p)]$ are all nonincreasing, and for all $q \in B$, the sequences $[F_1^{-1}(q), \dots, F_L^{-1}(q)]$ are all nondecreasing.  Note that if $p^* = 0$, then all quantiles must change in the same direction as $\ell$ grows.  
\end{enumerate}
 \label{trenddef}
\end{definition}  

\vspace*{-0mm}

\begin{figure}[h!] 
\centering
\includegraphics[width=0.243\textwidth]{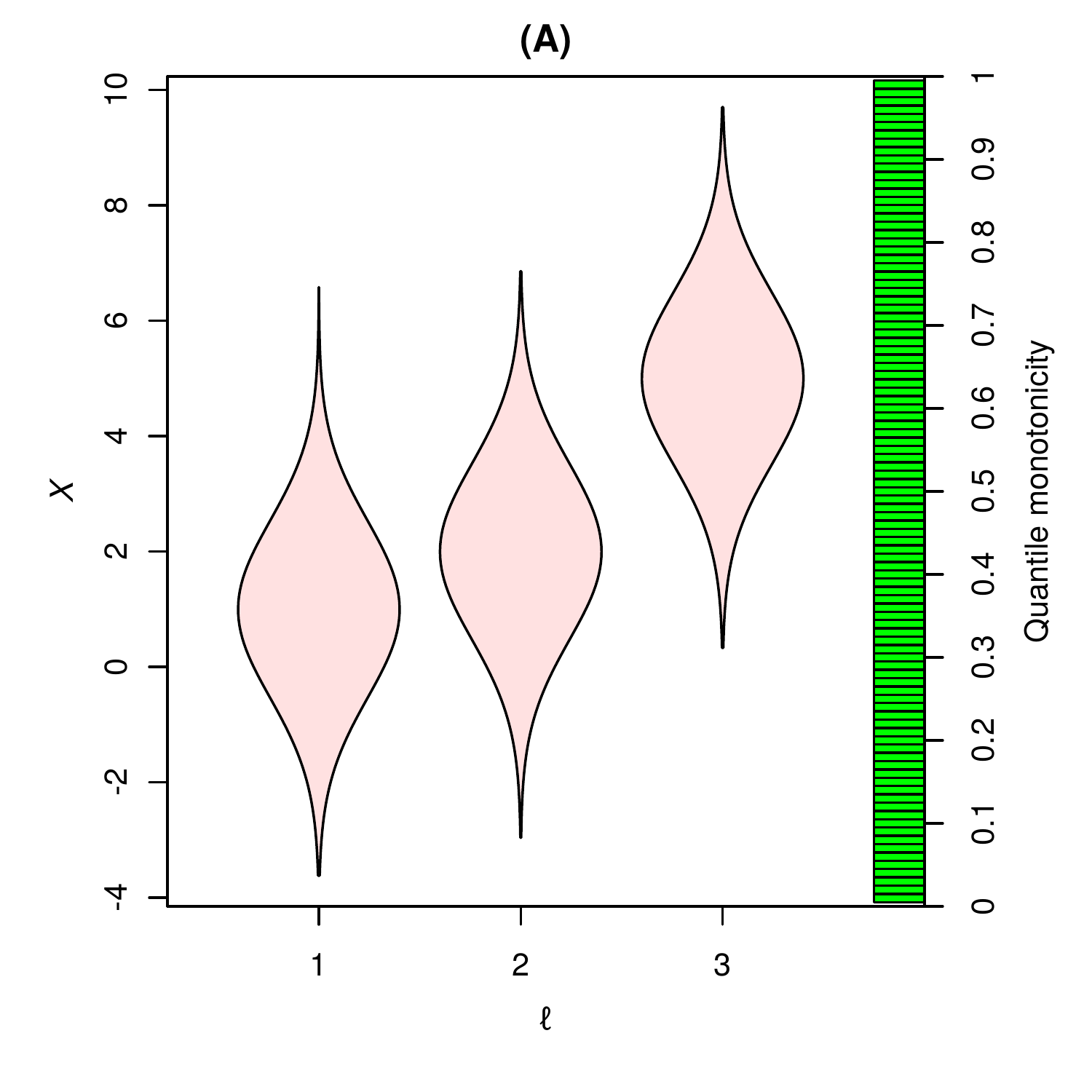}
\includegraphics[width=0.243\textwidth]{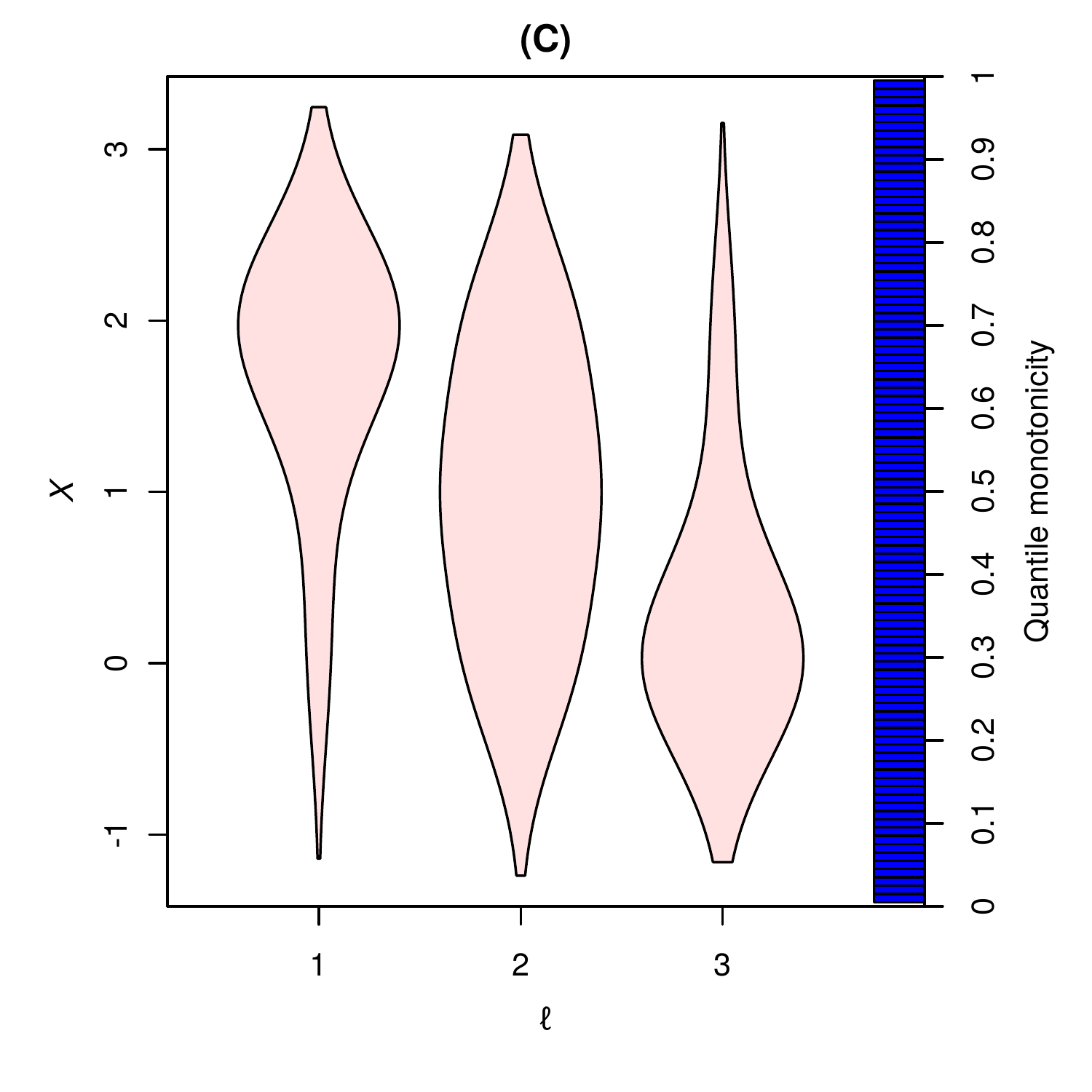}
\includegraphics[width= 0.243\textwidth]{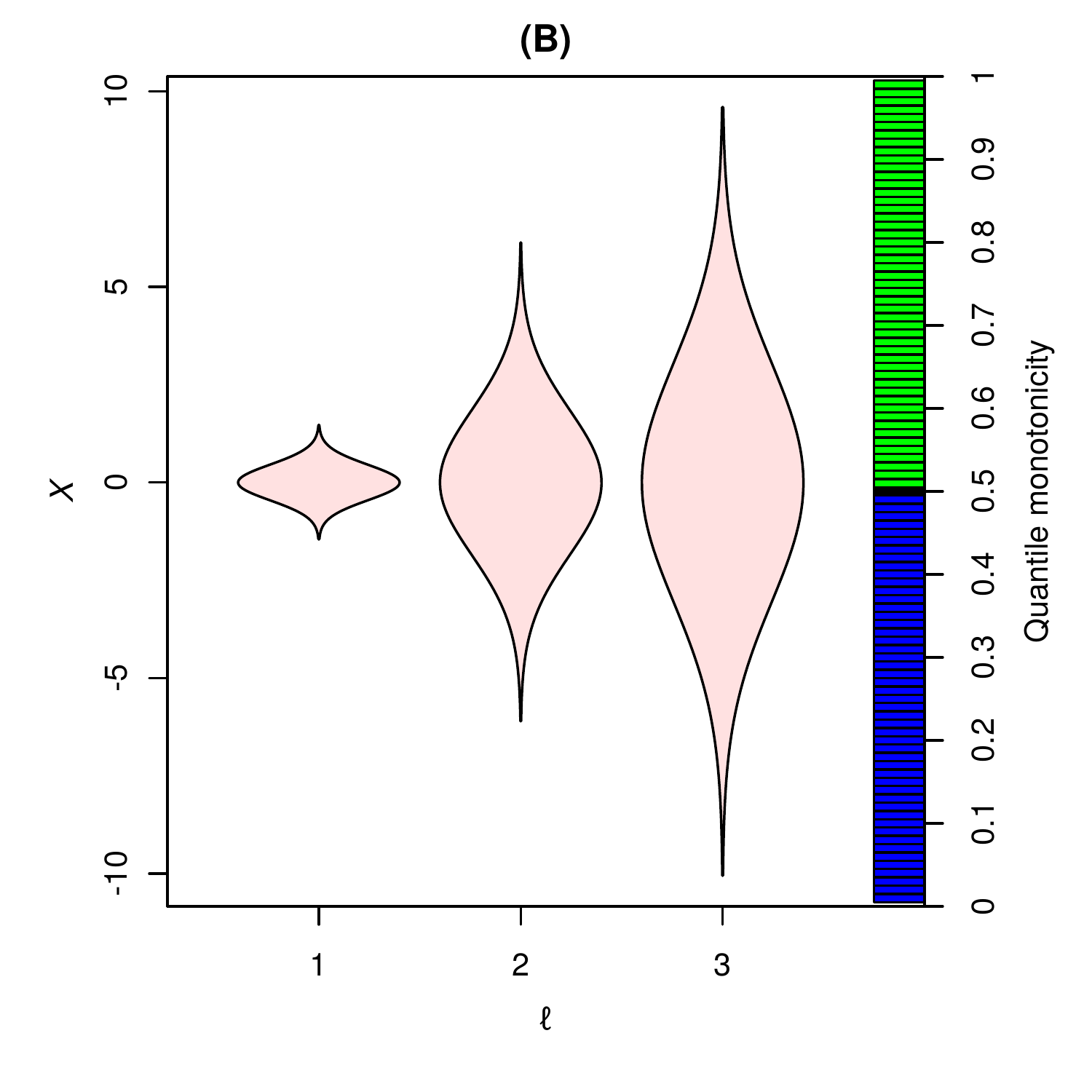}
\includegraphics[width=0.243\textwidth]{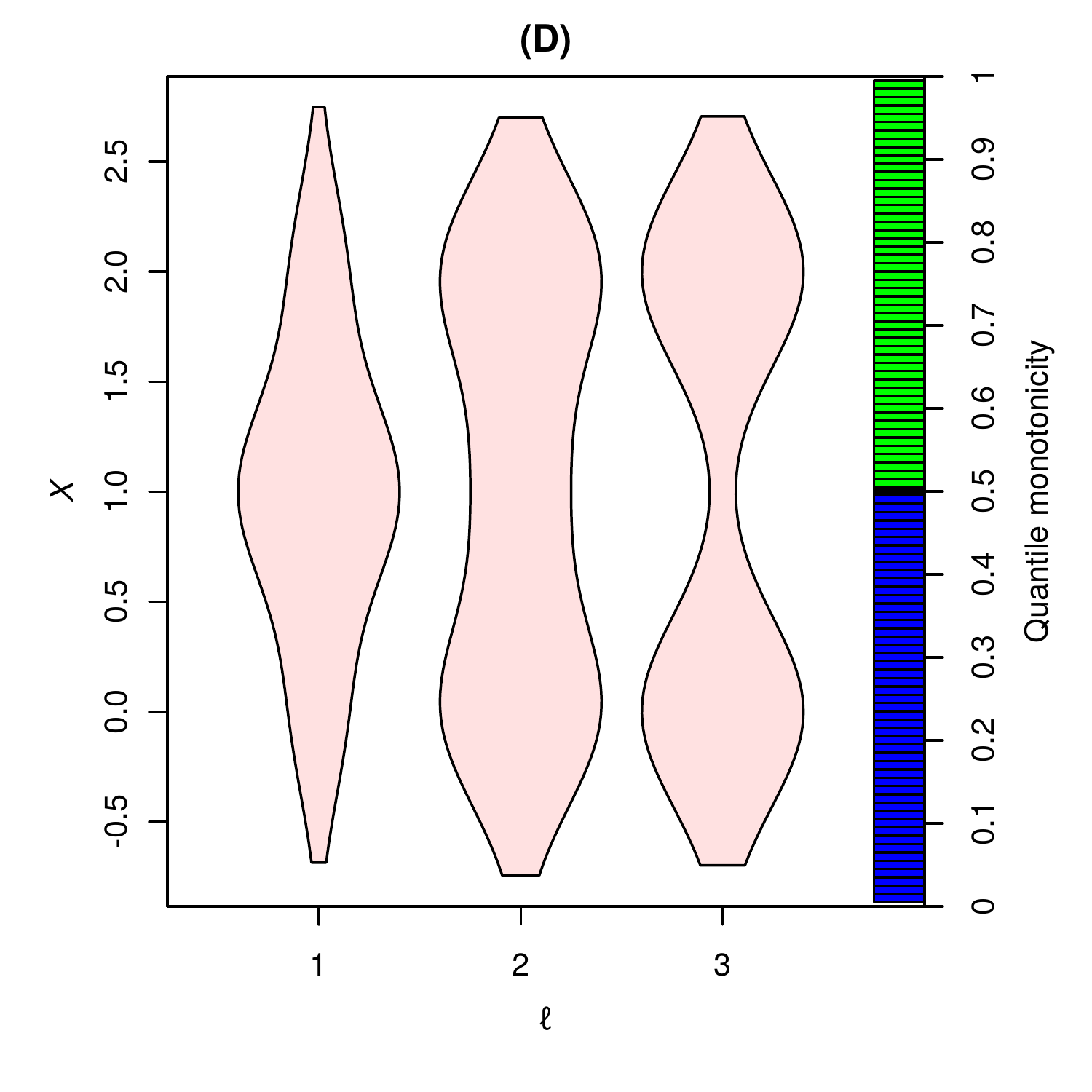} \\
\vspace*{-2.1mm}
\includegraphics[width=0.33 \textwidth]{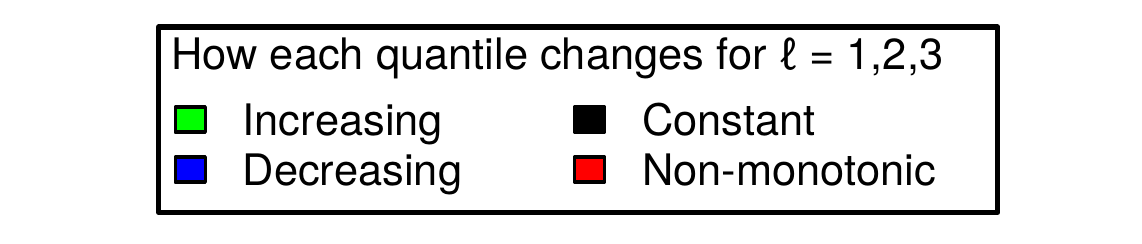}
\caption{Violin plots depicting four different sequences of distributions which follow a trend.  The $p$th rectangle in the color bar on the righthand side indicates the monotonicity of the $p$th quantile over the sequence of distributions (for $p = 0.01, 0.02, \dots, 0.99$).}
\label{trendgoodex}
\end{figure}

Our formal definition of a trend applies to distributions which evolve in a consistent fashion, ensuring that the temporal-forces that drive the transformation from $P_{1}$ to $P_{L}$ do so without reversing their effects or leading to wildly different distributions at intermediate $\ell$ values.  While the second condition of our definition technically subsumes the first, Condition 1 contains our key idea and is therefore separated from Condition 2, a subtler additional assumption that does not significantly alter results in practice.  Note that the trend definition employed in this paper is intended for relatively short sequences and does not include cyclic/seasonal patterns studied in time-series modeling.

\begin{lem} If distributions $P_1, \dots, P_L$ follow a trend, then  \vspace*{-3mm}
\begin{equation*} d_{L_1}(P_i, P_j) = \sum_{\ell = i+1}^j d_{L_1}(P_{\ell-1}, P_\ell)  \ \ \ \text{ for all } \ i < j \in \{1,\dots,L\}
\end{equation*}
\label{trendl1}
\end{lem}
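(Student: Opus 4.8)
The plan is to reduce the claimed additivity to a pointwise statement about monotonic sequences of real numbers, exploiting the fact that the $L_1$ Wasserstein distance separates cleanly across quantile levels. First I would write both sides using the definition in \eqref{wasserstein} with $q=1$: the left side is $d_{L_1}(P_i, P_j) = \int_0^1 |F_i^{-1}(p) - F_j^{-1}(p)| \, \mathrm{d}p$, while each summand on the right is $d_{L_1}(P_{\ell-1}, P_\ell) = \int_0^1 |F_{\ell-1}^{-1}(p) - F_\ell^{-1}(p)| \, \mathrm{d}p$. Since the sum over $\ell$ is finite, linearity of the integral lets me pull it inside, so the right side becomes $\int_0^1 \sum_{\ell=i+1}^j |F_{\ell-1}^{-1}(p) - F_\ell^{-1}(p)| \, \mathrm{d}p$.

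At this point it suffices to prove that the two integrands agree for each fixed $p \in (0,1)$, i.e.
\[
|F_i^{-1}(p) - F_j^{-1}(p)| = \sum_{\ell=i+1}^j |F_{\ell-1}^{-1}(p) - F_\ell^{-1}(p)|.
\]
This is precisely the telescoping identity $|a_i - a_j| = \sum_{\ell=i+1}^j |a_{\ell-1} - a_\ell|$ applied to the real sequence $a_\ell := F_\ell^{-1}(p)$, and it holds whenever that sequence is monotonic: if the $a_\ell$ are nondecreasing (resp.\ nonincreasing) on the index block $\{i, \dots, j\}$, then every increment $a_\ell - a_{\ell-1}$ carries a common sign, so the absolute increments accumulate without cancellation to the total change $|a_j - a_i|$.

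The monotonicity required is exactly Condition 1 of Definition \ref{trenddef}, which guarantees that $[F_1^{-1}(p), \dots, F_L^{-1}(p)]$ is monotonic for every $p$; restricting to the consecutive sub-block $i, \dots, j$ preserves monotonicity, so the pointwise identity holds for all $p$, and integrating it back over $(0,1)$ recovers the lemma. I do not anticipate a genuine obstacle here: the essential observation is simply that the $L_1$ metric decouples over quantile levels, converting a statement about distributions into one about scalar monotonic sequences. The only points that warrant a word of care are justifying the interchange of the (finite) sum with the integral — which needs nothing beyond linearity — and noting that the argument invokes only Condition 1, so that the partition structure of Condition 2 plays no role in this lemma.
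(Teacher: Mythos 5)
Your proposal is correct and follows essentially the same route as the paper's proof: both express $d_{L_1}$ via quantile functions, use the monotonicity of $[F_i^{-1}(p),\dots,F_j^{-1}(p)]$ from Condition 1 of Definition \ref{trenddef} to obtain the pointwise telescoping identity without cancellation, and integrate over $p$. Your write-up merely makes explicit the interchange of finite sum and integral and the observation that Condition 2 is not needed, both of which the paper leaves implicit.
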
 \vspace*{-10mm}
Measuring how much the distributions are perturbed between each pair of levels via the $L_1$ Wasserstein metric, Lemma \ref{trendl1} shows the trend criterion as an instance of Occam's razor, where the underlying effects of interest are assumed to transform the distribution sequence in the simplest possible manner (recall that the Wasserstein distance is interpreted as the minimal work required for a given transformation).  If one views the underlying effects of interest as a literal force acting in the space of distributions, Lemma \ref{trendl1} implies that this force points the same direction for every $\ell$ (i.e.\ $P_1,\dots,P_L$ lie along a line in the  $L_1$ Wasserstein metric space of distributions).  A trend is more flexible than a linear restriction in the standard sense, because the magnitude of the force (how far along the line the distributions move) can vary over $\ell$.  Thus, we have formally extended the colloquial definition of a trend (``a general direction in which something is developing or changing'') to probability distributions.

To further conceptualize the trend idea, one can view quantiles as different segments of a population whose values are distributed according to $\Pr(X \mid \ell)$ (e.g.\ for wealth-distributions, it has become popular to highlight the ``one percent'').  From this perspective, it is reasonable to assume that while the force of sequential progression may have different effects on the groups of individuals corresponding to different segments of the population, its effects on a single segment should be consistent over the sequence.   If some segment's values initially change in one way at lower levels of $\ell$ and subsequently revert in the opposite direction over larger $\ell$ (i.e.\ this quantile is non-monotone), it is natural to conclude there are actually multiple different progression-related forces affecting this homogeneous group of individuals.  It is therefore natural to assume a trend if we only wish to measure the effects of a single primary underlying force.  Often in settings such as scRNA-seq developmental experiments, the researcher has a priori interest in a specific effect (such as how each gene contributes to a specific stage of the developmental process).  Therefore, data are collected over a short $\ell$-range such that the primary effects of interest should follow a trend.

The second condition in the trend definition specifies that adjacent quantiles must move in the same direction over $\ell$ except at most a single $p^*$.  This restricts the number of population-segments which can increase over $\ell$ when a nearby segment of the population is decreasing.  Intuitively, Condition 2 forces us to borrow strength across adjacent quantiles when estimating effects that follow a trend.  The main effect of the additional restriction imposed by this condition prevents a trend from completely capturing extremely-segmented effects (such as the example depicted in Figure \ref{trendbadex}C).   However, applications involving such complex phenomena are uncommon (it is difficult to imagine a setting where the primary effects-of-interest push more than two adjacent segments of a population in different directions), and such nuanced changes can be reasonably attributed to spurious nuisance variation.  We note that a trend can still  roughly approximate the major overall effects even when the actual distribution-evolution violates Condition 2 (as seen in Figure \ref{trendbadex}C).  In practice, the results of our method are not significantly affected by this second restriction, but it provides nice theoretical properties ensuring our estimation procedure (presented in \S\ref{sec:fitting}) efficiently finds a globally optimal solution, as well as additional robustness against spurious quantile-variation in the data (possibly due to estimation-error given limited samples per batch).  

Figure \ref{trendgoodex} depicts simple examples of trending distribution-sequences.  In each example, it is visually intuitive that the evolution of the distributions proceeds in a single consistent fashion.  To highlight the broad spectrum of interesting effects TRENDS can detect, we present three conceptual examples in \S\ref{sec:trendexs} of distribution-sequences that follow a trend, which includes consistent changes in location/scale and the growth/disappearance of modes.  Despite imposing conditions on every quantile, the trend criterion does not require: explicit parameterization of the distributions, specification of a precise functional form of the $\ell$-effects, or reliance on a smooth or constant amount of change between different levels.  This generality is desirable for modeling developmental gene expression and other enigmatic phenomena where stronger assumptions may be untenable.

\begin{figure}[h!] 
\includegraphics[width= 0.49\textwidth]{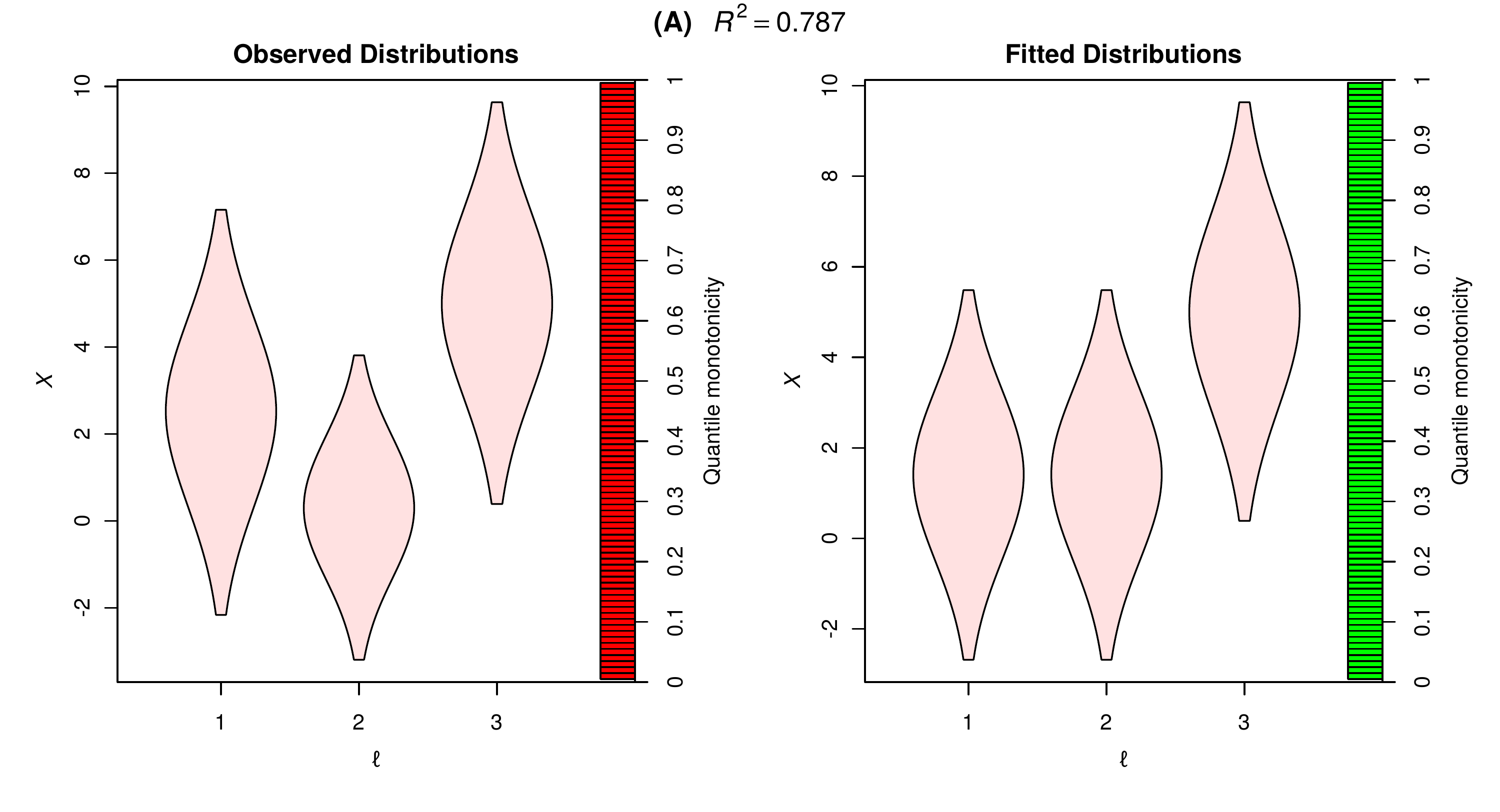} 
\includegraphics[width= 0.49\textwidth]{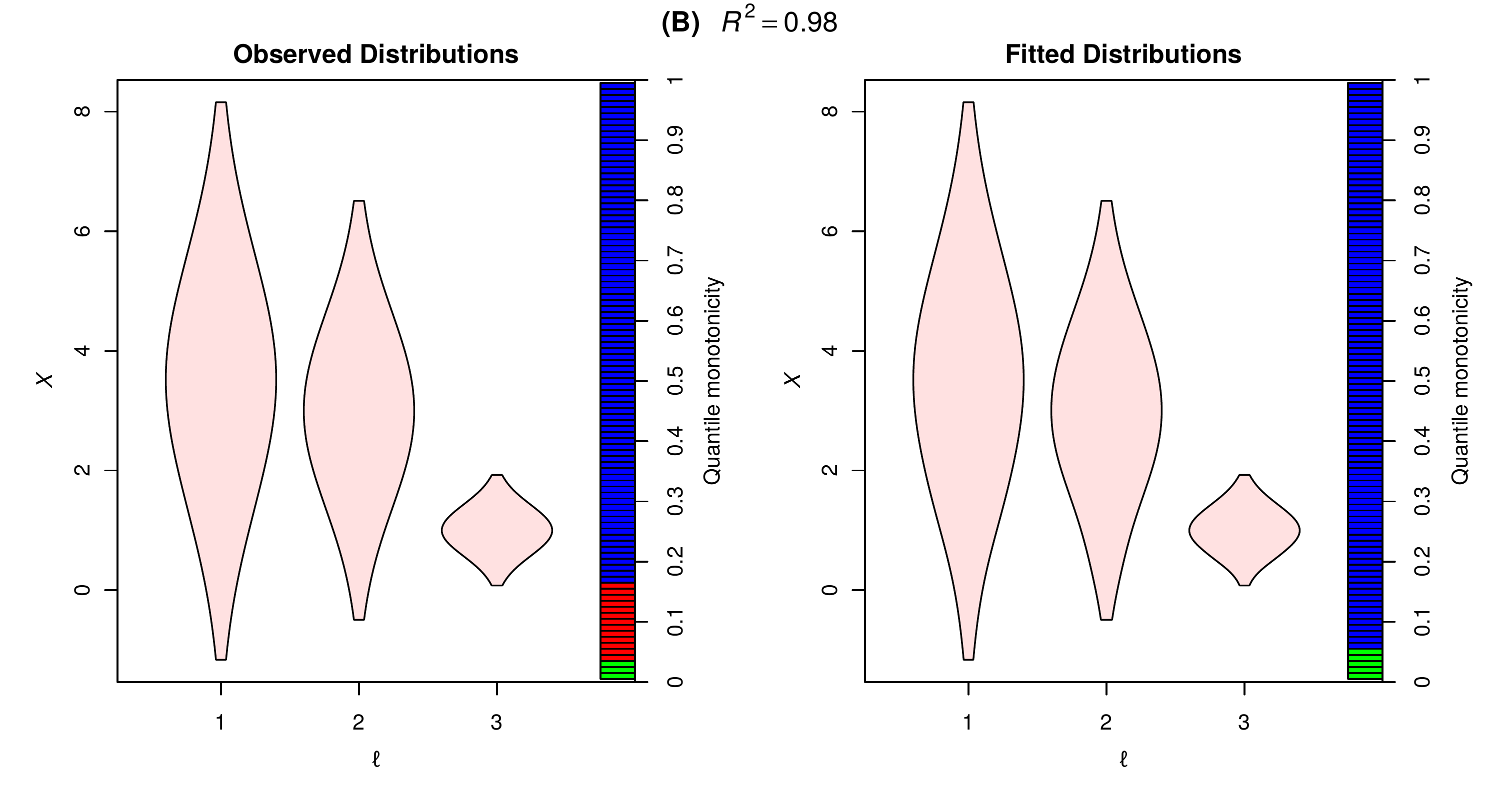}

\includegraphics[width=0.49 \textwidth]{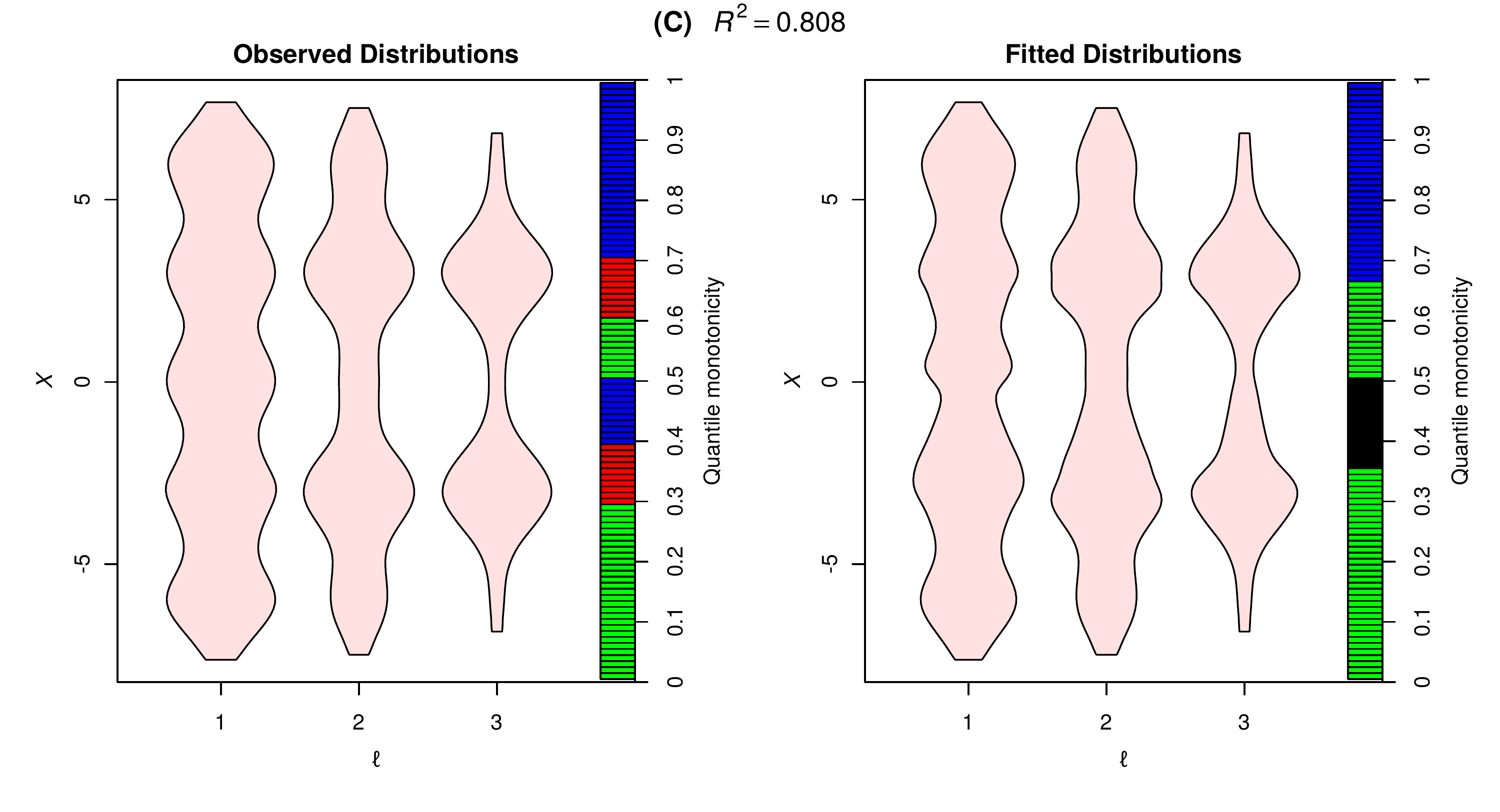}   \hspace*{3mm}
\raisebox{1.2\height}{\includegraphics[width=0.4 \textwidth]{goodexamples_colorbarLegend.pdf}}
\caption{Violin plots depicting sequences of distributions which do \emph{not} follow a trend (Observed Distributions in lefthand panels).  Shown to the right of each example are the corresponding fitted distributions estimated by TRENDS (with the TRENDS $R^2$ value).}
\label{trendbadex} \thisfloatpagestyle{empty}
\end{figure}

 The lefthand panels of Figure \ref{trendbadex} depict three examples of sequences which do not follow a trend for different reasons.  To the right of each example, we show the ``best-fitting'' sequence that does follow a trend (formally defined in (\ref{trends})), each distribution of which corresponds to our estimate of $Q_\ell$ (introduced in \S\ref{sec:methods}).  We reiterate that the $Q_\ell$ are not by themselves of interest, but are merely used to quantify the sequential-progression effects (as will be described in \S\ref{sec:r2}).   Nonetheless, the visual depiction of the trending $Q_\ell$ provides insight regarding what sort of changes a trend can accurately approximate.  Whereas the evolution of the (trending) fitted distributions in Figure \ref{trendbadex}A (on right) can intuitively be attributed to one consistent force, multiple are required to explain the variation in the original non-trending sequence of distributions on the left.  Identifying a single consistent effect responsible for the changes in the left panel of Figure \ref{trendbadex}B is far more plausible, and we note that these distributions in fact are much closer to following a trend (while hard to visually discern, the $0.04^{\text{th}} - 0.16^{\text{th}}$ quantiles of the observed distribution sequence increase between $\ell = 1$ to 2 and decrease slightly from $\ell = 2$ to 3, thus violating a trend).

During specific stages of development, changes in the observed cellular gene-expression distributions generally stem from the emergence/disappearance of different cell subtypes (plus batch and random sampling effects).  Clear subtype distinctions may not exist in early stages where cells remain undifferentiated, and thus not only are the relative proportions of different subtypes changing, but the subtypes themselves may transform as well.  Therefore, developmental genes' underlying expression patterns are likely described by  Examples 2 and 3 (of specific conceptual types of trends) in \S\ref{sec:trendexs}.  The trend criterion fits our a priori knowledge well, while remaining flexible with respect to the precise nature of expression changes.

\section{TRENDS regression model}
\label{sec:trendsregression}

Recall that in our setting, even the underlying batch distributions $P_i$ (from which the observations $X_{i,s}$ are sampled) may be contaminated by latent confounding effects.  We assume the quantile functions of each $P_i$ are generated from the  model below: \vspace*{-4mm}
\begin{align*}
& F^{-1}_i = G^{-1}_{\ell_i} + \mathcal{E}_i \ \ \text{such that } G^{-1}_1,\dots, G^{-1}_L \text{ follow a trend, and the following hold: } \ \ \ \numberthis \label{trendsmodel}  \\
\end{align*} \vspace*{-26mm}
\begin{itemize}  \setlength\itemsep{0em}
\refstepcounter{assumptioncount} \label{assumptionconstrainedei}
\item [(A.\arabic{assumptioncount})] $\mathcal{E}_i : (0,1) \rightarrow \mathbb{R}$ is constrained so that $G^{-1}_{\ell_i}$ and $F_i^{-1}$ are valid quantile functions.
\refstepcounter{assumptioncount} \label{assumptionsubgaussianei}
\item [(A.\arabic{assumptioncount})] For all $p \in (0,1)$ and $i$: $\mathcal{E}_i(p)$ follows a sub-Gaussian($\sigma$) distribution \citep{Honorio2014}, so $ \mathbb{E}[\mathcal{E}_i(p)] = 0$ and $\Pr(|\mathcal{E}_i(p) | > t) \le 2 \exp \left(-\frac{t^2}{2\sigma^2} \right)$ for any $t > 0$.  \refstepcounter{assumptioncount} \label{assumptionindependentei} 
\item [(A.\arabic{assumptioncount})] For all $p \in (0,1)$ and $i \neq j$:  $\mathcal{E}_i(p)$ is statistically independent of $\mathcal{E}_j(p)$.
\end{itemize}

In this model, $G^{-1}_\ell$ is the quantile function of the conditional effect-distribution $Q_\ell$, whose evolution captures the underlying effects of level-progression.  The random noise functions $\mathcal{E}_i : (0,1) \rightarrow \mathbb{R}$ can represent measurement-noise or the effects of other unobserved variables which contaminate a batch.  Note that the form of $\mathcal{E}_i$ is implicitly constrained to ensure all $F^{-1}_i, G^{-1}_{\ell_i}$ are valid quantile functions.  Because $\mathcal{E}_i(p_1)$ and $\mathcal{E}_i(p_2)$ are allowed to be dependent for $p_1 \neq p_2$, the effect of one $\mathcal{E}_i$ may manifest itself in multiple observations $X_{i,s}$, even if these observations are drawn i.i.d.\ from $P_i$ (for example, a batch effect can cause all of the observed values from a batch to be under-measured).   In fact, condition (A.\ref{assumptionconstrainedei}) encourages significant dependence between the noise at different quantiles for the same batch.  The assumption of sub-Gaussian noise is quite general, encompassing cases in which the $\mathcal{E}_i(p)$ are either: Gaussian, bounded, of strictly log-concave density, or any finite mixture of sub-Gaussian variables \citep{Honorio2014}.  Although condition (A.\ref{assumptionindependentei}) stringently ensures all dependence between observations from different $\ell$ arises due to the trend, similar independence assumptions are required in general regression settings where one cannot reasonably a priori specify a functional form of dependence in the noise.  Real batch effects are likely to satisfy (A.\ref{assumptionindependentei}) since they typically have the same chance of affecting any given batch in a certain manner (because the same experimental procedure is repeated across batches, as in the case of the cell-capture and library preparation in scRNA-seq).  Nonetheless, we note that assumption (A.\ref{assumptionsubgaussianei}) can be immediately generalized (with trivial changes to our proofs) in order to allow heteroscedasticity in the batch effects $\mathcal{E}_i$ (endowing each batch with a different $\sigma_i$ sub-Gaussian parameter), but we opt for simplicity in this theoretical exposition.

Model (\ref{trendsmodel}) is a distribution-valued analog of the usual regression model, which assumes  scalars $Y_i = f(X_i) + \epsilon_i$ where $\epsilon_i \sim$ sub-Gaussian($\sigma^2$) and $\epsilon_i$ is independent of $\epsilon_j$ for $i \neq j$.  In (\ref{trendsmodel}), an analogous $f$ maps each ordinal level \{1,\dots, L\} to a quantile function, $f(\ell_i) = G^{-1}_{\ell_i}$,  and the class of functions is restricted to those which follow a trend.  Our assumption of mean-zero $\mathcal{E}_i$ that are independent between batches is a straightforward extension of the scalar error-model to the batch-setting, and ensures that the exogenous noise is unrelated to $\ell$-progression under (\ref{trendsmodel}).  Just as the  $Y_1, \dots, Y_N$ are rarely expected to exactly lie on the curve $f(x)$ in the classic scalar-response model, we do not presume that the observed distributions $\widehat{P}_i$ will exactly follow a trend (even as $n_i \rightarrow \infty \ \forall i$ so that $\widehat{P}_i \rightarrow P_i$).  Rather our model simply encodes the assumption that the effects of level-progression on the distributions should be consistent over different $\ell$ (i.e.\ the effects follow a trend).  

For each $\ell$, TRENDS finds a fitted distribution $\widehat{Q}_\ell$ using the \emph{Wasserstein least-squares} fit which minimizes the following objective:
\begin{equation}
\widehat{Q}_1,\dots,\widehat{Q}_L = \argmin_{Q_1,\dots,Q_L} \bigg\{ \sum_{\ell = 1}^L \ \sum_{i \in I_\ell} d_{L_2}(Q_\ell, \widehat{P}_i)^2 \bigg\} \ \text{ where } Q_1, \dots, Q_L \text{ follow a trend}
\label{trends}
\end{equation}
where $I_\ell$ is the set of batch-indices $i$ such that $\ell_i = \ell$, and we require $N_\ell := | I_\ell | \ge 1$  for all $\ell \in \{1,\dots,L\}$.  Subsequently, one can inspect changes in the $\widehat{Q}_\ell$ which should reflect the  transformations in the underlying $P_\ell$ that are likely caused by increasing $\ell$.  Figure \ref{trendbadex} shows some examples of fitted distributions produced by TRENDS regression.  The objective in (\ref{trends}) bears great similarity to the usual least-squares loss used in scalar regression, the only differences being:  scalars have been replaced by distributions, squared Euclidean distances are now squared Wasserstein distances, and the class of regression functions is defined by a trend rather than linearity/smoothness criteria.  

Expression measurements in scRNA-seq are distorted by significant batch effects, so the $\mathcal{E}_i$ are likely to be large.   In addition to  technical artifacts, \citet{Buettner2015} find biological sources of noise due to processes such as transcriptional bursting and cell-cycle modulation of expression.  Unlike development-driven changes in the underlying expression of a developmental gene, other biological/technical sources of variation are unlikely to follow any sort of trend.
TRENDS thus provides a tool for modeling full distributions, while remaining robust to the undesirable variation rampant in these applications by leveraging independence of the noise between different batches of simultaneously captured and sequenced cells.

\section{Measuring fit, effect size, and statistical significance}
\label{sec:r2}
Analogous to the coefficient of determination used in classic regression, we define the Wasserstein $R^2$ to measure how much of the variation in the observed distributions $\widehat{P}_1, \dots, \widehat{P}_N$ is captured by the TRENDS model's fitted distributions $\widehat{Q}_{1}, \dots, \widehat{Q}_{L}$:
\begin{equation}
R^2 := 1-  \left( \frac{1}{N}  \sum_{i=1}^N d_{L_2}(\widehat{Q}_{\ell_i}, \widehat{P}_{i})^2 \right) {\Bigg{/}} \left( \frac{1}{N}  \sum_{i=1}^N d_{L_2}( \widehat{P}_{i}, \overline{\mathbf{F}}^{-1} )^2  \right) \ \ \in [0, 1]
\label{rsquared}
\end{equation} 
Here, squared distances between scalars in the classic $R^2$ are replaced by squared Wasserstein distances between distributions, and the quantile function $ \overline{\mathbf{F}}^{-1} = \frac{1}{N} \sum_{i=1}^N \widehat{F}^{-1}_{i}$ is the \emph{Wasserstein mean} of all observed distributions.  By Lemma \ref{frechet}, the numerator and denominator in (\ref{rsquared}) are respectively analogous to the residuals and the overall variance from usual scalar regression models.  
 
In classic linear regression, the regression line slope is interpreted as the expected change in the response resulting from a one-unit increase in the covariate.  While TRENDS operates on unit-less covariates, we can instead measure the overall \emph{expected Wasserstein-change} under model (\ref{trendsmodel}) in the $\widehat{P}_i$ over the full ordinal progression $\ell = 1,\dots, L$ using:
\begin{equation}
\Delta := \frac{1}{L} \cdot d_{L_1}(\widehat{Q}_1, \widehat{Q}_L) 
\label{delta}
\end{equation}
The $L_1$ Wasserstein distance is a natural choice, since by Lemma \ref{trendl1}, it measures the aggregate difference over each pair of adjacent $\ell$ levels (just as the difference between the largest and smallest fitted-values in linear regression may be decomposed in terms of covariate units to obtain the regression-line slope).  Thus, $\Delta$ measures the raw magnitude of the inferred trend-effect (depends on the scale of $X$), while $R^2$ quantifies how well the trend-effect explains the variation in the observed distributions (independently of scaling).  Note that if the TRENDS model is fit to the distributions from the example in Figure \ref{trendbadex}B, the TRENDS-inferred effect of sequential-progression is nearly as large as the overall variation in this sequence, which agrees with our visual intuition that the observed distributions  already evolve in a fairly consistent fashion.

Finally, we introduce a test to assess statistical significance of the trend-effect.  We compare the null hypothesis $H_0 : \ Q_1 = Q_2 = \dots = Q_L$  against the alternative that the $Q_i$ are not all equal and follow a trend.  To obtain a $p$-value, we employ permutation testing on the $\ell_i$-labels of our observed distributions $\widehat{P}_i$ with test-statistic $R^2$ \citep{Good1994}.  More specifically, the null distribution is determined by repeatedly executing the following steps: (i) randomly shuffle the $\ell_i$ so that each $\widehat{P}_i$ is paired with a random $\ell_i^{\text{perm}} \in \{1,\dots, L\}$  value, (ii) fit the TRENDS model to the pairs $\{ (\ell_i^{\text{perm}}, \widehat{P}_i) \}_{i=1}^N$ to produce $\widehat{Q}_1^{\text{perm}} , \dots , \widehat{Q}_L^{\text{perm}}$, (iii) use these estimated distributions to compute $R^2_{\text{perm}}$ using (\ref{rsquared}).  Due to the quantile-noise functions $\mathcal{E}_i(\cdot)$ assumed in our model (\ref{trendsmodel}), $H_0$ allows variation in our sampling distributions $P_i$ which stems from  non-$\ell$-trending forces.  Thus the TRENDS test attempts to distinguish whether the effects transforming the $P_i$ follow a trend or not, but does not presume the $P_i$ will look identical under the null hypothesis.   By measuring how much further the $\widehat{P}_i$ lie from one distribution vs.\ a sequence of trending distributions in Wasserstein-space, we note that our $R^2$ resembles a likelihood-ratio-like test statistic between maximum-likelihood-like estimates $\overline{\mathbf{F}}^{-1}$ and $\widehat{Q}_\ell$ (where we operate under the Wasserstein distance rather than Kullback-Leibler which underlies the maximum likelihood framework).  

As we do not parametrically treat the distributions, we find permutation testing more suitable than relying on  asymptotic approximations.  Unfortunately, $N$ and $L$ may be  small, undesirably limiting the number of possible label-permutations.   In \S\ref{sec:testing}, we overcome the granularity problem that arises in such settings by developing a more intricate permutation procedure akin to the smoothed bootstrap of \citet{Silverman1987}.  

To determine whether our model is reasonable when working with real data, it is best to rely on prior domain knowledge regarding whether or not the effects of primary interest should follow a trend.  When this fact remains uncertain, then (as in the case of classical regression) the question is not properly  answered using just our Wasserstein $R^2$ values (which we caution tend to be much larger than the familiar $R^2$ values from linear regression, due to the heightened flexibility of our TRENDS model).  \S\ref{sec:checking} demonstrates a simple method for model checking based on plotting empirically-estimated residual functions $\widehat{\mathcal{E}}_i$ against the sequence-level $\ell$.  Similar plots of scalar residuals are the most common diagnostic employed in standard regression analysis.  While this model-checking procedure is able to clearly delineate simulated deviations from our assumptions, it shows little indication that the TRENDS assumptions are inappropriate for the real scRNA-seq data from major known developmentally-relevant genes.  Our simulation in \S\ref{sec:checking} also empirically demonstrates that despite its restrictive assumptions, the TRENDS model can provide superior estimates of severely-misspecified effects than the initial empirical distributions.

\section{Fitting the TRENDS model}
\label{sec:fitting}

We propose the trend-fitting (TF) algorithm which finds distributions satisfying 
\begin{equation}
\widehat{Q}_1,\dots,\widehat{Q}_L = \arg \min_{Q_1,\dots,Q_L} \bigg\{ \sum_{\ell = 1}^L \ \sum_{i \in I_\ell} w_i \cdot d_{L_2}(Q_\ell, \widehat{P}_i)^2 \bigg\} \ \text{ where } Q_1, \dots, Q_L \text{ follow a trend}
\label{weightedtrends}
\end{equation}
If $\widehat{P}_i$ (the empirical per-batch distributions) are estimated from widely varying sample sizes $n_i$ for different batches $i$,  then it is preferable to replace the objective in (\ref{trends}) with the weighted sum in (\ref{weightedtrends}).  Given weights $w_i$ chosen based on  $n_i$ and $N_\ell$, TRENDS can better model the variation in the empirical distributions that are likely more accurate due to larger sample size.  As $n_i$ and $N_\ell$ are fairly homogeneous in scRNA-seq experiments, we use uniform weights here (but provide an algorithm for the general formulation).  To fit TRENDS to data $\{(\ell_i, \widehat{P}_{\ell_i}, w_i)\}_{i=1}^N$ via our procedure, the user must first specify: 
\vspace*{-1.5mm}
\begin{itemize} \setlength\itemsep{0em}
\item Numerical quadrature points $0 < p_1 < p_2 < \dots < p_{P-1} < 1$ for  evaluating  the Wasserstein distance integral in (\ref{wasserstein}), i.e.\ which $P-1$ quantiles to use for each batch 
\item a quantile estimator $\widehat{F}^{-1}(p)$ for empirical CDF $\widehat{F}$
\end{itemize}
\vspace*{-1.5mm}
Given these two specifications, the TF procedure solves a numerical-approximation of the constrained distribution-valued optimization problem in (\ref{weightedtrends}).
Defining $p_0 := 2p_1 - p_2$ and $p_P := 2 p_{P-1} - p_{P-2}$, we employ the following midpoint-approximation of the integral
\begin{align}
\min_{G_1^{-1},\dots,G_L^{-1}} \bigg\{ \sum_{\ell = 1}^L  \sum_{i \in I_\ell} w_i  \sum_{k=1}^{P-1}  \left( \widehat{F}^{-1}_i (p_k) - G^{-1}_\ell(p_k) \right)^2 \left[ \frac{p_{k+1} - p_{k-1}}{2} \right] \bigg\} \nonumber \\
\text{ where } G_1, \dots, G_L \ \text{must follow a trend} \numberthis \label{numericaltrends}
\end{align}

While this problem is unspecified between the $p_{k}$th and $p_{k+1}$th quantiles, all we require to numerically compute Wasserstein distances (and hence $R^2$ or $\Delta$) is the values of the quantile functions at $p_1, \dots, p_{P-1}$, which are uniquely determined by (\ref{numericaltrends}).  Although our algorithm operates on a discrete set of quantiles like techniques for quantile regression   \citep{Bondell2010}, this is only for practical numerical reasons; the goal of our TRENDS framework is to measure effects across an entire distribution.  Throughout this work, we use $P - 1$ uniformly spaced quantiles between $\frac{1}{P}$ and $\frac{P-1}{P}$ (with $P = 100$) to comprehensively capture the full distributions while ensuring computational efficiency.  In settings with limited  data per batch, one might alternatively select fewer quadrature points (quantiles), avoiding tail regions of the distributions for increased stability (our results were robust to the precise number of quadrature points employed).

Since no unbiased minimum-variance $ \forall p \in (0,1)$ quantile estimator is known, we simply use the default setting in  \emph{R}'s \textsf{quantile} function, which provides the best approximation of the mode (Type 7 of \citet{Hyndman1996}).  Other quantile estimators perform similarly in our experiments, and \citet{keen2010graphics} have found little practical difference between estimation procedures for sample sizes $\ge 30$.  Here, we assume the $n_i$ cells sampled in the $i$th batch are i.i.d.\ samples (reasonable for cell-capture techniques).  If this assumption is untenable in another domain, then the quantile-estimation should be accordingly adjusted (cf.\ \citeauthor{Heidelberger1984} \citeyear{Heidelberger1984}). 

\begin{figure}[h!]
{\setstretch{1.0} \footnotesize
\noindent\rule[0.5ex]{\linewidth}{1pt} 
\textbf{Basic PAVA Algorithm: }  $ \min_{z_\ell} \ \sum_{\ell =1}^L (y_\ell - z_\ell)^2$ \ \ s.t.  $z_1 \le \dots \le z_L$ \\
\noindent\rule[0.5ex]{\linewidth}{1pt}
\textbf{  Input:} \hspace*{5.2mm} A sequence of real numbers $y_1, \dots, y_L$ \\
\textbf{Output:}   \hspace*{2mm} The minimizing sequence $\widehat{y}_1,\dots, \widehat{y}_L$ which is nondecreasing.
\begin{enumerate} \setlength\itemsep{0em}
\item Start with the first level $\ell=1$ and set the fitted value $\widehat{y}_1 = y_1$  
\item While the next $y_\ell \ge \widehat{y}_{\ell-1}$, set $\widehat{y}_\ell = y_\ell$ and increment $\ell$
\item If the next $\ell$ violates the nondecreasing condition, i.e.\ $y_\ell < \widehat{y}_{\ell-1}$, then \emph{backaverage} to restore monotonicity: find the smallest integer $k$ such that replacing $\hat{y}_\ell, \dots, \hat{y}_{\ell-k}$ by their average restores the monotonicity of the sequence $\hat{y}_1,\dots, \hat{y}_\ell$.  Repeat Steps 2 and 3 until $\ell = L$. \end{enumerate}
\vspace*{-3mm}
\noindent\rule[0.5ex]{\linewidth}{1pt} 
\vspace*{-3mm}} \end{figure} 

Our procedure uses the Pool-Adjacent-Violators-Algorithm (PAVA), which given an input sequence $y_1,\dots, y_L \in \mathbb{R}$, finds the least-squares-fitting nondecreasing sequence in only $O(L)$ runtime \citep{DeLeeuw1977}.  The basic PAVA procedure is extended to weighted observations by performing weighted backaveraging in Step 3. When multiple $(\ell_i, y_i)$ pairs are observed with identical covariate-levels, i.e.\ $\exists \ell$ s.t.\ $N_\ell := \left| I_\ell \right| > 1$ where  $ I_\ell := \{i : \ell_i = \ell \}$, we adopt the simple \emph{tertiary} approach for handling predictor-ties \citep{DeLeeuw1977}.  Here, one defines $ \displaystyle \bar{y}_\ell $ as the (weighted) average of the $\{y_i : i \in I_\ell \}$ and for each level $\ell$ all $y_i :  i \in I_\ell$ are simply replaced with their mean-value $\bar{y}_\ell$.  Subsequently, PAVA is applied with non-uniform weights to $\{(\ell, \bar{y}_\ell)\}_{\ell = 1}^L$ where the $\ell$th point receives weight $N_\ell$ (or weight $\sum_{i \in I_\ell} w_i$ if the original points are assigned non-uniform weights $w_1,\dots, w_N$).  By substituting ``nonincreasing''  in place of  ``nondecreasing''  in Steps 2 and 3, the basic PAVA method can be trivially modified to find the least-squares \emph{nonincreasing} sequence.  From here on, we use PAVA$((y_1,w_1), \dots, (y_N, w_N) ; \delta)$ to refer to a more general version of basic PAVA, which incorporates observation-weights $w_i$ (for multiple $y$ values at a single $\ell$), and a user-specified monotonicity condition $\delta \in \{\text{``nonincreasing'', ``nondecreasing''}\}$ that determines which monotonic best-fitting sequence to find.  

\begin{figure}[h!]
{\setstretch{1.0} \footnotesize
\noindent\rule[0.5ex]{\linewidth}{1pt}
\textbf{Trend-Fitting Algorithm: }  Numerically solves (\ref{weightedtrends}) by optimizing (\ref{numericaltrends})  \\
\noindent\rule[0.5ex]{\linewidth}{1pt}
\textbf{Input 1:} \hspace*{2.7mm}  Empirical distributions and associated levels (and optional weights) $\{(\ell_i, \widehat{F}_i , w_i) \}_{i = 1}^N$ \\
\textbf{Input 2:}   \hspace*{2mm}  A grid of quantiles to work with $0< p_1 < \dots < p_{P-1}< 1$ 
\\
\textbf{Output:}  \hspace*{2.5mm}  The estimated quantiles of each $Q_\ell$ \  $\{ \widehat{G}^{-1}_\ell (p_k) : k = 1,\dots, P-1 \} $ for  $\ell \in \{1,\dots, L\}$  \\
 \hspace*{19.8mm}  from which these underlying trending distributions can be reconstructed.
\begin{enumerate} \setlength\itemsep{0em}
\item  \hspace*{2mm}  $\widehat{F}_i^{-1}(p_k) :=  \textbf{quantile}(\widehat{F}_i, p_k) \ $ for each $i \in \{1,\dots, N\}, k \in \{1,\dots, P-1\}$
\item \hspace*{2mm} $ \displaystyle w^*_\ell := \sum_{i \in I_\ell} w_i \ $ for each  $\ell \in \{1,\dots, L\}$  
\item  \hspace*{2mm} $\displaystyle x_\ell[k] := \frac{1}{w^*_\ell} \sum_{i \in I_\ell} w_i \hspace*{0.4mm} \widehat{F}_i^{-1}(p_k)   \ $ for each $\ell \in \{1,\dots, L\}, k \in \{1,\dots, P-1\}$
\item \hspace*{2mm}  \textbf{for } $p^* = 0, p_1, p_2, \dots, p_{P-1}$:
\item \hspace*{12mm} $\delta[k] :=$ ``nondecreasing'' if $p_k > p^*$; otherwise $\delta[k] :=$ ``nonincreasing'' 
\item \hspace*{12mm} $y_1, \dots, y_L $ := \textbf{AlternatingProjections}$ \bigg{(}x_1 , \dots ,x_L $ ; $\delta$ ; $\{{w^*}\}_{\ell=1}^L , \{p_k\}_{k=1}^{P-1} \bigg{)}$
\item  \hspace*{12mm} $W[\delta] $ := the value of  (\ref{numericaltrends}) evaluated with $G^{-1}_\ell(p_k) = y_\ell[k] \ \ \  \ \forall \ell, k$
\item \hspace*{12mm} Redefine $\delta[k] :=$ ``nonincreasing'' if $p_k > p^*$; otherwise $\delta[k] :=$ ``nondecreasing''  \\ \hspace*{12mm} and repeat Steps 6 and 7 with the new $\delta$
\item  \hspace*{2mm} Identify $\displaystyle \min_{ \delta} \ W[\delta]$ and return $\widehat{G}^{-1}_\ell(p_k) = y^*_\ell[k]  \ \ \ \forall \ell, k$ \ \ where $y^*$ was produced at the  \\
 \hspace*{2mm} Step 6 or 8 corresponding to $\delta^* := \arg\max W[\delta]$.
\end{enumerate}
\vspace*{-3mm}
\noindent\rule[0.5ex]{\linewidth}{1pt}
} \end{figure}

\begin{figure}[h!]
{\setstretch{1.0} \footnotesize
\noindent\rule[0.5ex]{\linewidth}{1pt}
\textbf{AlternatingProjections Algorithm: }  Finds the Wasserstein-least-squares sequence of vectors which represent valid quantile-functions and a trend whose monotonicity is specified by $\delta$.  \\
\noindent\rule[0.5ex]{\linewidth}{1pt}
\textbf{Input 1:} \hspace*{2mm} Initial sequence of vectors $x^{(0)}_1, \dots, x^{(0)}_L$
 \\
\textbf{Input 2:}   \hspace*{2mm}  Vector $\delta$ whose indices specify directions constraining the quantile-changes over $\ell$. \\
\textbf{Input 3:}  \hspace*{2mm} Weights $w_\ell^* \in \mathbb{R}$ and quantiles to work with $0 < p_1 < \dots < p_{P-1} < 1$ 
\\
\textbf{Output:}  \hspace*{2.5mm}  Sequence of vectors $y^{(t)}_1, \dots, y^{(t)}_L$ where $\forall \ell, k: \ \ y^{(t)}_\ell[k] \le y^{(t)}_\ell[k+1]$ and the sequence \\
 \hspace*{20mm} $y^{(t)}_1[k], \dots, y^{(t)}_L[k]$ is monotone nonincreasing/nondecreasing as specified by $\delta[k]$,  \\
 \hspace*{20mm} provided that    $x^{(0)}_\ell[k] \le x^{(0)}_\ell[k+1] \ \ \ $ for each $\ell, k$ 
\begin{enumerate}  \setlength\itemsep{0em}
\item \hspace*{2mm} $r^{(0)}_\ell[k] := 0$ ,  $ \ s^{(0)}_\ell[k] := 0 \ \ $ for each $\ell \in \{1,\dots, L\}, k \in \{1,\dots, P-1\}$
\item \hspace*{2mm} \textbf{for } $t = 0, 1, 2, \dots$ until convergence:
\item \hspace*{12mm}  $y^{(t)}_1 [k], \dots, y^{(t)}_L [k]  := \textbf{PAVA}\left( \left( x^{(t)}_1[k] +  r^{(t)}_1 [k], w^*_1 \right), \dots, \left(x^{(t)}_L [k] +  r^{(t)}_L [k] ,  w^*_L \right) ; \delta[k] \right)$ \\
\hspace*{13mm}   for each $ k \in \{1,\dots, P-1\}$.  PAVA computes either the least-squares nondecreasing \\
\hspace*{13mm} or nonincreasing  weighted fit, depending on $\delta[k]$. 
\item \hspace*{12mm} $r^{(t+1)}_\ell [k] := x^{(t)}_\ell [k] +  r^{(t)}_\ell [k] - y^{(t)}_\ell [k]  \ \ \ \text{ for each } \ell , k$ 
\item \hspace*{12mm}  $\forall \ell \in \{1,\dots, L \} : \ \hspace*{15mm} x^{(t+1)}_\ell [1], \dots, x^{(t+1)}_\ell [P-1] := \\
\hspace*{1mm} \textbf{PAVA}\left( \left(y^{(t)}_\ell [1] + s^{(t)}_\ell[1] , \frac{p_2 - p_0}{2} \right),  \dots, \left( y^{(t)}_\ell [P-1] + s^{(t)}_\ell[P-1] , \frac{p_P - p_{P-2}}{2} \right) ; \text{``nondecreasing''}  \right)$
\item \hspace*{12mm} $s^{(t+1)}_\ell [k] := y^{(t)}_\ell [k] +  s^{(t)}_\ell [k] - x^{(t+1)}_\ell [k]  \ \ \ \text{ for each } \ell , k$ 
\end{enumerate}
\vspace*{-3mm}
\noindent\rule[0.5ex]{\linewidth}{1pt} \vspace*{-4mm}
} \end{figure} 

\begin{thm}  
The Trend-Fitting algorithm produces valid quantile-functions $\widehat{G}^{-1}_1,\dots, \widehat{G}^{-1}_L$ which solve the numerical version of the TRENDS objective given in (\ref{numericaltrends}).
\label{tfoptimal}
\end{thm}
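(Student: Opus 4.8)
The plan is to recognize the discrete objective (\ref{numericaltrends}) as a single weighted least-squares projection onto a finite union of convex polytopes, and to show that the \textbf{AlternatingProjections} subroutine is Dykstra's cyclic projection scheme that solves each piece exactly. First I would reduce the objective. For fixed $\ell$ and $k$, completing the square in $G^{-1}_\ell(p_k)$ gives $\sum_{i\in I_\ell} w_i (\widehat F^{-1}_i(p_k)-G^{-1}_\ell(p_k))^2 = w^*_\ell\,(x_\ell[k]-G^{-1}_\ell(p_k))^2 + c_{\ell,k}$, where $w^*_\ell=\sum_{i\in I_\ell} w_i$, $x_\ell[k]$ is the weighted average formed in Steps 2--3 of \textbf{Trend-Fitting}, and $c_{\ell,k}$ is independent of $G$. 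Hence, up to an additive constant, (\ref{numericaltrends}) equals $\|G^{-1}-x\|^2_W$ in the weighted Hilbert space with inner product $\langle u,v\rangle_W=\sum_{\ell,k} w^*_\ell\,\tfrac{p_{k+1}-p_{k-1}}{2}\,u_\ell[k]\,v_\ell[k]$, so the task becomes projecting the array $x$ onto the feasible set $\mathcal F$ of trend-following valid quantile arrays. By Lemma~\ref{frechet} the starting array $x$ is itself a valid quantile array ($x_\ell[k]\le x_\ell[k+1]$), which is exactly the subroutine's stated precondition.

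Next I would decompose $\mathcal F$. Writing $C_2=\{y:\ y_\ell[k]\le y_\ell[k+1]\ \forall \ell,k\}$ for the valid-quantile constraint and, for each monotonicity pattern $\delta$, $C_1(\delta)=\{y:\ y_1[k],\dots,y_L[k]\text{ is }\delta[k]\text{-monotone}\ \forall k\}$ for the trend constraint at a fixed breakpoint, Condition~2 of Definition~\ref{trenddef} forces a single sign-change along the quantile axis; these are precisely the $2P$ patterns enumerated by the outer loop (Steps 4--8), so $\mathcal F=\bigcup_\delta\big(C_1(\delta)\cap C_2\big)$. Each $C_1(\delta)$ and $C_2$ is a closed, convex, nonempty polyhedron (any constant-in-$\ell$ valid quantile array lies in every $C_1(\delta)\cap C_2$), and projecting onto a finite union reduces to projecting onto each piece and keeping the closest result, which is what Step~9 does by retaining the $\delta$ of smallest objective $W[\delta]$.

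For a fixed $\delta$ I would then verify that \textbf{AlternatingProjections} is Dykstra's algorithm for $C_1(\delta)$ and $C_2$ under $\langle\cdot,\cdot\rangle_W$. Projection onto $C_1(\delta)$ separates across $k$: for fixed $k$ the factor $\tfrac{p_{k+1}-p_{k-1}}{2}$ is constant in $\ell$ and cancels in the $\arg\min$, leaving the $w^*_\ell$-weighted isotonic regression solved exactly by PAVA in Step~3; symmetrically, projection onto $C_2$ separates across $\ell$, the $w^*_\ell$ cancels, and Step~5 is PAVA with quadrature weights $\tfrac{p_{k+1}-p_{k-1}}{2}$. The increment updates in Steps~4 and~6 are exactly the Dykstra correction vectors $r,s$. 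Since both sets are closed and convex, the Boyle--Dykstra theorem guarantees the iterates converge to $\Pi_{C_1(\delta)\cap C_2}(x)$, the exact minimizer of (\ref{numericaltrends}) restricted to that piece; the limit lies in $C_1(\delta)\cap C_2$ and is therefore a valid quantile array following a trend. Combining over the enumerated $\delta$ yields the global minimizer over $\mathcal F$, proving the claim.

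The main obstacle is the Dykstra step: plain alternating projections (POCS) would only reach \emph{some} point of $C_1(\delta)\cap C_2$, not the $\langle\cdot,\cdot\rangle_W$-closest one, so the crux is confirming that $r,s$ are updated exactly as Dykstra prescribes and that both PAVA calls are genuine projections in the \emph{single} common metric $\langle\cdot,\cdot\rangle_W$, even though one call carries weights $w^*_\ell$ and the other $\tfrac{p_{k+1}-p_{k-1}}{2}$. Establishing this metric-consistency, together with the closedness and convexity of $C_1(\delta)$ and $C_2$, is what lets Boyle--Dykstra apply and delivers the exact projection rather than a mere feasible point.
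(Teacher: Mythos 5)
Your proposal is correct and follows essentially the same route as the paper's own proof: reduce the objective via the weighted means $x_\ell$ to a projection in the weighted inner product combining $w^*_\ell$ and the quadrature widths, enumerate the $2P$ sign-patterns $\delta$ permitted by Condition 2 of Definition \ref{trenddef} and keep the best, and for each fixed $\delta$ identify \textbf{AlternatingProjections} as Dykstra's method between the valid-quantile set and the $\delta$-monotone set (with the PAVA calls shown to be exact projections because the complementary weight factor is constant in the relevant index), then invoke the Boyle--Dykstra convergence theorem — exactly the content of the paper's Lemmas \ref{xyproj} and \ref{yxproj}. The only nit is that the nondecreasingness of the starting array $x$ follows from the elementary fact that a weighted average of nondecreasing sequences is nondecreasing, not from Lemma \ref{frechet}, but this does not affect the argument.
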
 

Fundamentally, our TF algorithm utilizes Dykstra's method of alternating projections \citep{Boyle1986} to project between the set of  $L$-length sequences of vectors which are monotone in each index over $\ell$ and the set of $L$-length sequences of vectors where each vector represents a valid quantile function.  Despite the iterative nature of alternating projections, we find that the TF algorithm converges extremely quickly in practice.  This procedure has overall computational complexity $O\large(TLP^2 + NP\large)$, which is efficient when $T$ (the total number of projections performed) is small, since both $P$ and $L$ are limited.  The proof of Theorem \ref{tfoptimal} provides much intuition on the TF algorithm (all proofs are relegated to \S\ref{sec:proofs}).  Essentially, once we fix a $\delta$ configuration (specifying which quantiles are decreasing over $\ell$ and which are increasing), our feasible set becomes the intersection of two convex sets between which projection is easy via PAVA.  Furthermore, the second statement in our trend definition limits the number of possible $\delta$ configurations, so we simply solve one convex subproblem for each possible $\delta$ to find the global solution.

\section{Theoretical results}
\label{sec:theory}

Under the model given in  (\ref{trendsmodel}), we establish some results regarding the quality of the $\hat{Q}_1, \dots, \hat{Q}_L$ estimates produced by the TF algorithm.  To develop pragmatic theory, we use finite-sample bounds defined in terms of quantities encountered in practice rather than the true Wasserstein distance (\ref{wasserstein}), which relies on an integral that must be numerically approximated.  Thus, in this section, $d_W(\cdot, \cdot)$ is used to refer to the midpoint-approximation of the $L_2$ Wasserstein integral illustrated in (\ref{numericaltrends}).  In addition to the conditions of model (\ref{trendsmodel}), we make the following simplifications throughout for ease of exposition:
\vspace*{-1.5mm}
\begin{itemize} \setlength\itemsep{0em}
\refstepcounter{assumptioncount} \label{assumption:uniformbatchnumber} 
\item [(A.\arabic{assumptioncount})] The number of batches at each level is the same, i.e. $N_\ell := N_1 =\dots = N_L \ge 1$
\refstepcounter{assumptioncount} \label{assumption:uniformbatchsize}
\item [(A.\arabic{assumptioncount})] The same number of samples are drawn per batch, i.e. $n := n_i$ for all $1 \le i \le N$
\refstepcounter{assumptioncount} \label{assumption:uniformquantiles}
\item [(A.\arabic{assumptioncount})] For $k =  1, \dots, P- 1$:  the $(k / P)$th quantiles of each distribution are considered
\refstepcounter{assumptioncount} \label{assumption:uniformweights}
\item [(A.\arabic{assumptioncount})] Uniform weights are employed, i.e.\ in (\ref{weightedtrends}): $w_i = 1$ for all $i$ 
\end{itemize} 

\begin{thm} \label{consistency} Under model (\ref{trendsmodel}) and additional conditions (A.\ref{assumption:uniformbatchnumber})-(A.\ref{assumption:uniformweights}), suppose the TF algorithm is applied directly to the true quantiles of $P_1,\dots, P_N$.  Then, given any $\epsilon > 0$, the resulting estimates satisfy: \ \ $ d_W(\widehat{G}^{-1}_\ell, G^{-1}_\ell) < \epsilon \ $ for each $\ell \in \{1,\dots, L\}$ \\ \vspace*{-11mm}
\begin{flalign}
&\text{with probability greater than: } \hspace*{6mm} 1 - 2 P L \exp\left(-\frac{ \epsilon^2 N_\ell}{8 \sigma^2 L}  \right) &&
\end{flalign} \vspace*{-12mm}
\end{thm}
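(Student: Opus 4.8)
The plan is to recognize the Trend-Fitting output as the exact minimizer of a weighted Euclidean-projection problem and then control the estimation error by a triangle-inequality argument that sidesteps the non-convexity of the trend set. First I would reduce the objective. Because the algorithm is fed the exact quantile functions $F^{-1}_i = G^{-1}_{\ell_i} + \mathcal{E}_i$ under uniform weights with the tertiary tie-handling, for each fixed level $\ell$ and quantile index $k$ the inner sum over $i \in I_\ell$ expands around its mean as $\sum_{i \in I_\ell}(F^{-1}_i(p_k) - Q_\ell(p_k))^2 = N_\ell(\bar{x}_\ell(p_k) - Q_\ell(p_k))^2 + c_{\ell,k}$, where $\bar{x}_\ell(p_k) := \frac{1}{N_\ell}\sum_{i \in I_\ell} F^{-1}_i(p_k) = G^{-1}_\ell(p_k) + \bar{\mathcal{E}}_\ell(p_k)$, $\bar{\mathcal{E}}_\ell(p_k) := \frac{1}{N_\ell}\sum_{i \in I_\ell}\mathcal{E}_i(p_k)$, and $c_{\ell,k}$ is independent of the $Q_\ell$. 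Invoking (A.\ref{assumption:uniformbatchnumber}) so $N_\ell \equiv N_1$, the midpoint-weighted objective (\ref{numericaltrends}) equals $N_1\, D(\bar{x}, Q)^2$ plus a constant, where I write $D(\mathbf{a},\mathbf{b})^2 := \sum_{\ell=1}^L d_W(a_\ell, b_\ell)^2$ (a genuine weighted-Euclidean metric on sequences of quantile vectors). By Theorem \ref{tfoptimal} the returned $\widehat{G}^{-1}$ is the global minimizer, so $\widehat{G}^{-1} = \argmin_{Q \text{ a trend}} D(\bar{x}, Q)$.

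The main obstacle is that the set of trends is \emph{not} convex: it is a union of the convex cones obtained by fixing the split point $p^*$ (equivalently the direction vector $\delta$), so I cannot bound $d_W(\widehat{G}^{-1}_\ell, G^{-1}_\ell)$ via projection non-expansiveness directly. I would finesse this as follows. The true $G^{-1}$ is itself a trend, hence feasible, so optimality of $\widehat{G}^{-1}$ gives $D(\bar{x}, \widehat{G}^{-1}) \le D(\bar{x}, G^{-1})$; combining with the triangle inequality for $D$ yields $D(\widehat{G}^{-1}, G^{-1}) \le D(\widehat{G}^{-1}, \bar{x}) + D(\bar{x}, G^{-1}) \le 2\,D(\bar{x}, G^{-1})$. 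Since $\bar{x}_\ell(p_k) - G^{-1}_\ell(p_k) = \bar{\mathcal{E}}_\ell(p_k)$, this reduces the whole problem to the averaged noise: $\max_\ell d_W(\widehat{G}^{-1}_\ell, G^{-1}_\ell)^2 \le D(\widehat{G}^{-1}, G^{-1})^2 \le 4\sum_{\ell}\sum_{k} c_k\, \bar{\mathcal{E}}_\ell(p_k)^2$, where $c_k = 1/P$ are the midpoint quadrature weights under (A.\ref{assumption:uniformquantiles}).

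Finally I would apply sub-Gaussian concentration. By (A.\ref{assumptionsubgaussianei})--(A.\ref{assumptionindependentei}), each $\bar{\mathcal{E}}_\ell(p_k)$ is an average of $N_\ell$ independent mean-zero sub-Gaussian($\sigma$) variables and is therefore sub-Gaussian($\sigma/\sqrt{N_\ell}$), giving $\Pr\left(|\bar{\mathcal{E}}_\ell(p_k)| \ge \frac{\epsilon}{2\sqrt{L}}\right) \le 2\exp\left(-\frac{\epsilon^2 N_\ell}{8\sigma^2 L}\right)$. On the complementary event $\{|\bar{\mathcal{E}}_\ell(p_k)| < \epsilon/(2\sqrt{L}) \text{ for all } \ell, k\}$, the bound of the previous paragraph gives $\max_\ell d_W(\widehat{G}^{-1}_\ell, G^{-1}_\ell)^2 < 4 \cdot \frac{\epsilon^2}{4L}\sum_{\ell,k} c_k \le 4\cdot\frac{\epsilon^2}{4L}\cdot L = \epsilon^2$ (using $\sum_{\ell,k} c_k = L(P-1)/P \le L$), hence $d_W(\widehat{G}^{-1}_\ell, G^{-1}_\ell) < \epsilon$ for every $\ell$. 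A union bound over the $(P-1)L < PL$ index pairs then produces the stated failure probability $2PL\exp(-\epsilon^2 N_\ell/(8\sigma^2 L))$; the constant $8$ arises precisely as the $2^2$ from the triangle-inequality step multiplied by the $2$ in the sub-Gaussian exponent. The only delicate points are justifying the reduction to the level-averaged data $\bar{x}$ and recognizing that the triangle-inequality detour — rather than trying to show the algorithm recovers the correct direction configuration $\delta$ — is what keeps the argument clean despite the non-convexity of the trend set.
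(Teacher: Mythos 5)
Your proposal is correct and takes essentially the same route as the paper's proof: your bias--variance expansion reducing the objective to the level-averaged quantile functions $\bar{x}_\ell$ is exactly what the paper accomplishes by invoking Lemma \ref{frechet}, and your feasibility-of-$G^{-1}$ plus triangle-inequality step (producing the factor $2$, hence the $8$ in the exponent) followed by sub-Gaussian concentration of $\bar{\mathcal{E}}_\ell(p_k)$ and a union bound over the at most $PL$ index pairs mirrors the paper's argument term for term, with identical constants. Your explicit observation that this detour avoids any need to recover the direction configuration $\delta$ despite the non-convexity of the trend set is also precisely how the paper's proof works, even if it leaves that point implicit.
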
 

Thus, Theorem \ref{consistency} implies that our estimators are consistent with asymptotic rate $O_P(1/\sqrt{N_\ell})$ if we directly observe the true per-batch quantiles $P_1,\dots, P_N$ (which are contaminated by $\mathcal{E}_i$ under our model).  By using the union-bound, our proof does not require any independence assumptions for the noise introduced at different  quantiles of the same batch.  Because direct quantile-observation is unlikely in practice, we now examine the performance of TRENDS  when these quantiles are instead estimated using $n$ samples from each $P_i$.  Here, we additionally assume:
\vspace{-1.5mm}
\begin{itemize} \setlength\itemsep{0em}
\refstepcounter{assumptioncount} \label{assumptionsamplequantiles} 
\item [(A.\arabic{assumptioncount})] For $i =1,\dots, N:$ quantiles are estimated from $n$ i.i.d.\ samples $X_{1,i}, \dots, X_{n,i} \sim P_i$ 
\refstepcounter{assumptioncount} \label{assumptionnonzerodensity} 
\item [(A.\arabic{assumptioncount})] There is nonzero density at each of the quantiles we estimate, i.e.\ CDF $F_i$ is strictly increasing around each $F^{-1}_i(k/P)$ for $k = 1,\dots, P-1$.
\refstepcounter{assumptioncount} \label{assumptionecdf} 
\item [(A.\arabic{assumptioncount})]  The simple quantile estimator defined below is used for each $k/P, k = 1,\dots, P-1$ \vspace*{-5mm}
$$\widehat{F}^{-1}_i(p) := \inf\{x : \widehat{F}_i(x) \ge p\} \vspace*{-5mm}
$$ where $\widehat{F}_i(\cdot)$ is the empirical CDF computed from $X_{1,i}, \dots, X_{n,i} \sim P_i$.
\end{itemize}

\begin{thm} Under the assumptions of Theorem \ref{consistency} and (A.\ref{assumptionsamplequantiles})-(A.\ref{assumptionecdf}), suppose the TF algorithm is applied to estimated quantiles $\widehat{F}^{-1}_i(k/P)$ for $i = 1,\dots,N,k = 1,\dots,P-1$.  Then, given any $\epsilon > 0$, the resulting estimates satisfy: \ \ $d_W(\widehat{G}^{-1}_\ell, G^{-1}_\ell) < \epsilon$ \ for each $\ell \in \{1,\dots, L\}$ 
 with probability greater than:
\begin{equation} 1 -  2 PL \left[ \exp \left( \frac{- \epsilon^2 N_\ell }{ 32 \sigma^2 L } \right) + N_\ell \exp \left( -2 n \cdot R \left( \frac{\epsilon}{4 \sqrt{L} } 
\right)^2  \right) \right] \label{nosamplebound} \vspace*{-3mm}
\end{equation}
where for $\gamma > 0$:  \vspace*{-5mm}
\begin{align*} 
R(\gamma) & := \displaystyle \min_{i,k} \{ R(\gamma, i, k / P) : i = 1,\dots, N, k = 1,\dots, P - 1 \}      \\
R(\gamma, i, p) & := \min \left\{  F_i \left(F^{-1}_i(p) + \gamma \right) - p \ , \ p - F_i \left(F^{-1}_i(p) - \gamma \right)  \right\} \numberthis \label{rgamma}
\end{align*}
\label{badfinitesample} \vspace*{-13mm}
\end{thm}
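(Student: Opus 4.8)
\emph{Proof proposal.} The plan is to isolate the two independent sources of error in the input to the Trend-Fitting algorithm --- the batch noise $\mathcal{E}_i$ already analyzed in Theorem \ref{consistency}, and the new finite-sample quantile-estimation error --- and to control each with half of the budget $\epsilon$. Let $\widehat{G}^{-1}_\ell$ denote the TF output on the \emph{estimated} quantiles $\widehat{F}^{-1}_i(k/P)$, and let $\widetilde{G}^{-1}_\ell$ denote the (hypothetical) TF output on the \emph{true} quantiles $F^{-1}_i(k/P)$. By the triangle inequality for $d_W$, $d_W(\widehat{G}^{-1}_\ell, G^{-1}_\ell) \le d_W(\widehat{G}^{-1}_\ell, \widetilde{G}^{-1}_\ell) + d_W(\widetilde{G}^{-1}_\ell, G^{-1}_\ell)$, so it suffices to force each summand below $\epsilon/2$. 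The second summand is governed directly by Theorem \ref{consistency} applied with $\epsilon/2$, which contributes exactly the first bracketed term $2PL\exp(-\epsilon^2 N_\ell/(32\sigma^2 L))$ (note $8\cdot(1/2)^{-2}=32$).

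First I would handle the estimation term $d_W(\widehat{G}^{-1}_\ell, \widetilde{G}^{-1}_\ell)$ using stability of the algorithm. As established in the proof of Theorem \ref{tfoptimal}, for each fixed direction-configuration $\delta$ the TF output is the weighted $L_2$ projection of the per-level Wasserstein means $x_\ell$ onto the convex intersection of the $\ell$-monotone cone and the valid-quantile-function cone, and such a projection is non-expansive. Crucially, the inputs driving $\widehat{G}^{-1}$ and $\widetilde{G}^{-1}$ differ \emph{only} by the averaged estimation error $\frac{1}{N_\ell}\sum_{i\in I_\ell}(\widehat{F}^{-1}_i - F^{-1}_i)$, since the batch noise $\mathcal{E}_i$ enters both identically and cancels. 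Passing from the aggregate (summed-over-$\ell$) non-expansiveness bound to a per-level guarantee introduces a factor $\sqrt{L}$, so that if every individual quantile-estimation error obeys $|\widehat{F}^{-1}_i(k/P) - F^{-1}_i(k/P)| < \gamma := \epsilon/(4\sqrt{L})$, then $d_W(\widehat{G}^{-1}_\ell, \widetilde{G}^{-1}_\ell) < \epsilon/2$ for every $\ell$.

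It then remains to bound the probability that some individual quantile estimate deviates by $\gamma$. With the empirical-CDF estimator of (A.\ref{assumptionecdf}), the event $\widehat{F}^{-1}_i(p) > F^{-1}_i(p) + \gamma$ forces the Bernoulli average $\widehat{F}_i(F^{-1}_i(p)+\gamma) = \frac{1}{n}\sum_{s}\mathbbm{1}\{X_{s,i} \le F^{-1}_i(p)+\gamma\}$ below its mean $F_i(F^{-1}_i(p)+\gamma)$ by at least $R(\gamma,i,p)$, the CDF-mass gap defined in (\ref{rgamma}); the nonzero-density condition (A.\ref{assumptionnonzerodensity}) guarantees $R(\gamma,i,p) > 0$. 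Hoeffding's inequality then yields $\exp(-2nR(\gamma,i,p)^2)$ for this one-sided event, and symmetrically for the lower deviation, giving $2\exp(-2nR(\gamma,i,p)^2)$ per pair $(i,k)$. A union bound over the $N_\ell$ batches at each level, the $P-1$ quantiles, and the $L$ levels, with $R(\gamma) := \min_{i,k}R(\gamma,i,k/P)$, produces the second bracketed term $2PL\,N_\ell\exp(-2nR(\epsilon/(4\sqrt{L}))^2)$. Adding the two failure probabilities gives (\ref{nosamplebound}).

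The main obstacle is the stability step: one must verify that the composite projection computed by the alternating PAVA sweeps is genuinely non-expansive, and carefully track how the aggregate bound converts to the per-level Wasserstein statement (the source of the $\sqrt{L}$ and of the constant $4$). This requires confirming that TF selects the true trend's direction-configuration $\delta^\ast$ (or one no worse under the controlled perturbation) and handling the single boundary quantile at $p^\ast$ where the monotonicity direction switches; it also relies on Theorem \ref{consistency} being stated per level, as used above.
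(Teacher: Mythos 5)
There is a genuine gap at the stability step, and it is exactly the point you yourself flag as ``the main obstacle.'' Your decomposition $d_W(\widehat{G}^{-1}_\ell, G^{-1}_\ell) \le d_W(\widehat{G}^{-1}_\ell, \widetilde{G}^{-1}_\ell) + d_W(\widetilde{G}^{-1}_\ell, G^{-1}_\ell)$ requires the TF output map to be non-expansive in its input, but the feasible set of (\ref{numericaltrends}) is the \emph{union} of the $2P$ convex sets indexed by the direction-configurations $\delta$, and metric projection onto a union of convex sets is not non-expansive: two inputs at distance $\gamma$ lying near the decision boundary between two configurations can project into different cones, with the distance between the projections of the order of the data rather than of $\gamma$. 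Nothing in the hypotheses of Theorem \ref{badfinitesample} rules this out --- the objective values of the best and second-best $\delta$ can be arbitrarily close, so an arbitrarily small quantile-estimation perturbation can flip the configuration selected in Step 9 of the TF algorithm, and your proposed repair (``confirm TF selects the true $\delta^\ast$, or one no worse'') would require a margin/separation assumption the theorem does not make. So the claimed bound $d_W(\widehat{G}^{-1}_\ell, \widetilde{G}^{-1}_\ell) < \epsilon/2$ is unproven, and the non-expansiveness it rests on is false in general for this non-convex constraint set.

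The paper's proof avoids comparing two TF outputs entirely: it never forms $\widetilde{G}^{-1}$. Instead it uses the optimality (``basic'') inequality of the least-squares fit against the estimated data --- since $G^{-1}_1,\dots,G^{-1}_L$ is itself feasible, Theorem \ref{tfoptimal} and Lemma \ref{frechet} give $\sum_{\ell} d_W(\widehat{G}^{-1}_\ell, \overline{\widehat{F}}^{-1}_\ell)^2 \le \sum_{\ell} d_W(G^{-1}_\ell, \overline{\widehat{F}}^{-1}_\ell)^2$, where $\overline{\widehat{F}}^{-1}_\ell$ from (\ref{meanestimatedqf}) is the average of the estimated quantile functions at level $\ell$ --- and then two triangle inequalities plus Cauchy--Schwarz yield $d_W(\widehat{G}^{-1}_\ell, G^{-1}_\ell) \le 2\sqrt{2}\left[ \sum_{\ell} d_W(\widebar{F}^{-1}_\ell, G^{-1}_\ell)^2 + \sum_{\ell} d_W(\overline{\widehat{F}}^{-1}_\ell, \widebar{F}^{-1}_\ell)^2 \right]^{1/2}$ for every $\ell$; this holds no matter which $\delta$ the algorithm selects, which is precisely why no stability property is needed. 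Your two probabilistic ingredients do match the paper's: allocating budget $\epsilon^2/16$ to the batch-noise term in (\ref{etaterm}) reproduces your $2PL\exp(-\epsilon^2 N_\ell/(32\sigma^2 L))$ (your ``apply Theorem \ref{consistency} at $\epsilon/2$'' accounting gives the same constant), and your Hoeffding argument for the empirical quantiles is exactly Lemma \ref{serflinglemma} combined with the union bound of Lemma \ref{simpleboundlemma}, with the same threshold $\epsilon/(4\sqrt{L})$ and the same $2N_\ell PL$ count. So your tail accounting is sound; to complete the proof, replace the non-expansiveness claim with the feasibility-of-$G^{-1}$ comparison above.
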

Theorem \ref{badfinitesample} is our most general result applying to arbitrary distributions $P_i$ that satisfy basic condition (A.\ref{assumptionnonzerodensity}).  However, the resulting probability-bound may not converge toward to 1 if $\ n \cdot R (\frac{\epsilon}{4 \sqrt{L}})^2 < O(\log N_\ell)$, which occurs if few samples are available per batch (because then the $P_i$ are can be very poorly estimated).  Thus, TRENDS is in general only designed for applications with large per-batch sample sizes.  The bounds obtained under the extremely broad setting of Theorem \ref{badfinitesample} may be significantly improved by instead adopting one of the following stronger assumptions:
\vspace*{-1.5mm}
\begin{itemize} \setlength\itemsep{0em}
\refstepcounter{assumptioncount} \label{assumptionbounded} 
\item [(A.\arabic{assumptioncount})] The simple quantile-estimator defined in (A.\ref{assumptionecdf}) is used, and the support of each $P_i$ is bounded and connected with non-neglible density, i.e. $\exists \text{ constants } B, c > 0  \text{ s.t.\ }  \forall i: \\ 
f_i(x) = 0 \ \forall x \notin [-B, B] \ \text{ and } \ f_i(x) \ge c \ \forall x \in [-B, B] $ \ 
($f_i$ is density for CDF $F_i$).
\refstepcounter{assumptioncount} \label{assumptiongoodquantileestimator} 
\item [(A.\arabic{assumptioncount})] The following is known regarding the quantile-estimation procedure:
\vspace*{-2.2mm}
\begin{enumerate} \setlength\itemsep{-0em}
\item The quantiles of each $P_i$ are estimated independently of the others.
\item The quantile-estimates converge at a sub-Gaussian rate for each quantile of interest, i.e.\ there exists $c > 0 \text{ such that for each } k , i \text{ and any } \epsilon > 0$: \vspace*{-3mm}
\begin{align*}
\Pr \left( \left| \widehat{F}^{-1}_i(k/P) - F^{-1}_i(k/P)    \right|  > \epsilon \right) \le 2 \exp(- 2 n c^2 \epsilon^2)
\end{align*}
\end{enumerate} \vspace*{-3mm}
\end{itemize}

\begin{thm} Under  the assumptions of Theorem \ref{consistency}, conditions (A.\ref{assumptionsamplequantiles}), (A.\ref{assumptionnonzerodensity}), and one of either   (A.\ref{assumptionbounded}) or (A.\ref{assumptiongoodquantileestimator}), the bound in (\ref{nosamplebound}) may be sharpened to ensure that for any $\epsilon > 0$:  \vspace*{-4mm}
$$ d_W(\widehat{G}^{-1}_\ell, G^{-1}_\ell) < \epsilon \ \text{ for each } \ell \in \{1,\dots, L\}  \vspace*{-3mm}
$$
 with probability greater than: \vspace*{-3mm}
\begin{equation} 1 - 2 P \left[ L \exp \left( \frac{- \epsilon^2 N_\ell }{ 32 \sigma^2 L } \right) +  \exp \left(   -  \frac{c^2}{8} N_\ell \hspace*{0.6mm}  n \epsilon^2    \right)    \right]
\end{equation}
\label{boundedfinitesample}
\end{thm}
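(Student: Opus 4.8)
The plan is to treat both sufficient conditions, (A.\ref{assumptionbounded}) and (A.\ref{assumptiongoodquantileestimator}), through a single intermediate claim: under either hypothesis the per-batch quantile-estimation error $\eta_i(p) := \widehat{F}^{-1}_i(p) - F^{-1}_i(p)$ obeys a sub-Gaussian tail $\Pr(|\eta_i(k/P)| > \gamma) \le 2\exp(-2nc^2\gamma^2)$ and is independent across batches. Under (A.\ref{assumptiongoodquantileestimator}) this rate is assumed outright. Under (A.\ref{assumptionbounded}) I would derive it from the bound already established inside the proof of Theorem \ref{badfinitesample}: the density lower bound $f_i \ge c$ on the connected support forces $F_i(F^{-1}_i(p)+\gamma)-p \ge c\gamma$ and $p - F_i(F^{-1}_i(p)-\gamma) \ge c\gamma$, so the quantity $R(\gamma)$ defined in (\ref{rgamma}) satisfies $R(\gamma) \ge c\gamma$; substituting this into the Hoeffding-type empirical-CDF bound $2\exp(-2nR(\gamma)^2)$ used for Theorem \ref{badfinitesample} recovers exactly the sub-Gaussian rate with parameter proportional to $1/(c\sqrt{n})$. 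Thus both cases collapse to the same regime, and it suffices to prove the sharpened bound assuming this rate together with batch-independence.

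The second step is the decomposition $d_W(\widehat{G}^{-1}_\ell, G^{-1}_\ell) \le d_W(\widehat{G}^{-1}_\ell, \widetilde{G}^{-1}_\ell) + d_W(\widetilde{G}^{-1}_\ell, G^{-1}_\ell)$, where $\widetilde{G}^{-1}_\ell$ denotes the output of the TF algorithm run on the \emph{true} quantiles $F^{-1}_i(k/P)$. The second summand is controlled directly by Theorem \ref{consistency} invoked at tolerance $\epsilon/2$: it is below $\epsilon/2$ for every $\ell$ with probability at least $1 - 2PL\exp(-\epsilon^2 N_\ell/(32\sigma^2 L))$, which is precisely the first bracketed term of the claimed bound (the factor $32$ arising from the $(\epsilon/2)^2$ substituted into the $8\sigma^2 L$ denominator of Theorem \ref{consistency}). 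Reusing Theorem \ref{consistency} in this way lets me avoid re-running the batch-effect concentration argument and isolates the genuinely new work in the first summand.

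The remaining summand $d_W(\widehat{G}^{-1}_\ell, \widetilde{G}^{-1}_\ell)$ measures how much perturbing the TF inputs by the estimation errors perturbs its output. Because (by Theorem \ref{tfoptimal}) for each fixed sign pattern $\delta$ the feasible set is the intersection of two convex cones and the true trend lies in the convex piece selected by the truth, I would invoke non-expansiveness of the Dykstra projection \citep{Boyle1986}: the output moves by no more than the input perturbation, which after the algorithm's within-level averaging is $\bar{\eta}_\ell(k/P) = N_\ell^{-1}\sum_{i\in I_\ell}\eta_i(k/P)$. By batch-independence this average is sub-Gaussian with variance proxy shrinking like $1/(nN_\ell)$, so $\Pr(|\bar\eta_\ell(k/P)| > \epsilon/4) \le 2\exp(-c^2 N_\ell n\epsilon^2/8)$; since within a single level $\sum_k w_k \approx 1$, controlling each quantile's averaged error at scale $\epsilon/4$ forces $d_W(\widehat{G}^{-1}_\ell, \widetilde{G}^{-1}_\ell) < \epsilon/2$, and a union bound over only the $P$ quantiles yields the second bracketed term $2P\exp(-c^2 N_\ell n \epsilon^2/8)$. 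A final union bound over the two events then completes the argument.

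The main obstacle I anticipate is exactly this last propagation of the averaged estimation error through the projection while retaining the sharpened constants — in particular, obtaining a prefactor of $2P$ rather than the $2PL\,N_\ell$ of Theorem \ref{badfinitesample}. The $N_\ell$ improvement is conceptually clear: it comes from replacing the union bound over individual batch-quantile estimates (used in the fully general Theorem \ref{badfinitesample}) by a single concentration inequality for the within-level average, which is legitimate precisely because the independence clause of (A.\ref{assumptiongoodquantileestimator}) now grants independence across batches. The delicate point is justifying that a per-quantile, per-level control of $\bar\eta_\ell(k/P)$ — rather than an aggregate Wasserstein norm summed over all $L$ levels at once — suffices to bound each $d_W(\widehat{G}^{-1}_\ell, \widetilde{G}^{-1}_\ell)$; it is this per-level accounting, together with $\sum_k w_k \approx 1$, that removes the stray factor of $L$ and lets the union run over quantiles alone. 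I would verify carefully that the coordinatewise non-expansiveness of the PAVA steps in the $\ell$-direction, combined with the monotone $k$-direction projection, legitimately delivers this coordinatewise control at the convergent point of Dykstra's iteration.
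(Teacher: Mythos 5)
Your reduction of both hypotheses to a common sub-Gaussian quantile-error rate, and your use of Theorem \ref{consistency} at tolerance $\epsilon/2$ (which indeed reproduces the first bracketed term with the factor $32$), are sound in outline, but the step controlling $d_W(\widehat{G}^{-1}_\ell, \widetilde{G}^{-1}_\ell)$ by ``non-expansiveness of the Dykstra projection'' is a genuine gap, and it is the load-bearing step. For a \emph{fixed} sign configuration $\delta$, the feasible set $\mathcal{X} \cap \mathcal{Y}$ is convex and the metric projection is non-expansive; but the TF algorithm's output is the projection onto the \emph{union} of the $2P$ convex pieces indexed by $\delta$, with $\delta$ selected by comparing objective values. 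This union is non-convex, and projection onto a union of convex sets is not $1$-Lipschitz --- it is generically discontinuous at inputs where two pieces' objectives tie, so an arbitrarily small perturbation of the inputs can switch the selected $\delta$ and move the output by a non-vanishing amount. Nothing in your argument rules out the data sitting near such a tie, and your fallback (coordinatewise non-expansiveness of the composed PAVA steps carried to the Dykstra limit) faces the same $\delta$-switching obstruction. The paper's proof never compares two TF outputs: by Theorem \ref{tfoptimal}, $\widehat{G}$ minimizes the trending least-squares objective on the estimated data, and since the true $G$ is a feasible (trending) competitor, $\sum_\ell d_W(\widehat{G}^{-1}_\ell, \overline{\widehat{F}}^{-1}_\ell)^2 \le \sum_\ell d_W(G^{-1}_\ell, \overline{\widehat{F}}^{-1}_\ell)^2$; two triangle inequalities and Cauchy--Schwarz then yield the decomposition (\ref{twoterm}) into a batch-effect term and a quantile-estimation term, which is exactly how any stability property of the projection map is sidestepped.

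Two further, smaller problems. First, your derivation of the sub-Gaussian rate from (A.\ref{assumptionbounded}) via $R(\gamma) \ge c\gamma$ is valid only while $F^{-1}_i(p) \pm \gamma$ remains inside $(-B,B)$; for larger $\gamma$ the CDF saturates and $R(\gamma)$ stops growing, so the tail $2\exp(-2nc^2\gamma^2)$ does not follow from the Serfling-type bound alone. The paper's Lemma \ref{boundedlemma} closes this regime by observing that under bounded support the event $\widehat{F}^{-1}_i(k/P) - F^{-1}_i(k/P) > \epsilon$ has probability exactly zero once $\epsilon \ge B - F^{-1}_i(k/P)$ --- an easy fix, but one your plan must include since the rate is needed for all $\gamma > 0$. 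Second, your probability accounting for the estimation term is off: controlling $\bar{\eta}_\ell(k/P)$ separately for each level and each quantile is a union over $L \times P$ events and produces the prefactor $2PL$, not the claimed $2P$; per-level accounting does not remove the factor $L$. The paper obtains the $L$-free prefactor in Lemma \ref{boundedtermlemma} by using a single concentration event per quantile $k$, pooled over all $L N_\ell$ independent batch errors at that quantile, so that the union bound runs over quantiles alone.
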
  \vspace*{-10mm}

In Theorem \ref{boundedfinitesample}, the additional assumption of bounded/connected underlying distributions results in a much better finite sample bound that is exponential in both $n$ and $N_\ell$ (implying asymptotic $O_P(N_\ell^{-1/2} + n^{-1/2})$ convergence).    While this condition and the result of Theorem \ref{badfinitesample} assume use of the simple quantile-estimator from (A.\ref{assumptionecdf}), numerous superior procedures have been developed which can likely improve practical convergence rates \citep{ Zielinski2006}.   Assuming guaranteed bounds for the quantile-estimation error (which may be based on both underlying properties of the $P_i$ as well as the estimation procedure), one can also obtain the same exponential bound.  In fact, condition (A.\ref{assumptionbounded}) is an example of a distribution and quantile-estimator combination which achieves the error required by (A.\ref{assumptiongoodquantileestimator}).   Because the boundedness assumption is undesirably limiting, we also derive a similar result under weaker assumptions: 
\vspace*{-1.5mm}
\begin{itemize} \setlength\itemsep{0em}
\refstepcounter{assumptioncount} \label{assumptionbest} 
\item [(A.\arabic{assumptioncount})] Each $P_i$ has connected support with non-neglible interior density and sub-Gaussian tails, i.e.\ there are constants  $B > b > 0 , a > 0, c > 0  \text{ such that for all } i: $
\vspace*{-6mm}
\begin{align*}
(1) & \hspace*{7mm} F_i \text{ is strictly increasing, } \\
(2) & \hspace*{7mm} f_i(x) \ge c \ \forall x \in [-B, B] \ \text{ where $f_i$ is the density function of CDF $F_i$.}   \\
(3) & \hspace*{7mm} \Pr(X_i > x) \le \exp \left(-a \left[ x - (B - b) \right]^2 \right)  \ \text{ if } x > B \\ 
& \hspace*{7mm} \text{ and } \ \Pr(X_i < x) \le \exp \left(-a \left[ x - (-B + b) \right]^2 \right)  \ \text{ if } x < -B
\end{align*} \vspace*{-15mm}
\refstepcounter{assumptioncount}  \label{assumptionbest2}
\item [(A.\arabic{assumptioncount})] Defining $r :=  \min \left\{ 2c^2 \ , \  \frac{ 2a b^2 -  1}{4PB^2} \right\} $, we have $r > 0$, or equivalently, $2ab^2 > 1$.
\refstepcounter{assumptioncount}  \label{assumptionbest3}
\item [(A.\arabic{assumptioncount})]  We avoid estimating extreme quantiles, i.e.\ $F^{-1}_i (k / P) \in (-B, B) \ \ \forall k \in \{1,\dots, P-1\}$
\end{itemize}

\begin{thm} Under the assumptions of Theorems \ref{consistency} and \ref{badfinitesample} as well as conditions (A.\ref{assumptionbest})-(A.\ref{assumptionbest3}), the previous bound in (\ref{nosamplebound}) may be sharpened to ensure that  for all $\epsilon > 0$:  \vspace*{-3mm}
$$ d_W(\widehat{G}^{-1}_\ell, G^{-1}_\ell) < \epsilon \ \text{ for each } \ell \in \{1,\dots, L\}  \vspace*{-5mm}
$$
 with probability greater than: \vspace*{-3mm}
\begin{equation} 1 - 2 P \left[ L \exp \left( \frac{- \epsilon^2 N_\ell }{ 32 \sigma^2 L } \right) +  \exp \left( -  \frac{r}{16}   N_\ell \hspace*{0.6mm} n \epsilon^2     \right)    \right]
\end{equation}
\label{bestfinitesample}
\end{thm}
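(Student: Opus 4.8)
The plan is to treat Theorem~\ref{bestfinitesample} as a strict sharpening of Theorem~\ref{badfinitesample}, reusing that proof's architecture and replacing only the quantile-estimation-error factor. Recall that the failure event behind (\ref{nosamplebound}) decomposes into a \emph{batch-noise} part (the $\mathcal{E}_i$ contribution, propagated through the TF projections and controlled exactly as in Theorem~\ref{consistency}, producing the term $L\exp(-\epsilon^2 N_\ell/(32\sigma^2 L))$) and a \emph{quantile-estimation} part, handled by a union bound over quantiles and batches applied to the event $|\widehat F_i^{-1}(k/P) - F_i^{-1}(k/P)| > \gamma$ with $\gamma = \epsilon/(4\sqrt{L})$, which Theorem~\ref{badfinitesample} bounds crudely via $2\exp(-2nR(\gamma)^2)$ using (\ref{rgamma}). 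Since Theorem~\ref{boundedfinitesample} already shows that \emph{any} quantile estimator obeying the sub-Gaussian rate of (A.\ref{assumptiongoodquantileestimator}) produces the claimed form, it suffices to prove that, under (A.\ref{assumptionbest})--(A.\ref{assumptionbest3}), the simple empirical estimator of (A.\ref{assumptionecdf}) satisfies
\[ \Pr\!\left(\left|\widehat F_i^{-1}(k/P) - F_i^{-1}(k/P)\right| > \gamma\right) \le 2\exp\!\left(-r\,n\gamma^2\right) \qquad \text{for all } \gamma>0,\ i,\ k, \]
with $r$ as in (A.\ref{assumptionbest2}); the theorem then follows by substituting this rate into the Theorem~\ref{boundedfinitesample} computation, where the effective rate $2c^2$ is replaced by $r$ and the final exponent $c^2/8$ becomes $r/16$.

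To prove this concentration I would argue one-sided (the upper deviation $\widehat F_i^{-1}(k/P) > F_i^{-1}(k/P)+\gamma$; the lower is symmetric via the left tail bound in (A.\ref{assumptionbest})) and split on the size of $\gamma$. Writing $y = F_i^{-1}(k/P)+\gamma$, observe that $\{\widehat F_i^{-1}(k/P) > y\} = \{\widehat F_i(y) < k/P\}$. In the \emph{interior regime}, where $y \le B$, the density lower bound $f_i \ge c$ on $[-B,B]$ gives $F_i(y) - k/P \ge c\gamma$, so Hoeffding's inequality applied to the empirical CDF at $y$ yields $\Pr(\widehat F_i(y) < k/P) \le \exp(-2nc^2\gamma^2) \le \exp(-rn\gamma^2)$ since $r \le 2c^2$. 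This is exactly the argument that establishes the bounded-support case (A.\ref{assumptionbounded}) of Theorem~\ref{boundedfinitesample}, where no other regime can occur.

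The new and delicate part is the \emph{tail regime}, where $\gamma$ is large enough that $y > B$, and it is the step I expect to be the main obstacle. Here the quantity $F_i(y) - k/P$ that drives the Hoeffding exponent saturates at $1 - k/P$ and stops growing with $\gamma$, so the flat $2\exp(-2nR(\gamma)^2)$ bound of Theorem~\ref{badfinitesample} cannot deliver sub-Gaussian decay. The remedy is to bound the exceedance count directly: the number of samples exceeding $y$ is $\mathrm{Binomial}(n,q)$ with $q = \Pr(X_i > y) \le \exp(-a[y-(B-b)]^2)$, and $\{\widehat F_i(y) < k/P\}$ forces this count above $n(1-k/P)$, far above its mean $nq$. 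Applying the relative-entropy (Chernoff) bound for the binomial, rather than Hoeffding, produces an exponent of order $n(1-k/P)\log(1/q)$, which is at least a positive multiple of $n(1-k/P)\,a[y-(B-b)]^2$ and therefore grows like $\gamma^2$. The crux — and the origin of the constants in $r = \min\{2c^2,\,(2ab^2-1)/(4PB^2)\}$ — is showing that this tail exponent dominates $rn\gamma^2$ \emph{uniformly} over admissible quantiles and over all $\gamma$ past the interior threshold. This needs (i) $1-k/P \ge 1/P$, (ii) the fact that a quantile in $(-B,B)$ by (A.\ref{assumptionbest3}) lies within distance $2B$ of $B$, so that $\gamma$ and $y-(B-b)$ differ only by a bounded offset, and (iii) crucially the condition $2ab^2>1$ of (A.\ref{assumptionbest2}), which guarantees that the tail decay at the boundary $y=B$ (where $q \le e^{-ab^2}$) is fast enough to leave a strictly positive net $\gamma^2$-coefficient after the lower-order corrections of the relative-entropy bound are subtracted. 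Matching the interior constant $2c^2$ against this tail constant so that a single $r$ governs both regimes is the bookkeeping that yields the stated probability; once the displayed concentration holds, the remainder is identical to Theorem~\ref{boundedfinitesample}.
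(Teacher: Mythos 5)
Your proposal is correct and follows the paper's own proof essentially step for step: the paper likewise reduces everything to the uniform concentration $\Pr\big(|\widehat F_i^{-1}(k/P)-F_i^{-1}(k/P)|>\gamma\big)\le 2\exp(-nr\gamma^2)$ for the simple estimator, splitting at $\gamma \ge B - F_i^{-1}(k/P)$, handling the interior regime via the Hoeffding/Serfling bound $R(\gamma,i,p)\ge c\gamma$ (as in Lemma \ref{boundedlemma}) and the tail regime via exactly the binomial relative-entropy Chernoff bound you describe (citing Arratia--Gordon), with $1-k/P\ge 1/P$, the bounded offset $(B-F_i^{-1}(k/P))^2\le 4B^2$, and the condition $2ab^2>1$ entering precisely where you predict. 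It then, just as you propose, plugs this sub-Gaussian rate $r$ (in place of $2c^2$) into the Lemma \ref{boundedtermlemma} computation and combines with the batch-noise bound (\ref{twoterm}) to get the $r/16$ exponent, so the approach, decomposition, and constants all match.
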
 \vspace*{-13mm}

Theorem \ref{bestfinitesample} again provides an exponential bound in both $n$ and $N_\ell$ under a realistic setting where the distributions are small tailed with connected support, and the simple quantile estimator of (A.\ref{assumptionecdf}) is applied at non-extreme quantiles.  Note that while we specified properties of the distributions, noise, and quantile estimation in order to develop this theory, our nonparametric significance tests do not rely on these assumptions.

\section{Simulation study}
\label{sec:simulation}

We perform a simulation which realistically reflects various properties of scRNA-seq data, based on assumptions  similar to those explicitly relied upon by the model of \citet{Kharchenko2014}.  Samples are generated from one of the following choices of the underlying trending distribution sequence $Q_1, \dots, Q_L$ with $L=5$ (additional details in \S\ref{sec:supsim}): 
\begin{enumerate} \setlength\itemsep{0em}
\item[(S$_1$)] $Q_\ell \sim $ NB($r_\ell, p_\ell$) with $r_\ell = 5$ and $p_\ell = 0.3, 0.3, 0.4, 0.5, 0.8$ \ for  $\ell = 1,\dots, 5$.
\item[(S$_2$)] $Q_\ell$ is a mixture of NB($r = 5, p = 0.3$) and NB($r = 5, p = 0.7$) components, with the mixing proportion of the latter ranging over $\lambda_\ell = 0.1, 0.4, 0.8, 0.8, 0.8$ for $\ell = 1,\dots, 5$.
\item[(S$_3$)] $Q_\ell \sim $ NB($r = 5, p = 0.5$) \ for $\ell = 1,\dots, 5$.
\end{enumerate}
NB($r, p$) denotes the negative binomial distribution parameterized by $r$ (target number of successful trials) and $p$ (probability of success in each trial).  To capture various types of noise affecting scRNA-seq measurements (e.g.\ dropout, PCR amplification bias, transcriptional bursting), noise for the $i^{\text{th}}$ batch is introduced (independently of the other batches) via the following steps: rather than sampling from $Q_{\ell_i}$, we instead sample from $P_{\ell_i} \sim $ NB($\widetilde{r}_\ell, \widetilde{p}_\ell$), where $\widetilde{r}_\ell = r_\ell + r_{\text{noise}}$ and $\widetilde{p}_\ell = p_\ell + p_{\text{noise}}$.  Here, $p_{\text{noise}}$, $r_{\text{noise}}$  are independently drawn from centered Gaussian distributions with standard deviations $\sigma$, $10 \cdot \sigma$ respectively ($\sigma$ thus controls the degree of noise).  For the mixture-models in S$_2$, we sample from $P_{\ell_i}$ which is also a mixture of negative binomials (with the same mixing proportions as $Q_{\ell_i}$) where the parameters of both mixing components are perturbed by noise variables $r_{\text{noise}}, \ p_{\text{noise}}$.  To the observations sampled from $P_{\ell_i}$, we finally apply a $\log_{10} (x + 1)$ transform (also applied to the scRNA-seq data in \S\ref{sec:application}) before proceeding with our analysis.

We first investigate the convergence of TRENDS estimates under each of the models S$_1$, S$_2$, and S$_3$, varying $n$, $N_\ell$, and the amount of noise independently.  Figure \ref{convergence} shows the Wasserstein error (sum over $\ell$ of the squared Wasserstein distances between the underlying $Q_\ell$ and estimates thereof) of our TRENDS estimates vs.\ the error of the empirical distributions.  The plot demonstrates rapid convergence of the TRENDS estimator (as guaranteed by our theory in \S\ref{sec:theory}) and shows that TRENDS can produce a much better picture of the underlying distributions than the (noisy) observed empirical distributions.  As shown in Figure \ref{convergence}A, this may occur even in the absence of noise, thanks to the additional structure of the trend-assumption exploited by our estimator.   Thus, when the underlying effects follow a trend, our $\Delta$ statistic provides a much more accurate measure of their magnitude than distances between the empirical distributions.  These results indicate that the largest benefit of our TRENDS approach is for small to moderate sized samples.  

\begin{figure}[h!] \centering
\begin{tabularx}{0.75\textwidth}{X X} 
\includegraphics[width=0.4\textwidth]{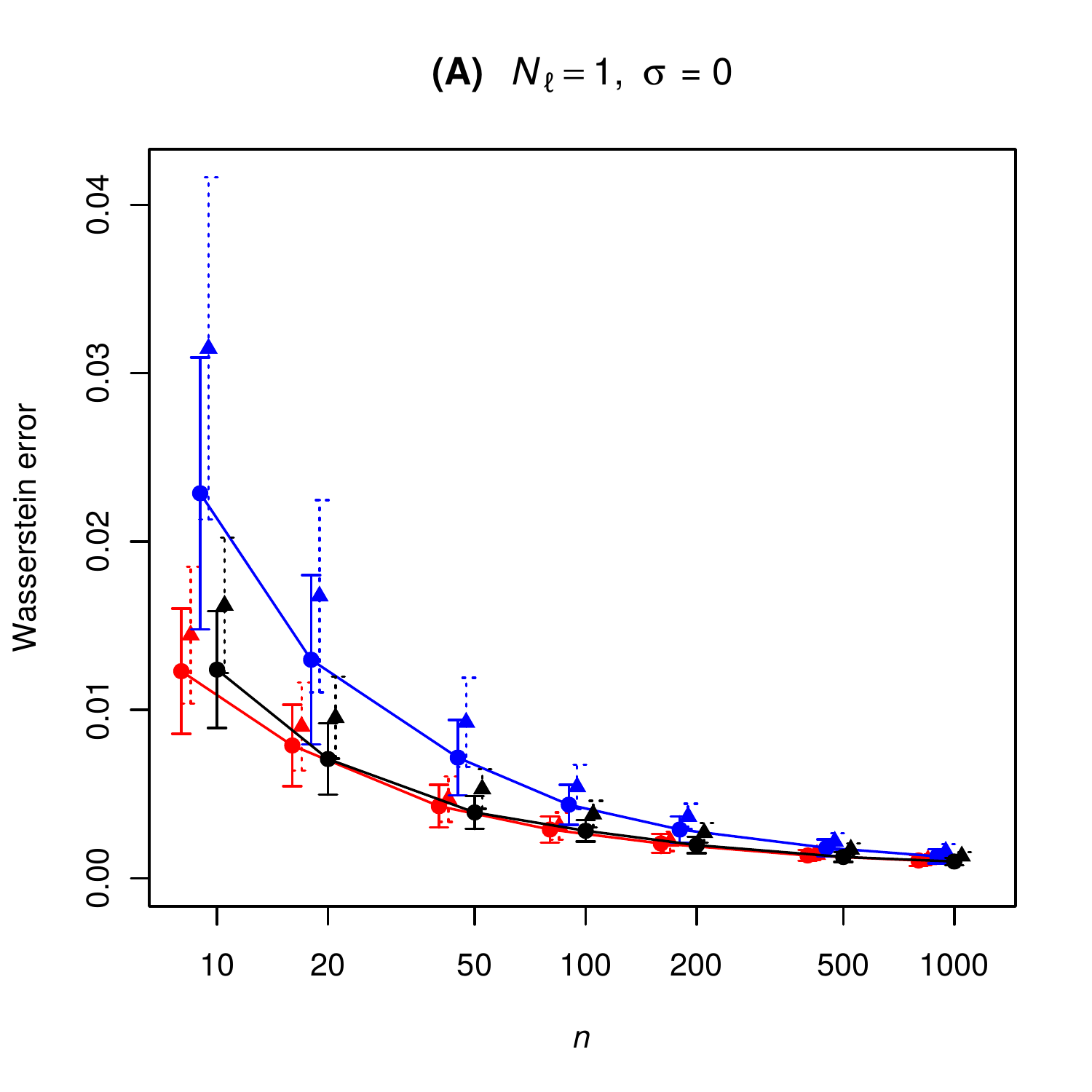} &
\includegraphics[width=0.4\textwidth]{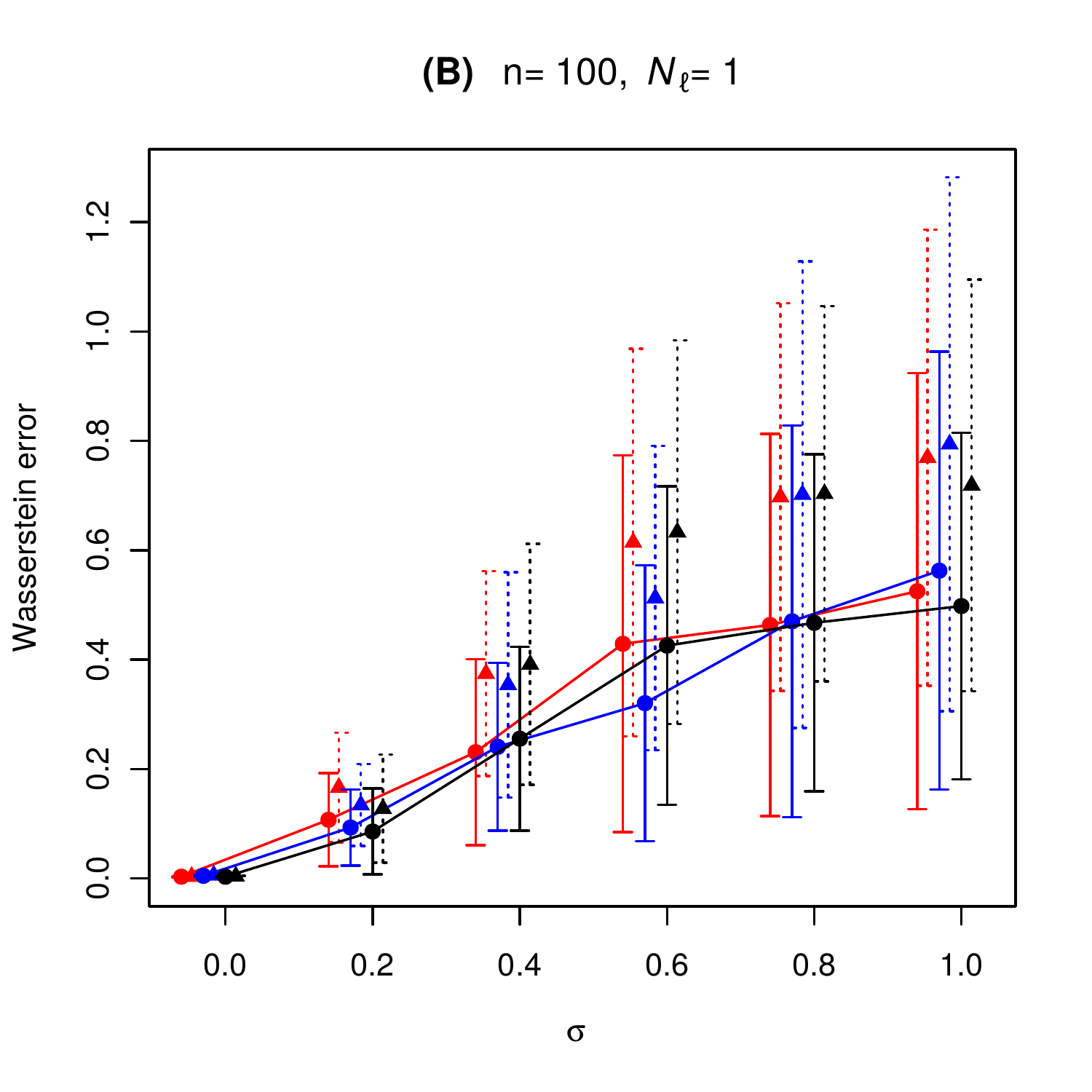}
\\[-11pt]
\includegraphics[width=0.4\textwidth]{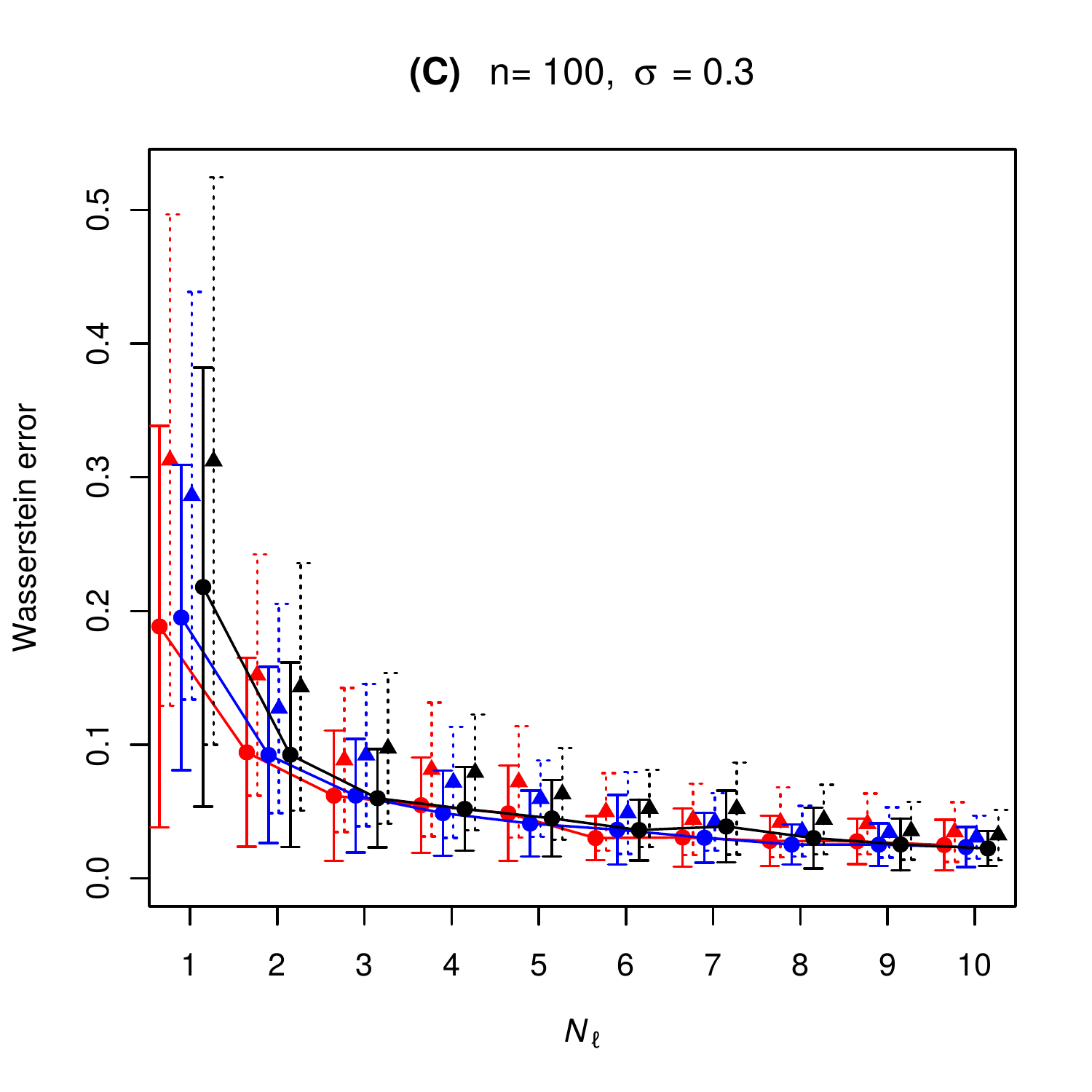} &
\hspace*{5mm} \includegraphics[width=0.35\textwidth]{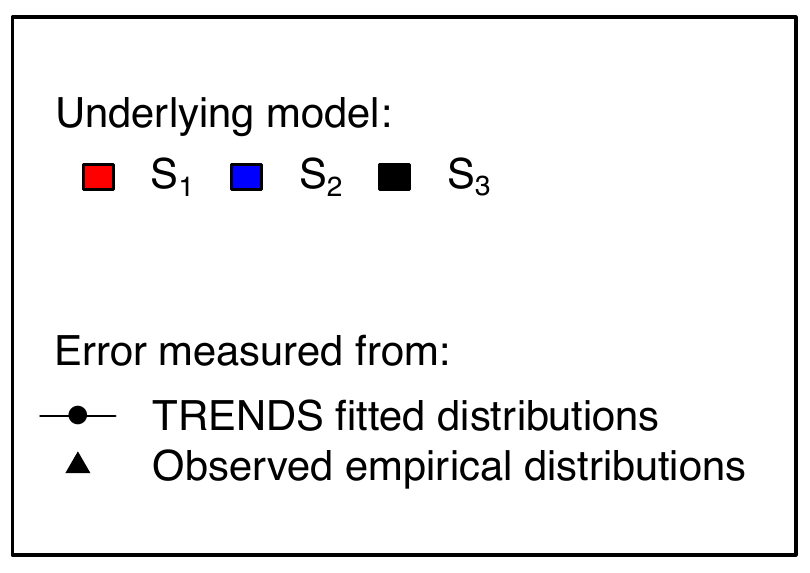}
\end{tabularx} \vspace*{-4mm}
\caption{The Wasserstein error of the TRENDS fitted distributions vs.\ the observed empirical distributions, under models S$_1$ - S$_3$ with various settings of $n$, $\sigma$, and $N_\ell$.  Depicted is the average error (and standard deviation) over 100 repetitions.}
\label{convergence}
\end{figure}

To compare performance, we evaluate TRENDS against alternative methods under our models S$_1$-S$_3$ with  substantial batch-noise ($\sigma = 0.1$).  Fixing $N_\ell = 1, n_i = 1000$ for all $\ell, i$, we generate 400 datasets from the different underlying trending models described above (100 from each of S$_1$, S$_2$, and 200 from S$_3$).  TRENDS is applied to each dataset to obtain a $p$-value (via the permutation procedure described in \S\ref{sec:testing}).  In this analysis, we also apply the following alternative methods (detailed in \S\ref{sec:othermthds}): a linear variant of our TRENDS model (where quantiles are restricted to evolve linearly rather than monotonically), an omnibus-testing approach (using the maximal Kolmogorov-Smirnov (KS) statistic between any pair of distributions), and a measure of the (marginally-normalized) mutual information (MI) between $\ell$ and the values in each batch.  The latter two alternative methods make no underlying assumption and capture arbitrary variation in distributions over $\ell$.    We employ the same approach to ascertain statistical significance (at the 0.05 level) under each method.  All p-values are obtained via permutation-testing (with 1000 permutations).  To correct these p-values for multiple comparisons, we employ the step-down minP adjustment algorithm of \citet{Ge2003}, which cleverly avoids double permutations to remain computationally efficient.

\begin{table}[h!] \centering
\begin{tabular}{c c c c} 
\hline
 Method & FPR & TPR & AUROC    \\ 
 \hline
 TRENDS & 0.02 & 0.35 & 0.87  \\
 Linear-TRENDS & 0.03 & 0.32 & 0.85 \\
 KS & 1.0 & 1.0 &  0.44 \\
 MI & 1.0 & 1.0 & 0.53 \\
 \hline
\end{tabular} 
\caption{False-positive rate (FPR) and true-positive rate (TPR) produced by different methods, as well as AUROC values.  FPR is determined by the fraction of datasets generated under model S$_3$ deemed statistically significant (or S$_1$, S$_2$ for TPR).}
\label{tab:power}
\end{table}

Table \ref{tab:power} demonstrates that methods sensitive to arbitrary differences in distributions are highly susceptible to spurious batch effects (both the KS and MI identify all 400 datasets as statistically significant), whereas our TRENDS method has the lowest false-positive rate, only incorrectly rejecting its null hypothesis for 4 out of the 200 datasets from S$_3$.     TRENDS also exhibits the greatest power in these experiments.  To ascertain how well these methods distinguish the trending data from the non-trending samples, we computed area under the ROC curve (AUROC) by generating ROC curves for each method using its $p$-values (ties broken using test statistics) as a classification-rule for determining which simulated datasets the method would correctly distinguish from constant model S$_3$ at each possible cutoff value.  The results of Table \ref{tab:power}  show that TRENDS is superior at drawing this distinction in these simulations. 

\section{Single cell RNA-seq analysis}
\label{sec:application}

To evaluate the practical utility of our method, we analyze two scRNA-seq time course experiments and compare TRENDS against the alternative approaches described in \S\ref{sec:othermthds}.  The first dataset is from \citet{Trapnell2014} who profiled single-cell transcriptome dynamics of skeletal myoblast cells at 4 time-points during differentiation (myoblasts are embryonic progenitor cells which undergo myogenesis to become muscle cells).  In a second larger-scale scRNA-seq experiment, \citet{Zeisel2015} isolated 1,691 cells from the somatosensory cortex (the brain's sensory system) of juvenile CD1 mice aged P22-P32.  We treat age (in postnatal days) as our batch-labels, with $L = 10$ possible levels.  \S\ref{sec:scrsdets} contains detailed descriptions of the data and our analysis.

Assuming that trending temporal-progression effects on expression reflect each gene's importance in development, we measure the size of these effects using our $\Delta$ statistic (\ref{delta}).  Fitting a separate TRENDS model to each gene's measurements, we thus produce a ranking of the genes' presumed developmental importance.  If instead, one's goal is simply to pinpoint high-confidence candidate genes relevant at all in development (ignoring the degree to which their expression transforms in the developmental progression), then our permutation test can be applied to establish which genes exhibit strong statistical evidence of an underlying nonconstant TREND effect.  For all methods, $p$-values are obtained using the same procedure as in the simulation study (1000 permutations with step-down minP multiple-testing correction).  In these analyses, significance testing (which identifies high-confidence effects) and the $\Delta$ statistic (which identifies very large effects) both produce informative results.

As the myoblast data only contains four $\ell$-levels and one batch from each, the TRENDS permutation test stringently identifies only 20 genes with significant non-constant trend at the 0.05 level (with multiple-testing correction).  
 Terms which are statistically overrepresented in the Gene Ontology (GO) annotations of these significant genes \citep{Kamburov2011}, indicate the known developmental relevance of a large subset (see Figure \ref{myoblastwordcloud}A).  Enriched biological process annotations include ``anatomical structure development'' and ``cardiovascular system development'' (Table \ref{tab:termssignificant}A).  In contrast, the cortex data are much richer, and TRENDS accordingly finds far stronger statistical evidence of trending genes, identifying 212 as significant (at the 0.05 level with multiple testing correction).  A search for GO enriched terms in the annotations of these genes shows a large subset to be developmentally relevant (Figure \ref{myoblastwordcloud}B), with enriched terms such as   ``neurogenesis'' and ``nervous system development'' (Table \ref{tab:termssignificant}B).  Due to the limited batches in these scRNA-seq data (each of which may be corrupted under our model), the TRENDS significance-tests act conservatively (a desirable property given the pervasive noise in scRNA-seq data), identifying small sets of genes we have high-confidence are primarily developmentally relevant.

\begin{figure}[h!] \centering
 \textbf{(A)} \   Myoblast  \hspace*{59mm} \textbf{(B)}  Cortex \ \ \ \  \ \ \ \ \\
\vspace*{-4mm} 
\begin{minipage}[t]{0.52\textwidth}
\vspace*{0pt}
\includegraphics[width=\textwidth]{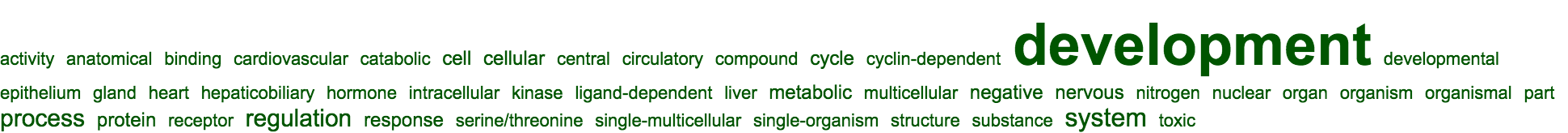}
\end{minipage}  
\hspace*{1mm}
\begin{minipage}[t]{0.45\textwidth}
\vspace*{0pt}
\includegraphics[width=\textwidth]{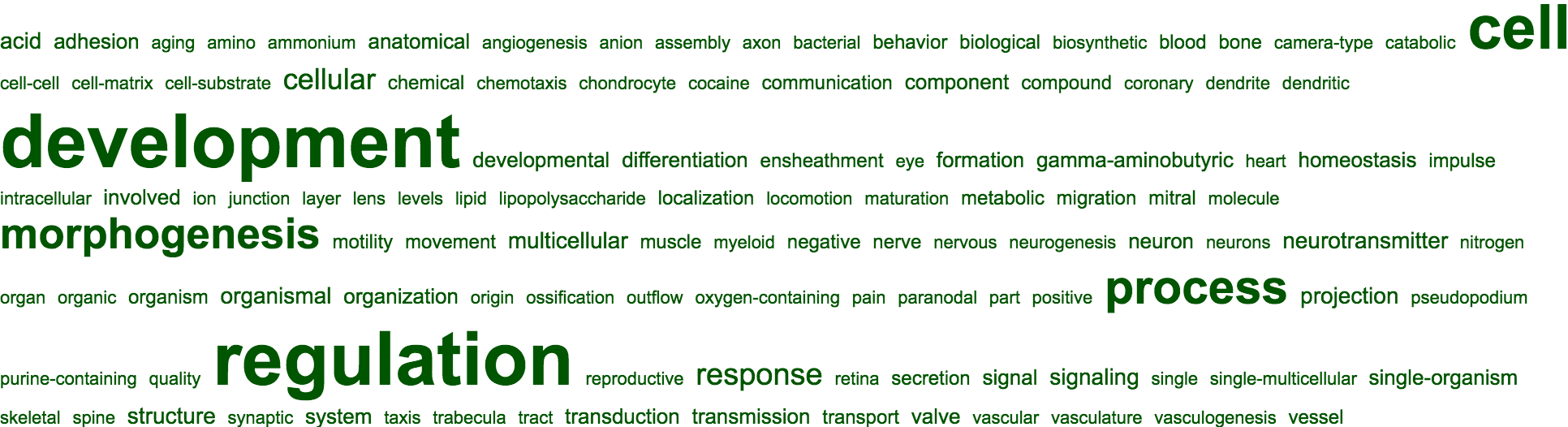}
\end{minipage}
\caption{Word clouds of terms significantly enriched (at the 0.01 level) in GO annotations of the genes with significantly trending expression in each analysis \citep{Kamburov2011}.} 
\label{myoblastwordcloud}
\end{figure}

Ranking the genes by their TRENDS-inferred developmental effects (using $\Delta$), 9 of the top 10 genes in the myoblast experiment have been previously discovered as significant regulators of myogenesis and some are currently employed as standard markers for different stages of differentiation (see Table \ref{tab:top10genes}A).  Also, 7 of the top 10 genes in the cortex analysis have been previously implicated in brain development, particularly in sensory regions (Table \ref{tab:top10genes}B).  Thus, TRENDS accurately assigns the largest inferred effects to clearly developmental genes (see also Table \ref{tab:termseffectsize}).  Since experiments to probe putative candidates require considerable effort, this is a very desirable feature for studying less well-characterized developmental systems than our cortex/myoblast examples.  Figure \ref{geneexamples}A shows TRENDS predicts that MT2A (the gene with the largest $\Delta$-inferred effect in myogenesis and a known regulator of this process)  is universally down-regulated in  development across the entire cell population.  Interestingly, the majority of cells express MT2A at a uniformly high level of $\ge 3$ log FPKM just before differentiation is induced, but almost no cell exhibits this level of  expression 24 hours later.  MT2A expression becomes much more heterogenous with some cells retaining significant  MT2A expression for the remainder of the time course while others have stopped expressing this gene entirely by the end.  TRENDS accounts for all of these different changes via the Wasserstein distance which appropriately quantifies these types of effects across the population.  

Because any gene previously implicated in muscle development is of interest in the myoblast analysis, we can form a lower-bound approximation of the fraction of ``true positives'' discovered by different methods by counting the genes with a GO annotation containing both the words ``muscle'' and ``development'' (e.g.\ ``skeletal muscle tissue development'').  Table \ref{myoblastgoterms} contains all GO annotations meeting this criterion.   Figure \ref{precisionmyoblast}A depicts a pseudo-sensitivity plot based on this approximation over the genes with the highest presumed developmental importance inferred under different methods.   Here, the Tobit models are censored regressions specifically designed for scRNA-seq data, which solely model conditional expectations rather than the full distribution of expression across the cells (see \S\ref{sec:othermthds}).  A larger fraction of the top genes found by TRENDS and our closely-related Linear TRENDS method have been previously annotated for muscle development than top candidates  produced by the other methods.  

We repeat this analysis for the cortex data using a different set of ``ground truth'' GO annotations (listed in Table \ref{cortexgoterms}), and again find that TRENDS produces higher sensitivity than the other approaches (Figure \ref{precisionmyoblast}B) based on this crude measure.  As researchers cannot practically probe a large number of genes in greater detail, it is important that a computational method for developmental gene discovery produces many high ranking true positives which can be verified through limited additional  experimentation.  While TRENDS appears to display greater sensitivity than other methods, we note that it is difficult to evaluate other performance-metrics (e.g.\ specificity) using the scRNA-seq data, since the complete set of genes involved in relevant developmental processes remains unknown.

\begin{figure}[h!] \centering
\includegraphics[width=0.4\textwidth]{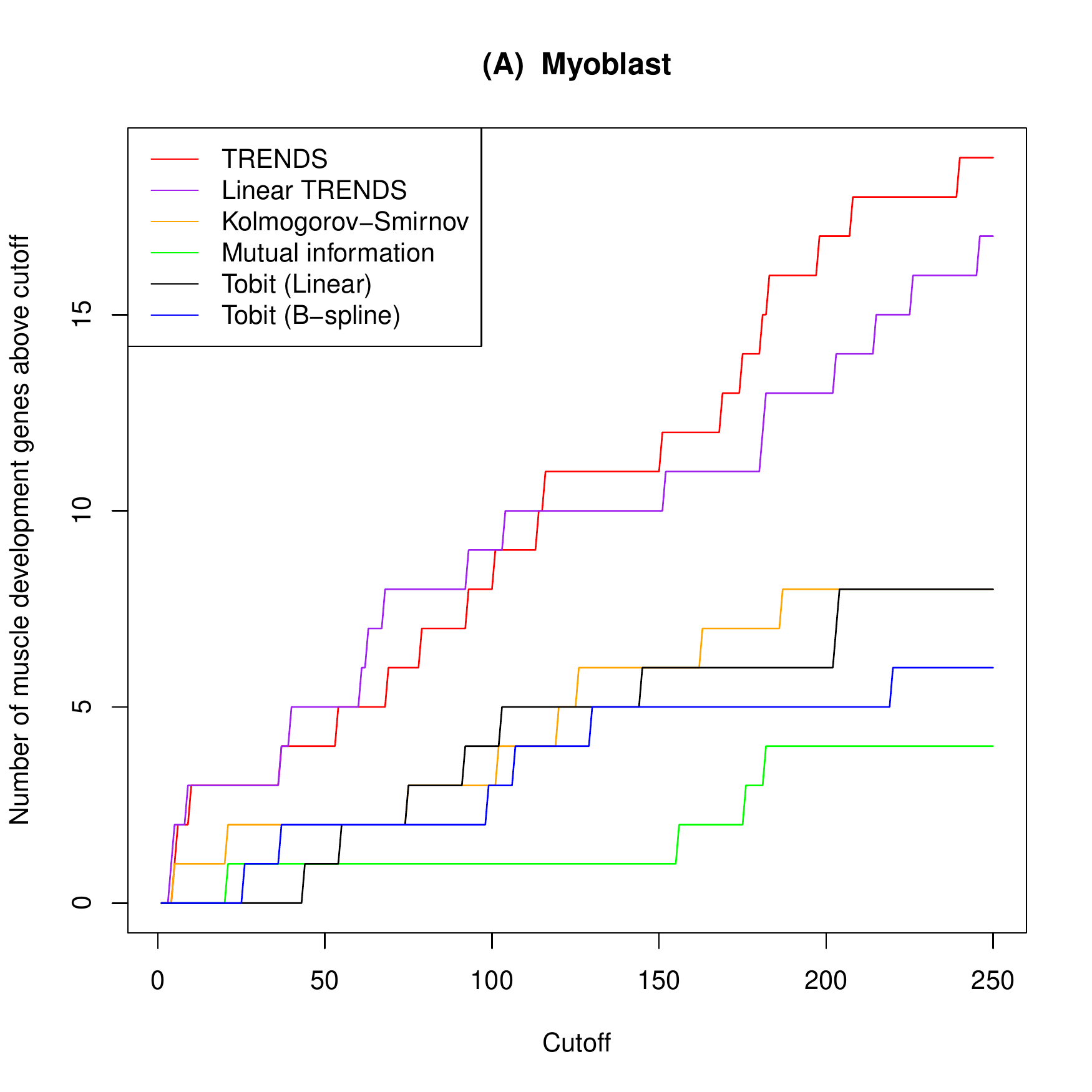}
\includegraphics[width=0.4\textwidth]{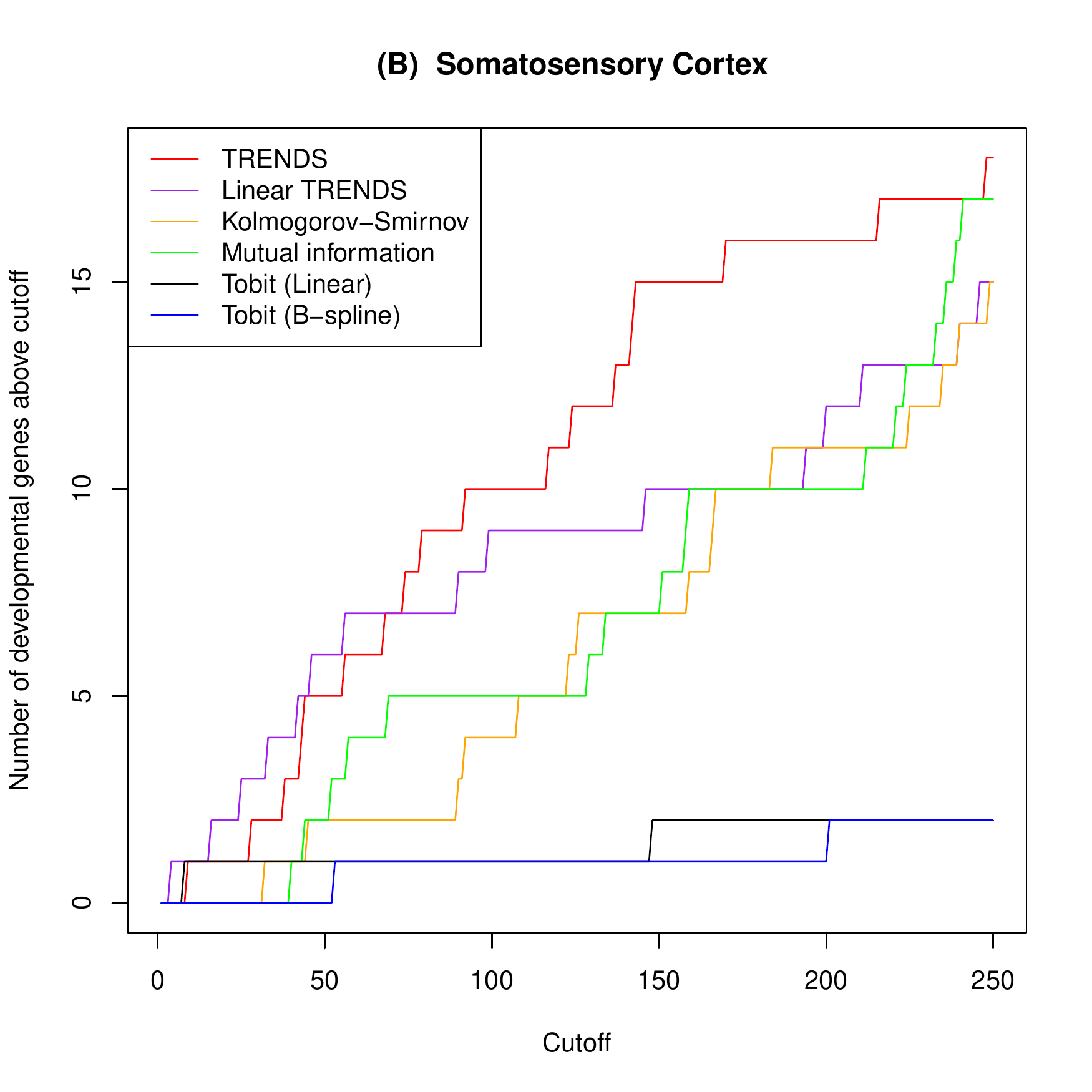}
\caption{Pseudo-sensitivity of various methods based on their ability to identify known developmental genes. (A) the number of  genes with a GO annotation containing both ``muscle'' and ``development'' found in the top $K$ genes (ranked by the different methods for the myoblast data), over increasing $K$.  (B) similar plot for the cortex data, where developmental genes are now those annotated with a relevant GO term from Table \ref{cortexgoterms}.} 
\label{precisionmyoblast}
\end{figure}

The Nestin gene in the myoblast data provides one example demonstrating the importance of treating full expression distributions rather than just mean-effects.  Nestin plays an essential role in myogenesis,  determining the onset and pace of myoblast differentiation, and its overexpression can also bring differentiation to a halt \citep{Pallari2011}, a process possibly underway in the high-expression cells from the later time points depicted in Figure \ref{geneexamples}B.  TRENDS ranks Nestin 35th in terms of  inferred developmental effect-size (with TRENDS $p$-value = 0.02 before multiple-testing correction and 0.09 after), but this gene is overlooked by the scalar regression methods (only ranking 3,291 and 5,094 in the linear / B-spline Tobit results).  Although Figure \ref{geneexamples}B depicts a clear temporal effect on mean Nestin expression, scalar regression does not prioritize this gene because these methods fail to   properly consider the full spectrum of changes affecting different segments of the cell population in the multitude of other genes with similar mean-effects as Nestin.  

Although the closely-related Linear TRENDS model appears to do nearly as well as TRENDS in our Figure \ref{precisionmyoblast} pseudo-sensitivity analysis, we find the linearity assumption overly restrictive, preventing the Linear TRENDS model from identifying important genes like TSPYL5, a nuclear transcription factor which suppresses levels of well-known myogenesis regulator p53 \citep{Epping2011, Porrello2000}.  Linear TRENDS model only assigns this gene a $p$-value of 0.2 whereas TRENDS identifies it as significant ($p = 0.05$), since TSPYL5 expression follows a monotonic trend fairly closely ($R^2 = 0.95$) but is not as well approximated by a linear trend  ($R^2 = 0.68$).

\section{Discussion}
\label{sec:discussion}

While established methods exist to quantify change over a sequence of probability distributions, TRENDS addresses the scientific question of how much of the observed change can be attributed to sequential progression rather than nuisance variation.  Although the TF algorithm resembles quantile-modeling techniques, our ideas are grounded under the unifying lens of the Wasserstein distance, which we use to measure effects (\ref{delta}), goodness-of-fit (\ref{rsquared}), and a distribution-based least-squares fit (\ref{trends}).  Like linear regression, an immensely popular scientific method despite rarely reflecting true underlying relationships, our TRENDS model is not intended to accurately model/predict the data, which are  likely subject to many more effects than our simple \emph{trend} definition encompasses.  Rather, TRENDS quantifies effects of interest, which remain highly interpretable (via our Wasserstein-perspective) despite being considered across fully nonparametric populations.

We recommend our model for data in which the underlying population is heterogeneous (possibly subject to diverse effects), each batch contains many samples ($n_i \ge 50$), and the sequence of levels $L \ge 3$ is short enough that effects of interest should follow persistent trends.   When considering TRENDS analysis, it is important to ensure that the primary effects of interest are a priori expected to follow our trend definition.  For the developmental scRNA-seq data considered in this work, this is a reasonable assumption because the experiments typically focus on a limited window of the underlying process.  Furthermore, the severe prevalence of nuisance variation makes it preferable to identify a high-confidence developmentally-relevant subset of genes (e.g. because they display consistent effects over time), rather than attempting to characterize the complete set of genes displaying interesting effects.  

While our trend definition produces good empirical results in these scRNA-seq analyses (and encompasses various conceptually interesting effects discussed in \S\ref{sec:trendexs}), we emphasize that adopting this assumption narrowly restricts the sort of effects measured by our approach.  Our limited definition is unlikely to characterize more complex effects of interest in general settings (particularly for longer sequences), and future work should explore extensions such as allowing change-points in the model.  Note that our proposed Wasserstein-least-squares fit objective and Wasserstein-$R^2$ measure remain applicable for more general classes of regression functions on distributions.  Furthermore, Lemma \ref{trendl1} provides an alternative definition of a trend which also applies to multidimensional distributions, and thus may be useful for applications such as spatiotemporal modeling.  Nevertheless, the basic TRENDS methodology presented in this work can produce valuable insights.  As simultaneously-profiled  cell numbers grow to the many-thousands thanks to technological advances \citep{Macosko2015}, significant discoveries may be made by studying the evolution of population-wide expression distributions, and TRENDS provides a principled framework for this analysis.  


  \let\oldthebibliography=\thebibliography
  \let\endoldthebibliography=\endthebibliography
  \renewenvironment{thebibliography}[1]{
    \begin{oldthebibliography}{#1} 
      \setlength{\parskip}{0ex}
      \setlength{\itemsep}{0ex}
  }
  {
    \end{oldthebibliography}
  }
  \clearpage
\bibliographystyle{agsm}
\bibliography{TRENDS}




\clearpage
\newpage \beginsupplement
\begin{center}
{\large\bf SUPPLEMENTARY MATERIAL FOR \\  \emph{\papertitle}}
\end{center}

\thispagestyle{empty} 

\vspace*{-5mm}
\begingroup 
\let\orignumberline\numberline
\def\numberline#1{\orignumberline{#1}\kern1ex}
\setcounter{tocdepth}{0}
\tableofcontents
\addtocontents{toc}{\setcounter{tocdepth}{2}}
\endgroup
\clearpage
\setcounter{page}{1} 

\section{Conceptual examples of trends}
\label{sec:trendexs}

\noindent \textbf{Example 1.}  Any sequence of  \emph{stochastically ordered}  distributions follows a trend.  One considers random variable $ X_1 \sim P_1$ less than $X_2 \sim P_2$ in the  stochastic order (which we denote $P_1 \preceq P_2$) if $F_1(x) \ge F_2(x) \ \forall x$ (equivalently characterized as $ \Pr(X_1 > x) \le \Pr(X_2 > x) \ \forall x$)  \citepsi{Shaked1994, Wolfstetter1993}.  Thus, the defining characteristic  of a trend -- the local monotonicity restriction independently applied to each quantile -- is more general than imposing a consistent  \emph{stochastic ordering/dominance} across the distribution-sequence (either  $P_1 \preceq P_2 \preceq  \dots \preceq P_L$ or  $P_1 \succeq P_2 \succeq  \dots \succeq P_L$), as this alternative requires that local changes to each segment of the distribution \emph{all} proceed in the same direction.  
\vspace{3mm}

\noindent \textbf{Example 2.}  Our trend definition also encompasses sequences where the distributions at intermediate values of $\ell$ are \emph{monotonic quantile mixtures} of $P_1$ and $P_L$, i.e.\ 
\begin{align} \forall \ell : \ F^{-1}_\ell = & \ \omega_\ell F^{-1}_1 + (1-\omega_\ell) F^{-1}_L \nonumber \\
& \text{ s.t. } \{\omega_\ell \in [0,1] : \ell = 1,\dots, L\} \text{ form a monotonic sequence }
\label{quantmix} 
\end{align}
Quantile mixtures are typically more appropriate than mixture distributions when there is no evident switching mechanism between distributions in the data-generating process \citepsi{Gilchrist2000}.  Condition (\ref{quantmix}) thus naturally characterizes the situation in which the underlying forces of interest gradually evolve distribution $P_1$ into $P_L$ over $\ell = 1,\dots, L$.  
\vspace{3mm}

\noindent  \textbf{Example 3.}  In many applications, each $P_\ell$ is a mixture of the \emph{same} $K$ underlying subpopulation-specific distributions, where we let  $G_k$ denote the CDF of the $k$th subpopulation-specific distribution (mixing component) with $\ell$-dependent mixing proportion $\pi^{(k)}_\ell$.  Each observed distribution can thus be expressed as:
\begin{equation}
\forall \ell \in \{1,\dots, L\}: \ F_\ell = \sum_{k=1}^K \pi^{(k)}_\ell G_k \ \ \ \text{ where } \forall \ k, \ell : \ \pi^{(k)}_\ell \in [0,1] \text{ , }  \pi^{(K)}_\ell = 1- \sum_{k=1}^{K-1} \pi^{(k)}_\ell
\end{equation}
Here, the effects of interest alter the mixing proportions, so that a fraction of the individuals of one subpopulation transition to become part of another as $\ell$ increases.  Equivalently, this implies that the mixing proportion of one component falls while the probability assigned to the other grows by the same amount.  To ensure the generality of this example, we avoid imposing a specific parameterization for $G_k$.  Rather, we merely assume these mixture components are stochastically ordered with  $G_1 \preceq G_2 \preceq  \dots \preceq G_K$ because subpopulations by definition have distinct characterizations (note that imposing a stochastic ordering is much weaker than requiring $G_k$ to have disjoint support).  

To formalize the types of migration between subpopulations which meet our trend criterion, we conceptualize a graph $\mathcal{G}$ with vertices $1,\dots, K$ representing each mixture component.  If there is migration from subpopulation $i$ to $j > i$ in the transition between level $(\ell - 1) \rightarrow \ell$ (i.e. $\pi^{(i)}_\ell = \pi^{(i)}_{\ell-1} - \Delta$ and $\pi^{(j)}_\ell = \pi^{(j)}_{\ell-1} + \Delta$), then directed edges $i \rightarrow (i+1), (i+1) \rightarrow (i+2), \dots, (j-1) \rightarrow j$ are added to $\mathcal{G}$ (and in the case where $j < i$, these same edges are added to $\mathcal{G}$, only their direction is reversed).   The case in which multiple simultaneous migrations between subpopulations take place between $(\ell - 1) \rightarrow \ell$ is handled more delicately:  First, we identify the sequence $\mathcal{S}$ of operations which produces the optimal transformation from mixing proportions vector $[\pi^{(1)}_{\ell-1}, \dots, \pi^{(K)}_{\ell-1}] \rightarrow [\pi^{(1)}_\ell, \dots, \pi^{(K)}_\ell]$, where the only possible operation is to select $k \in \{1,\dots, K-1\}$ and enact the simultaneous pair of reassignments $\pi^{(k)}_\ell = \pi^{(k)}_{\ell - 1} - \Delta; \ \pi^{(k+1)}_\ell = \pi^{(k+1)}_{\ell-1} + \Delta$ for some $\Delta \in [-1,1]$ whose magnitude is the cost of this operation.  Subsequently, for each operation in $\mathcal{S}$, we introduce an edge into $\mathcal{G}$ between the corresponding nodes $k$ and $k+1$ whose direction is specified by the sign of $\Delta$ (edge $k \rightarrow (k + 1)$ if $\Delta > 0$, the reverse edge otherwise).  

$\mathcal{G}$ is initialized as the empty graph and for $\ell = 2, \dots, L$, the necessary edges are added to the graph corresponding to the mixing-proportion changes between $(\ell - 1) \rightarrow \ell$ as described above.  Then, the sequence of distributions $P_1, \dots, P_L$ follows a trend if $\mathcal{G}$ contains \emph{no} cycles after step $L$ and at most one node with two incoming edges.  Intuitively, this implies that a trend captures the phenomenon in which the underlying forces of progression that induce migration from one subpopulation to a larger one as $\ell$ increases, do not also cause migration in the reverse direction between these subpopulations at different values of $\ell$.  Figure \ref{trendgoodex}D depicts an example of an evolving 3-component mixture model which follows a trend.

\section{Permutation testing with small batch numbers} 
\label{sec:testing}

Unfortunately, in many settings of interest such as most currently existing scRNA-seq time course data, $N$ and $L$ are both small.  This limits the number of possible-permutations of distribution-labels and hence the granularity and accuracy with which we can determine $p$-values in the our test.  Note that TRENDS estimation is completely symmetric with respect to a reversal of the distributions' associated levels (i.e.\ replacing each $\ell_i \leftarrow L - \ell_i + 1$), so if $B$ denotes the number of possible permutations, we can only obtain $p$-values of minimum granularity $2/B$ which may be unsatisfactory in the small $N, L$ regime (e.g.\ $N < 7$).  In the classical tissue-level differential gene expression analyses (in which sample sizes are typically small), this problem has been dealt with by permuting the genes (of which there are many) rather than the sample labels.  However, this approach is not entirely valid as it discards the (often substantial) correlations between genes and has been found to produce suboptimal results \citepsi{Phipson2010}.

 To circumvent these issues, we propose a variant of our label-permutation-based procedure to obtain finer-grained but only approximate $p$-values (in the small $N, L$ setting, rough approximations are all one can hope for since asymptotics-derived $p$-values are also error-prone).  The underlying goal of our heuristic is to produce a richer picture of the null distribution of $R^2$ (at the cost of resorting to approximation), which is accomplished as follows:
 
 \begin{enumerate} 
 \item Shuffle the distributions' $\ell_i$-labels as described above, but now explicitly perform all possible permutations, except for the permutations that produce a sequence $\{\ell_1^{\text{perm}}, \dots, \ell_N^{\text{perm}} \}$ which equals either the sequence of actual labels $\{\ell_1, \dots, \ell_L \}$ or its reverse in which each $\ell_i$ is replaced by $L - \ell_i + 1$. 
 
\item For data in which each distribution $\widehat{P}_i$ is estimated from a set of samples $\{ X_{i,s} \}_{s=1}^{n_i}$, one can obtain a diverse set of $K$ null-distributed datasets from a single permutation of the labels  by employing the bootstrap.  For each $k = 1,\dots,K$ and $i =1,\dots,N$: draw $n_i$ random samples $Z^{(k)}_{i,s}$ with replacement from $\{ X_{i,s} \}_{s=1}^{n_i}$, compute a bootstrapped empirical distribution $\widehat{P}^{(k)}_i$ using $\{Z^{(k)}_{i,s} \}_{s=1}^{n_i}$, and assemble the $k$th null-distributed dataset (under the current labels-permutation) by pairing the bootstrapped empirical distributions with the permuted labels $\ell_i^{\text{perm}}$.  
\item Apply TRENDS to each null-distributed dataset $\{ (\ell_i^{\text{perm}}, \widehat{P}^{(k)}_i) \}_{i=1}^N$ and compute  a $R^2_{\text{perm},k}$ value via  (\ref{rsquared}) which is distributed according to the desired null (where $K=1$ and $\widehat{P}^{(k)}_i = \widehat{P}_i$ if bootstrapping is not performed).
\item Form a smooth approximation of the null distribution by fitting a kernel CDF estimate $\widehat{F}$ to the collection of $(B-2) \cdot K$ null samples $\{R^2_{\text{perm},k} \}$ where $k = 1,\dots, K$ and perm is an index over the possible label-permutations under consideration (we use the  Gaussian kernel with the plug-in bandwidth proposed by Altman and L\'eger, which has worked well even when only 10 samples are available \citepsi{Altman1995}).  Finally, the approximate $p$-value is computed as $\widehat{p} := 1 - \widehat{F}(R^2)$, where $R^2$ corresponds to the fit of TRENDS on the original dataset.

\end{enumerate}
 
Note that under the exchangeability of labels assumed in $H_0$, the sequence of $\ell_i$ corresponding to the actual ordering or its reverse are equally likely a priori as any other permutation of the $\ell_i$.  Thus, Step 1 above is unbiased, despite the omission of two permutations from the set of possibilities.  Producing a much richer null distribution than the empirical version based on few permutation samples, the bootstrap and kernel estimations steps enable us to obtain continuum of (approximate) $p$-values.  Intuitively, our richer approximation is especially preferable for differentiating between significant $p$-values despite its sensitivity to the bandwidth setting, because the standard permutation test offers no information when the actual test statistic is greater than every permuted statistic (a common occurrence if $B$ is small), whereas our approach assigns smaller $p$-values based on the distance of the actual test statistic from the set of permuted values.  Finally, we remark that the kernel estimation step in our $p$-value approximation is similar to the approach of Tsai and Chen \citepsi{Tsai2007}, and point out that as the number of distributions per level $N_\ell$ grows, the approximation factor of  our procedure shrinks, as is the case for $p$-values based on asymptotics which are themselves only approximations.

\addtocontents{toc}{\protect\setcounter{tocdepth}{1}}
\subsection{Evaluating TRENDS $p$-values}
\addtocontents{toc}{\protect\setcounter{tocdepth}{2}}

Under the simulation setup of \S\ref{sec:simulation}, we investigate the performance of our permutation technique to obtain TRENDS $p$-values.  We draw samples from each of the underlying models S$_1$, S$_2$, S$_3$ with $n=100, N_\ell=1$, and $\sigma = 0.1$.  To each simulated dataset (in total, 100 datasets are drawn from each model), we apply the TRENDS model and then determine the significance of the TRENDS $R^2$ via a standard permutation test utilizing all possible permutations of the batch labels (here $L=5$ so the number of distinct possible permuted-$R^2$ values from the null is $5! / 2 = 60$).  We subsequently employ our $p$-value approximation to assess the significance of the same $R^2$ value using the same permutations as before, but with additional bootstrapped samples drawn under each permutation of the batch labels until the total number of null samples is enlarged to at least 1000.  Subsequently, the kernel CDF procedure is applied to these 1000 null samples as detailed in the technique described above for obtaining an approximate $p$-value.  

To compare our approximation with the standard permutation test $p$-value, we require the actual $p$-value of the observed $R^2$ describing the TRENDS fit, which is estimated as follows:  a minimum of $J=10,000$ new datasets (i.e.\ batch sequences) from the same underlying model are drawn in which $\ell$ is randomly permuted among the different batches within a single dataset.  TRENDS $R^2$ values are then computed for each of these null datasets (which resemble the permuted data we use in practice, but each permutation of the labels is matched with freshly sampled batches   corresponding to a new dataset), and we can subsequently define the underlying $p$-value as in permutation testing.  Note that this approach can approximate the actual null distribution of $R^2$ arbitrarily well as we increase $J$, and in our experiments, we begin with $J= 10,000$ and gradually increase up to $1,000,000$ while at least 5 null-$R^2$ values greater than the one observed in the original data have not yet been observed.  Table \ref{pvaltable} demonstrates that our approximation produces much better $p$-values than the basic permutation method. 

\begin{table}[ht]
\centering 
\begin{tabular}{crrrrrrrrrr}
  \hline
 Model &  Average $p$ &  $\mathbb{E}[\widehat{p} - p]$ & SD$(\widehat{p})$ & MSE($\widehat{p}$) &  $\mathbb{E}[p_{\text{perm}} - p]$ & SD$(p_{\text{perm}})$ & MSE($p_{\text{perm}}$)  \\ 
  \hline
S$_1$ & 0.13 & -0.012 & 0.036 & \textbf{1.2e-3} & -0.015 & 0.036 & 1.3e-3 \\ 
S$_2$ & 0.19 & 0.039  & 0.068 &  \textbf{5.2e-3} &  0.085 & 0.117 & 1.8e-2 \\ 
S$_3$ & 0.51 & 0.056  & 0.084 &  \textbf{8.8e-3} & 0.092 & 0.157 & 2.8e-2 \\ 
   \hline
\end{tabular}
\caption{Comparing our approximate $p$-values ($\widehat{p}$) against the standard permutation test ($p_{\text{perm}}$).  Column 2 lists the average true $p$-value (over 100 datasets) for each model S$_1$-S$_3$.}
\label{pvaltable}
\end{table}
\addcontentsline{toc}{subsection}{Table \ref{pvaltable}}

\section{Description of alternative methods}
\label{sec:othermthds}
Here, we describe different methods that TRENDS is compared against.  Note that the methods which model full distributions may be ordered based on increasing generality of the underlying assumption as follows:  Linear TRENDS $\rightarrow$ TRENDS $\rightarrow$ KS / MI.   By selecting a model later in this ordering, one can capture a wider diversity of underlying effects but only with decreased statistical power (and robustness against batch-effects).

\addtocontents{toc}{\protect\setcounter{tocdepth}{1}}
\subsection{Kolmogorov-Smirnov method (KS)}
\addtocontents{toc}{\protect\setcounter{tocdepth}{2}}
This approach performs an omnibus test of the hypothesis that there exist $\ell_1$ and $\ell_2$ such that $\Pr(X \mid \ell_1) \neq \Pr(X \mid \ell_2)$.  As a test statistic and measure of effect-size, we use the maximum Kolmogorov-Smirnov test statistic between these empirical conditional distributions over all possible pairs $\ell_1 < \ell_2 \in \{1,\dots, L\}$.  Statistical significance is assessed via permutation testing, since the usual asymptotics are no longer valid after maximization.   

\addtocontents{toc}{\protect\setcounter{tocdepth}{1}}
\subsection{Mutual information method (MI)}
\addtocontents{toc}{\protect\setcounter{tocdepth}{2}}
Here, we estimate the size of the effect using the mutual information between $\ell$ and $X$.  Because we operate in the fixed-design setting, $\ell$ is technically not a random variable, so we instead employ a conditional variant of the mutual information in which the marginal distribution of $\ell$ is disregarded, following the DREMI method of \citesi{Krishnaswamy2014}.  First, we simply reweigh our batches to ensure the marginal distribution of $\ell$ is uniform over $\{1,\dots, L\}$ in the given labels $\{\ell_i\}_{i=1}^N$.  Subsequently, kernel density estimates of the reweighed joint $(X,\ell)$ distribution as well as each conditional $\Pr(X \mid \ell_1)$ are used to calculate the (conditional) mutual information, which is used to produce a ranking of genes' inferred developmental importance according to this method.  A $p$-value is obtained via permutation testing, using the mutual information as the test statistic.

\addtocontents{toc}{\protect\setcounter{tocdepth}{1}}
\subsection{Linear TRENDS (LT) model}
\addtocontents{toc}{\protect\setcounter{tocdepth}{2}}
This method is very similar to TRENDS, except it uses a more restrictive class of regression functions where each quantile evolves linearly (rather than the assumption of monotonicity used in our trend criterion).  We thus operate on real-valued rather than ordinal covariates (e.g.\ the actual values of the time points $t_\ell$ when available in the scRNA-seq context, or the integer $\ell$-values when there are no definitive numerical batch-labels, as in our simulation study).  Linear TRENDS also relies on our notion of Wasserstein least-squares fit, the $\Delta$ effect-size measure (used to rank genes), and the same permutation-procedure for testing significance in TRENDS (the sole difference between these models is that the former accounts for covariate scaling assuming that effects manifest linearly on this scale).  

A similar linear multiple-quantile regression framework has been previously proposed in numerous contexts, although it is designed only for simultaneously estimating a few specific quantiles of the conditional distribution \citepsi{Takeuchi2006, Bondell2010}.  Takeuchi et al.\ and Bondell et al.\ both fit  this model jointly over the quantiles of interest via a quadratic program with constraints to ensure non-crossing quantiles.  Linear quantile regression (with non crossing) could nonetheless be employed to model the full distribution by simply selecting a grid of quantiles spanning $(0,1)$ as is done in TRENDS, but note that simple scalar measures such as our  $\Delta$ and $R^2$ values do not exist in standard quantile regression which lacks the unifying Wasserstein perspective presented in this work.   

In our setting, the empirical quantiles of each conditional distribution are available, so one can directly employ the usual squared error loss on the empirical quantiles themselves (as done in our TF algorithm)  rather than relying on the quantile regression loss function used by Takeuchi et al.\ and Bondell et al.  Analogous to the proof of Theorem \ref{tfoptimal}, one can easily show that optimizing the squared error loss (on each quantile) implies the distributions constructed from the set of fitted quantiles are the Wasserstein least-squares fit under the restriction that each quantile evolves linearly over $t_\ell$, the time at which the batch is sampled.  By replacing the PAVA step (over $\ell$) of the TF algorithm with standard linear regression (where $t_\ell$ is the sole covariate) and also omitting the $\delta$-search for the split between increasing and decreasing quantiles, our alternating projections method is trivially adapted to fit the set of non-crossing quantile linear regressions under the squared-loss.  In the case where we estimate around $100$ quantiles to represent the entire distributions, we find that this linearized TF algorithm is orders of magnitude faster than the quadratic program, which has difficulty dealing with the large number of constraints required in this setting (these methods are not intended to estimate full distributions).  We therefore fit the Linear TRENDS model using this linearized TF algorithm in our applications (computational efficiency is crucial when the model is fit thousands of times as in our gene-expression analyses), and find that besides the marked runtime improvement, Linear TRENDS produces nearly identical estimates as the linear multiple-quantile regression model of Bondell et al.

\addtocontents{toc}{\protect\setcounter{tocdepth}{1}}
\subsection{Tobit model (censored regression)}
\addtocontents{toc}{\protect\setcounter{tocdepth}{2}}
 \citesi{Trapnell2014} introduce a scalar regression model specifically tailored for the analysis of single-cell gene expression over time (which only considers conditional expectations rather than the complete expression distribution across the cell population).  Their approach ranks genes based on the significance of the regression coefficients in a Tobit-family generalized additive model fit to log-FPKM values vs.\ time.  It is thus assumed that measured expression follows a log-normal distribution, and the Tobit link function is introduced to deal with the scarcity of observed reads from some genes expected to be highly expressed (this missing data issue plagues scRNA-seq measurements due to the small amount of RNA that can be isolated from one cell).  We try both directly regressing $X$ against $t_\ell$ (referring to this generalized linear model as the linear Tobit), as well as initially using a B-spline basis expansion of the $t_\ell$ values so the subsequent Tobit regression can capture diverse nonlinear effects \citepsi{Trapnell2014}.

\section{Simulation study details} 
\label{sec:supsim}

Our negative binomial distribution parameters $r_\ell$ and $p_\ell$ correspond to the arguments \texttt{size} and \texttt{prob} used by the \texttt{NegBinomial} functions in the $R$ \texttt{stats} package (here, a negative binomial random variable represents the number of failures occurring in a series of Bernoulli trials before $r_\ell$ successes take place).  To ensure we are sampling from valid distributions after the introduction of noise, we subsequently enforce the following additional constraints: $\widetilde{r}_\ell \ge 1$,  $0.05 \le \widetilde{p}_\ell   \le  0.95$ before drawing our observations.

\section{Single cell RNA-seq analysis details} 
\label{sec:scrsdets}

 \citetsi{Trapnell2014} recently studied the single-cell transcriptome dynamics of skeletal myoblast cells during differentiation to identify the genes which orchestrate the morphological/functional changes observed in this process.  After inducing differentiation in a culture of primary human myoblast cells, cells were sampled (and sequenced) in batches every 24 hours.  While the microfluidic system in this experiment can capture 96 cells (one batch is sampled per time point), some of the captures contain visible debris and cannot be confirmed to come from a whole single cell.  In addition to discarding these, Trapnell et al.\ stringently omit cells whose libraries were not sequenced deeply ($\ge 1$ million reads), since their analysis uses high-dimensional manifold methods which are not robust to noise.  Because TRENDS is designed to distinguish biological effects from noise, we retain these cells embracing the additional (albeit noisy) insight on underlying expression.  Omitting only the debris-cells, the data\footnote{Myoblast FPKM values are available in the Gene Expression Omnibus under accession GSE52529.} we analyze consists of  17,341 genes profiled in the following number of cells at each time point: 0h: 93 cells, 24h: 93 cells, 48h: 93 cells, 72h: 76 cells.  

In a scRNA-seq experiment of much larger scale, Zeisel et al.\ (2015) isolated 1,691 cells from the somatosensory cortex (the brain's sensory system) of juvenile CD1 mice aged P22-P32.  We treat age (in postnatal days) as our covariate, whose ordinal representation takes one of $L = 10$ possible levels. Numerous batches of cells were captured from some identically-aged mice, implying $N_\ell > 1$ for many  $\ell$, and a total of 14,575 genes have nonzero expression measurements\footnote{We compute FPKM values from the somatosensory cortex sequencing read counts available in the Gene Expression Omnibus under accession GSE60361.} in the sampled cells.

In both analyses, gene expression is  represented in ($\log_{10}(x+1)$ transformed) Fragments Per Kilobase of transcript per Million mapped reads (FPKM) \citep{Trapnell2014}.  Although TRENDS is nonparametric and can be applied to any expression representation, we find log-FPKM values favorable due to their interpretability and direct comparability between different genes.    The methods we compare TRENDS against (\S\ref{sec:othermthds}) are all suited for log-FPKM values and do not hinge on the specific distributional assumptions often required for other expression-measures such as  read counts \citesi{Risso2014} or negative-binomial rates \citesi{Kharchenko2014}.

The word clouds (Figure \ref{myoblastwordcloud}) summarizing enriched biological process terms in the  GO annotations for genes with significantly trending expression were made using the ConsensusPathDB\footnote{ConsensusPathDB Link: \url{http://cpdb.molgen.mpg.de}} tool \citepsi{Kamburov2011}.  Table \ref{tab:termssignificant} provides additional detail listing the most highly enriched terms identified in the significantly trending gene set from each dataset.  
Table \ref{tab:top10genes} contains previously characterized developmental genes found among those with the ten largest TRENDS $\Delta$ values (i.e.\ the genes with the largest inferred effect-size).   
Table \ref{tab:termseffectsize} lists the highly enriched GO terms (again found via ConsensusPathDB) in the 100 genes with largest $\Delta$ values in each dataset.

\begin{singlespace} 

\begin{table}[ht!] \centering
\hspace*{5mm} \textbf{(A) Myoblast}  \hspace*{37mm} \textbf{(B) Somatosensory Cortex} \\ \vspace*{1mm}
{\scriptsize
\begin{tabular}[b]{lcc}
  \hline
Term & p-value & q-value  \\ 
  \hline
    liver development  & 1e-4  & 6e-3 \\
    hepaticobiliary system development & 1e-4  & 6e-3 \\
    anatomical structure development & 3e-4  & 8e-3 \\
    gland development &  3e-4 & 0.03 \\
    system development &  2e-3 & 0.08 \\
    regulation of cyclin-dependent protein \\ \hspace*{2mm}  serine/threonine kinase activity &  2e-3 & 0.08 \\
     single-multicellular organism process & 3e-3 & 0.04 \\
     single-organism \\ \hspace*{2mm} developmental process & 4e-3 & 0.04 \\
     central nervous system development &  5e-3 & 0.07 \\
     cardiovascular system development & 5e-3 & 0.07 \\
     circulatory system development & 5e-3 & 0.07 \\
     multicellular organismal \\ \hspace*{2mm} development & 5e-3 & 0.08 \\
     cellular nitrogen compound \\
     catabolic process & 5e-3 & 0.07 \\
     response to hormone & 5e-3 & 0.08 \\
     nervous system development & e-3 & 0.07  \\
      heart development & 5e-3 & 0.08 \\
      regulation of cell cycle & 6e-3 & 0.07 \\
      organ development & 6e-3 & 0.08 \\
  \hline
\end{tabular} \hspace*{1.1mm} 
\begin{tabular}[b]{lcc}
  \hline
Term & p-value & q-value  \\ 
  \hline
   transmission of nerve impulse & 6e-8 & 2e-5 \\
   multicellular organismal \\ \hspace*{2mm} signaling & 1e-7 & 3e-5 \\
   cell communication & 6e-7 & 7e-5 \\
   neuron differentiation & 1e-6  & 2e-4 \\
   cell development & 3e-6  & 2e-4 \\
   ensheathment of neurons & 3e-6  & 2e-4 \\
   axon ensheathment & 3e-6  &3e-4 \\
   single organism signaling & 4-e6  & 3e-4 \\
   neurogenesis & 1e-5  & 1e-3 \\
   regulation of biological quality	& 1e-5  & 4e-4 \\
	system development	& 1e-5  & 5e-4  \\
   neuron projection development &  1e-5 & 1e-3 \\
   cell projection organization &  1e-5 &5e-4  \\
	single-organism cellular process & 2e-5  & 4e-4 \\
	neuron development		& 2e-5   & 1e-3 \\
	anatomical structure development & 3e-5   & 5e-4 \\
	 nervous system development &  3e-5 & 2e-3   \\
	cellular developmental process & 5e-5  & 6e-4  \\
	cell differentiation & 6e-5   & 2e-3 \\
	single-organism \\ \hspace*{2mm}  developmental process & 7e-5 & 7e-4 \\
  \hline
\end{tabular}
} \vspace*{-4mm}
\caption{Most highly enriched terms in the biological process annotations of significantly trending genes.  The $p$-values correspond to the statistical significance of each term's enrichment in the set of genes (false-discovery-rate correction produces $q$-values).}
\label{tab:termssignificant}
\end{table}
\addcontentsline{toc}{subsection}{Table \ref{tab:termssignificant}}

\begin{table}[ht!] \centering
\textbf{(A) Myoblast} \\
{\scriptsize
\begin{tabular}[b]{lcccl}
  \hline
Gene  & $\Delta$ & $R^2$ & $p$-value &   Developmental Evidence   \\ 
  \hline
MT2A & 0.46 & 0.98 & 0.11  &  \citesi{Apostolova1999} \\ 
ACTA2 & 0.44 & 0.99 & 0.08  &  \citesi{Petschnik2010} \\ 
MT1L & 0.43 & 0.99 & 0.09  &  \citesi{Apostolova1999} \\ 
TNNT1 & 0.42 & 0.95 & 0.13  &  \citesi{Sebastian2013} \\ 
MYLPF & 0.41 & 0.99 & 0.03  &  \citesi{Sebastian2013} \\ 
 MYH3 & 0.39 & 0.99 & 0.04  &  \citesi{Trapnell2014} \\ 
MT1E  & 0.39 & 0.99 & 0.11  &  \citesi{Apostolova1999} \\ 
AC004702.2 & 0.37 & 0.99 & 0.23  & Unknown  \\
FABP3 & 0.35 & 0.98 & 0.18  & \citesi{Myers2013} \\ 
DKK1 & 0.34 & 0.99 & 0.12  & \citesi{Han2011} \\ 
\hline
\end{tabular}} \\ \vspace*{2mm}
 \textbf{(B) Somatosensory Cortex} \\
{\scriptsize
\begin{tabular}[b]{lcccl}
  \hline
Gene & $\Delta$ & $R^2$ & $p$-value &  Developmental Evidence  \\ 
  \hline
 Sst & 0.23 & 0.22 & 0.05  & \citesi{Zeisel2015} \\
Xist & 0.14 & 0.09 & 0.35 & Unknown  \\
Ptgds & 0.13 & 0.24 & 0.02 & \citesi{Trimarco2014} \\
Plp1 & 0.13 & 0.16 & 0.14  & \citesi{Zeisel2015}  \\
Mog  & 0.13 & 0.13 & 0.16  & \citesi{Zeisel2015} \\
 Npy  & 0.12 & 0.11 & 0.23  & \citesi{Zeisel2015} \\
 Rps26 & 0.11 & 0.12 & 0.20 & Unknown  \\
 Tsix & 0.11 & 0.12 & 0.23 & Unknown \\
Apod  & 0.11 & 0.16 & 0.11 &  \citesi{Sanchez2002} \\
Ermn & 0.10 & 0.11 & 0.20  & \citesi{Zeisel2015} \\
  \hline
\end{tabular}} \vspace*{-1mm}
\caption{The top ten inferred developmental genes (with the largest $\Delta$ value) from each experiment.  Shown are the TRENDS $\Delta$, $R^2$, and $p$-value (after multiple-testing correction) for each gene, as well as existing literature (if known) which previously characterized the gene as playing an important role in developmental processes.}
\label{tab:top10genes}
\end{table}
\addcontentsline{toc}{subsection}{Table \ref{tab:top10genes}}

\vspace*{2mm}

\begin{table}[ht!] \centering
\hspace*{5mm} \textbf{(A) Myoblast}  \hspace*{37mm} \textbf{(B) Somatosensory Cortex}  \\ \vspace*{1mm}
{\scriptsize
\begin{tabular}[b]{lcc}
  \hline
 Term & p-value & q-value  \\ 
 \hline
 actin-mediated cell contraction & 4e-9 & 9e-7 \\
 muscle structure development & 6e-9 & 1e-6 \\
striated muscle tissue development & 8e-9 & 9e-7 \\
muscle tissue development & 1e-8 & 2e-6 \\
muscle organ development & 1e-8 & 2e-6 \\
response to zinc ion & 2e-8 & 2e-6 \\
actin filament-based movement & 3e-8  & 2e-6 \\
organ development & 1e-7 & 1e-5 \\
muscle system process & 1e-7 & 7e-6 \\
response to inorganic substance & 2e-7 & 1e-5 \\
muscle contraction & 2e-7 & 2e-5 \\
negative regulation of growth & 2e-7 & 1e-5 \\
response to metal ion & 2e-7 & 1e-5 \\
 mitotic cell cycle & 3e-7 & 1e-5 \\
response to transition \\ \hspace*{2mm}  metal nanoparticle & 5e-7 & 2e-5 \\
cellular response to metal ion & 5e-7 & 3e-5 \\
cellular response to \\ \hspace*{2mm}  inorganic substance & 1e-6 & 4e-5 \\
muscle cell development & 2e-6 & 6e-5 \\
cell cycle & 5e-6 & 2e-4 \\
muscle tissue morphogenesis & 6e-6 & 2e-4 \\
muscle organ morphogenesis & 9e-6 & 2e-4 \\
 heart development & 1e-5 & 4e-4 \\
  regulation of mitotic cell cycle & 1e-5  & 6e-4 \\
  striated muscle cell development & 2e-5 & 6e-4 \\
 \hline
\end{tabular}}  \hspace*{1.1mm} 
{\scriptsize
\begin{tabular}[b]{lcc}
  \hline
 Term & p-value & q-value  \\ 
 \hline
ensheathment of neurons & 2e-10 & 3e-8 \\
axon ensheathment & 2e-10  & 5e-8 \\
cellular homeostasis & 3e-8 & 2e-6 \\
cellular chemical homeostasis & 4e-8 & 4e-6 \\
transmission of nerve impulse & 7e-8 & 5e-6 \\
multicellular organismal signaling & 1e-7 & 6e-6 \\
glial cell differentiation & 3e-7 & 2e-5 \\
regulation of biological quality & 4e-7 & 2e-5 \\
glial cell development & 7e-7 & 3e-5 \\
chemical homeostasis & 2e-6 & 6e-5 \\
 response to inorganic substance & 4e-6 & 1e-4 \\
 homeostatic process & 8e-6 & 3e-4 \\
nervous system development & 1e-5 & 5e-4 \\
response to metal ion & 2e-5 & 6e-4 \\
response to oxygen-\\ \hspace*{2mm} containing compound & 4e-5 & 1e-3 \\
system development & 6e-5 & 1e-3 \\
central nervous system development & 6e-5 & 2e-3 \\
detoxification of copper ion  & 7e-5 & 2e-3 \\
response to steroid \\ \hspace*{2mm}  hormone stimulus & 1e-4 & 2e-3 \\
response to lipid  & 1e-4 & 2e-3 \\
  response to reactive oxygen species  & 2e-4 & 3e-3 \\
 response to toxic substance   & 2e-4 & 3e-3 \\
  anatomical structure development   & 2e-4 & 6e-3 \\
neurogenesis      & 3e-4 & 5e-3 \\
 \hline
\end{tabular}} \vspace*{-2mm}
\caption{Most highly enriched terms in the biological process annotations of the top 100 genes with largest $\Delta$ values in each experiment.  The $p$-values correspond to the statistical significance of each term's enrichment in the set of genes (false-discovery-rate correction produces $q$-values).}
\label{tab:termseffectsize}
\end{table}
\addcontentsline{toc}{subsection}{Table \ref{tab:termseffectsize}}

\begin{table}[h!]
\centering
{\scriptsize \begin{tabular}{rcl}
  \hline
 & Gene Ontology ID & Annotation Term \\ 
  \hline
1 & GO:0048745 & smooth muscle tissue development \\ 
  2 & GO:0048747 & muscle fiber development \\ 
  3 & GO:0048742 & regulation of skeletal muscle fiber development \\ 
  4 & GO:0048739 & cardiac muscle fiber development \\ 
  5 & GO:0048635 & negative regulation of muscle organ development \\ 
  6 & GO:0007517 & muscle organ development \\ 
  7 & GO:0007519 & skeletal muscle tissue development \\ 
  8 & GO:0048743 & positive regulation of skeletal muscle fiber development \\ 
  9 & GO:0048738 & cardiac muscle tissue development \\ 
  10 & GO:0055013 & cardiac muscle cell development \\ 
  11 & GO:0048741 & skeletal muscle fiber development \\ 
  12 & GO:0055014 & atrial cardiac muscle cell development \\ 
  13 & GO:0055015 & ventricular cardiac muscle cell development \\ 
  14 & GO:0048643 & positive regulation of skeletal muscle tissue development \\ 
  15 & GO:0097084 & vascular smooth muscle cell development \\ 
  16 & GO:0060948 & cardiac vascular smooth muscle cell development \\ 
  17 & GO:0055001 & muscle cell development \\ 
  18 & GO:0055026 & negative regulation of cardiac muscle tissue development \\ 
  19 & GO:0045843 & negative regulation of striated muscle tissue development \\ 
  20 & GO:0016202 & regulation of striated muscle tissue development \\ 
  21 & GO:0048642 & negative regulation of skeletal muscle tissue development \\ 
  22 & GO:0055024 & regulation of cardiac muscle tissue development \\ 
  23 & GO:0061049 & cell growth involved in cardiac muscle cell development \\ 
  24 & GO:0014706 & striated muscle tissue development \\ 
  25 & GO:0007525 & somatic muscle development \\ 
  26 & GO:0061052 & negative regulation of cell growth involved in  cardiac muscle cell development \\ 
  27 & GO:0045844 & positive regulation of striated muscle tissue development \\ 
  28 & GO:0014707 & branchiomeric skeletal muscle development \\ 
  29 & GO:0007522 & visceral muscle development \\ 
  30 & GO:0048641 & regulation of skeletal muscle tissue development \\ 
  31 & GO:1901863 & positive regulation of muscle tissue development \\ 
  32 & GO:0072208 & metanephric smooth muscle tissue development \\ 
  33 & GO:0003229 & ventricular cardiac muscle tissue development \\ 
  34 & GO:0060538 & skeletal muscle organ development \\ 
  35 & GO:0061050 & regulation of cell growth involved in cardiac muscle cell development \\ 
  36 & GO:0055020 & positive regulation of cardiac muscle fiber development \\ 
  37 & GO:0061061 & muscle structure development \\ 
  38 & GO:0061051 & positive regulation of cell growth involved in cardiac muscle cell development \\ 
  39 & GO:0055002 & striated muscle cell development \\ 
  40 & GO:0060537 & muscle tissue development \\ 
  41 & GO:0007527 & adult somatic muscle development \\ 
  42 & GO:0002074 & extraocular skeletal muscle development \\ 
   \hline
 
\end{tabular}}
\caption{A list of all GO annotation terms containing both the words ``muscle'' and ``development'', used to produce the pseudo-sensitivity plots in Figure \ref{precisionmyoblast}A.}
\label{myoblastgoterms}
\end{table}
\addcontentsline{toc}{subsection}{Table \ref{myoblastgoterms}}

\begin{table}[ht!]
\centering
{\footnotesize
\begin{tabular}{rll}
  \hline
 & Gene Ontology ID & Annotation Term \\ 
  \hline
1 & GO:0007420 & brain development \\ 
2 & GO:0007399 & nervous system development \\ 
3 & GO:0014003  & oligodendrocyte development \\ 
4 & GO:0021860  & pyramidal neuron development \\ 
5 & GO:0022008 & neurogenesis \\ 
  \hline
\end{tabular}}
\caption{A list of the GO annotation terms relevant to the somatosensory cortex development, used to produce the pseudo-sensitivity plots in Figure \ref{precisionmyoblast}B.   This brain region is primarily composed of oligodendrocyte and pyramidal neuron cells \citep{Zeisel2015}.}
\label{cortexgoterms}
\end{table}
\addcontentsline{toc}{subsection}{Table \ref{cortexgoterms}}

\begin{figure}[h!] \centering
\includegraphics[width= 0.68\textwidth]{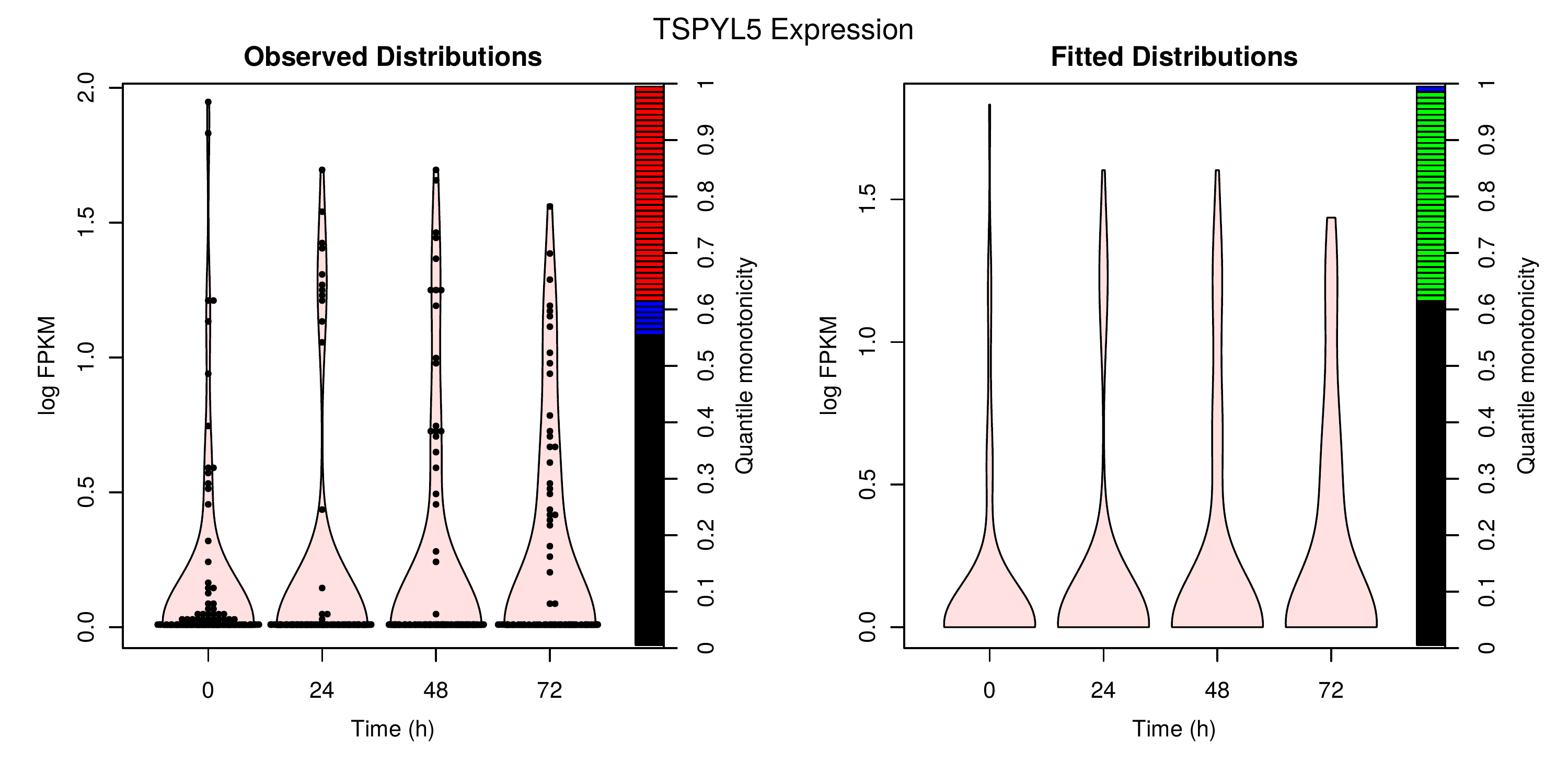} \hspace*{1.3mm} \raisebox{0.7\height}{\includegraphics[width=0.2 \textwidth]{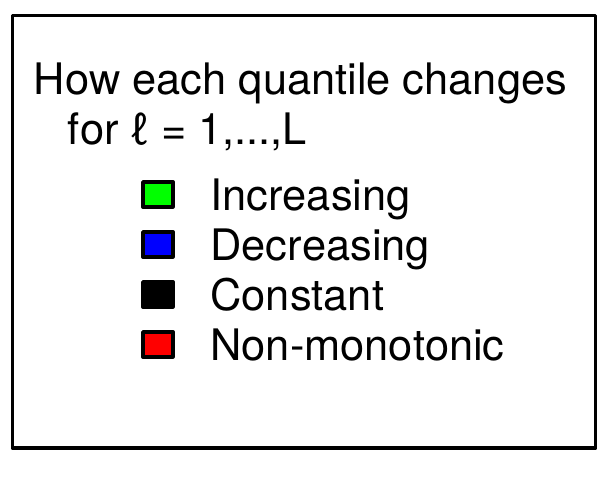}} \vspace*{-5mm} \\
\caption{Violin plots depicting the empirical distribution of TSPYL5 expression measured in myoblast cells (on left), and the corresponding TRENDS fitted distributions (on right). Each point shows a sampled cell.}
\label{tspyl5}
\end{figure}
\addcontentsline{toc}{subsection}{Figure \ref{tspyl5}}

\end{singlespace} 
\clearpage

\section{Model checking}
\label{sec:checking}

In this section, we perform another simulation to demonstrate our proposed procedure for checking whether the TRENDS model is appropriate in analyses lacking prior domain knowledge about the effects of interest.  
Samples are generated from one of the following choices of the underlying trending distribution sequence $Q_1, \dots, Q_L$ (with $L=7$): 
\begin{enumerate} \setlength\itemsep{0em}
\item[(R$_1$)] $Q_\ell \sim N(0, 1)$  \ for  $\ell = 1,\dots, 7$.
\item[(R$_2$)] $Q_\ell \sim N(\mu_\ell, 1)$ with $\mu_\ell = 0, 0.1, 0.1, 0.2, 0.5, 0.9, 1$ \ for  $\ell = 1,\dots, 7$.
\item[(R$_3$)] $Q_\ell  \sim N(\mu_\ell, 1)$ with $\mu_\ell = 0, 0.1, 0.3, 0.5, 0.4, 0.2, 0$ \ for  $\ell = 1,\dots, 7$.
\end{enumerate}
Note that the underlying sequence of distributions for R$_3$ severely violates our trend condition. 
Under each of these models, observed values for the $i$th batch is generated according to $x_{i,s} = \widetilde{x}_{i,s} +  z_i$ where $\widetilde{x}_{i,s} \overset{iid}{\sim} Q_{\ell_i}$, and we independently draw a single noise-variable (i.e.\ batch-effect) $z_i \sim N(0, \sigma^2)$ for the entire batch.

For each quantile $p \in (0,1)$ used in our TRENDS-fit, we compute the value of the empirical residual function $\widehat{\mathcal{E}}_i(p) = \widehat{F}^{-1}_i(p) -  \widehat{G}^{-1}_{\ell_i}(p)$, where $\widehat{F}^{-1}_i$ denotes the empirical quantiles of the distribution for the $i$th batch (estimated from $\{x_{i,s}\}_{s=1}^{n_i} \sim P_{i}$) 
 and $\widehat{G}^{-1}_{\ell_i}$ denote the fitted quantiles produced by the TF algorithm applied the data (corresponding to inferred trending distributions  $Q_{\ell_i}$).  Figure \ref{fig:violator} depicts a diagnostic plot showing the distribution of $\widehat{\mathcal{E}}_i(p)$ vs.\ $\ell$ when TRENDS is fit to data from each of these models.   Based on the clear pattern displayed by the residuals in the R$_3$ plot, one can easily correctly conclude that the TRENDS model is not very appropriate for this dataset.  In contrast, the residual functions appear random for data from the other two underlying settings (which meet our TRENDS assumptions).  

\begin{figure}[h!] \centering
\begin{tabular}{ccc} 
(A) R$_1$  & (B) R$_2$ & (C) R$_3$
\\
\includegraphics[width=0.33\textwidth]{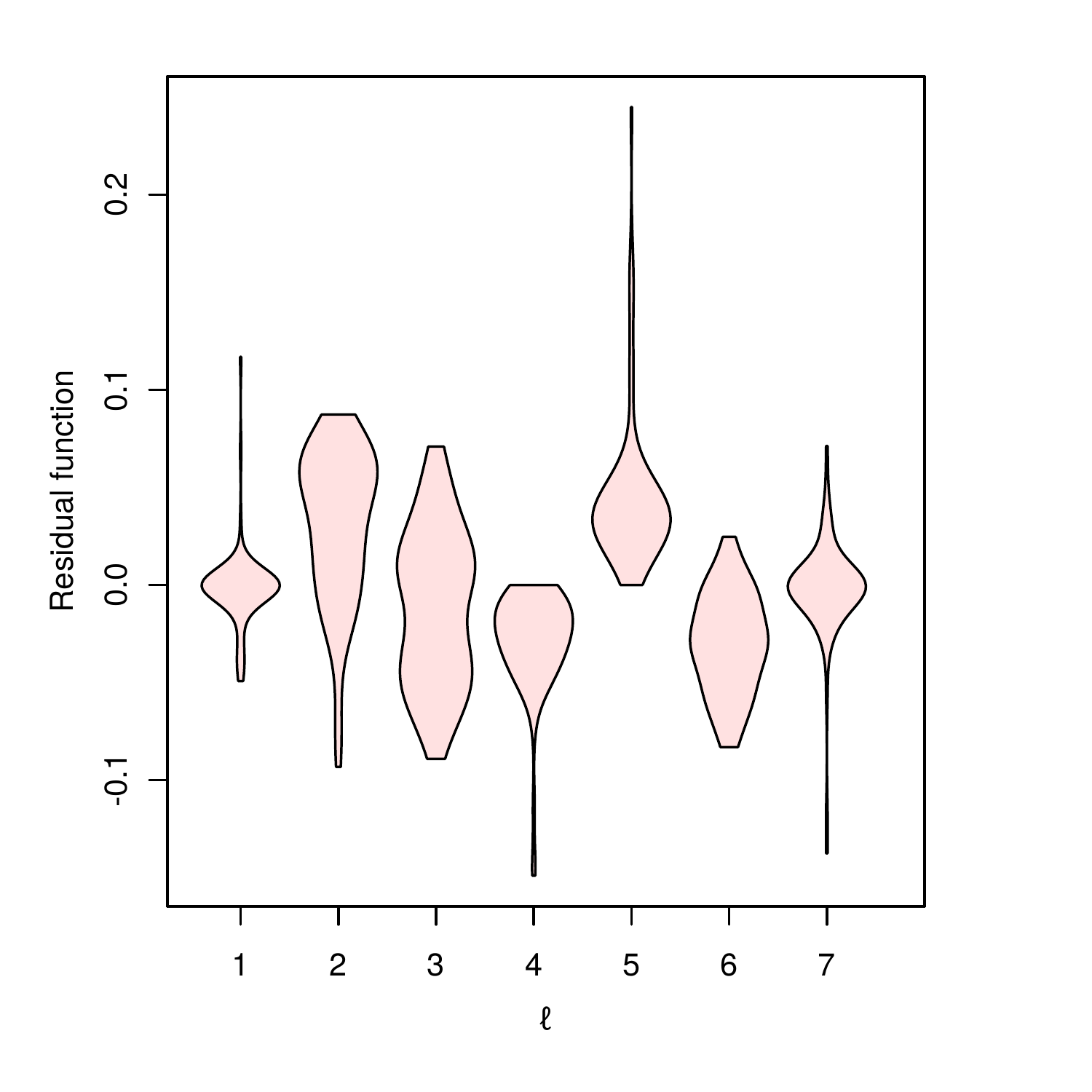}
& \includegraphics[width=0.33\textwidth]{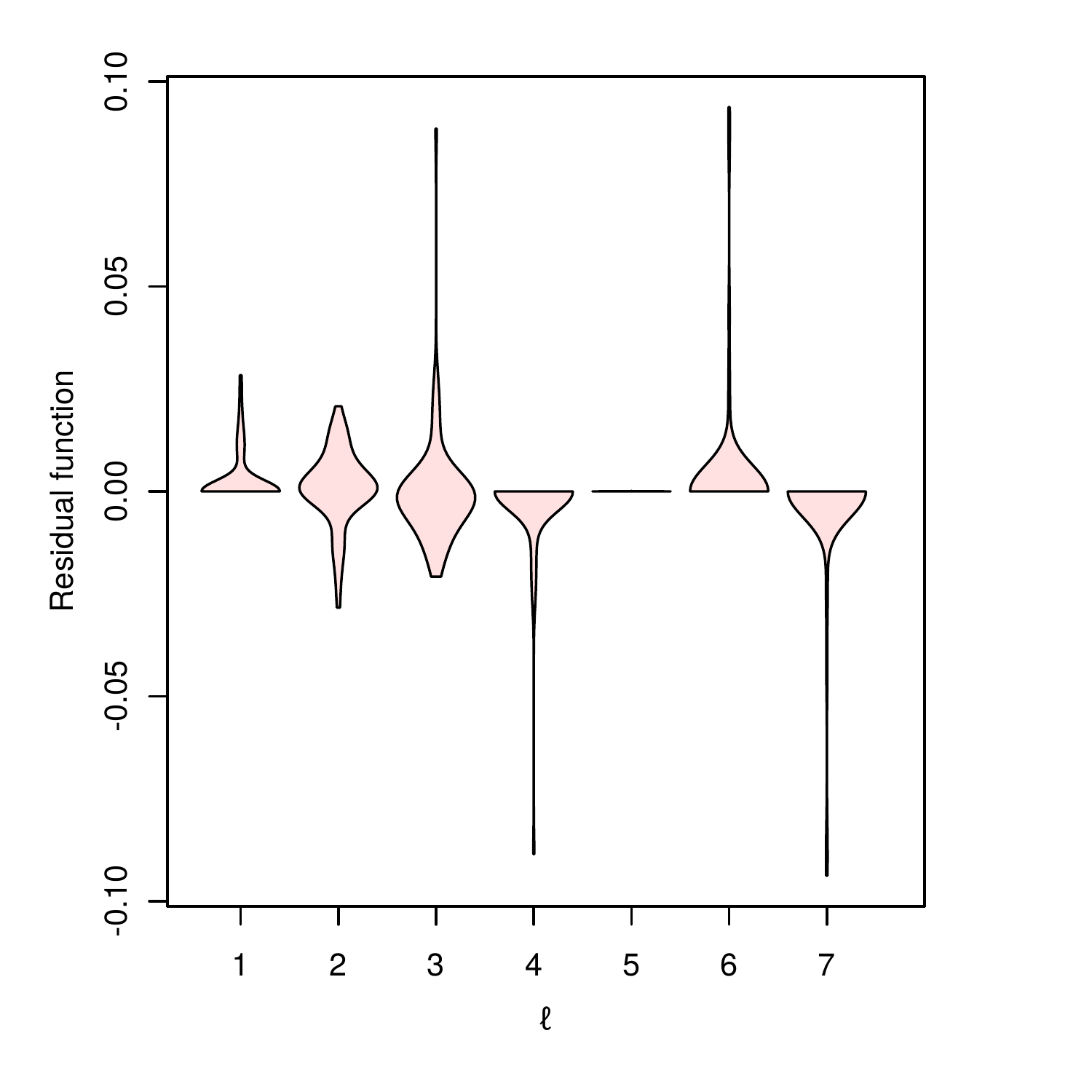}
& \includegraphics[width=0.33\textwidth]{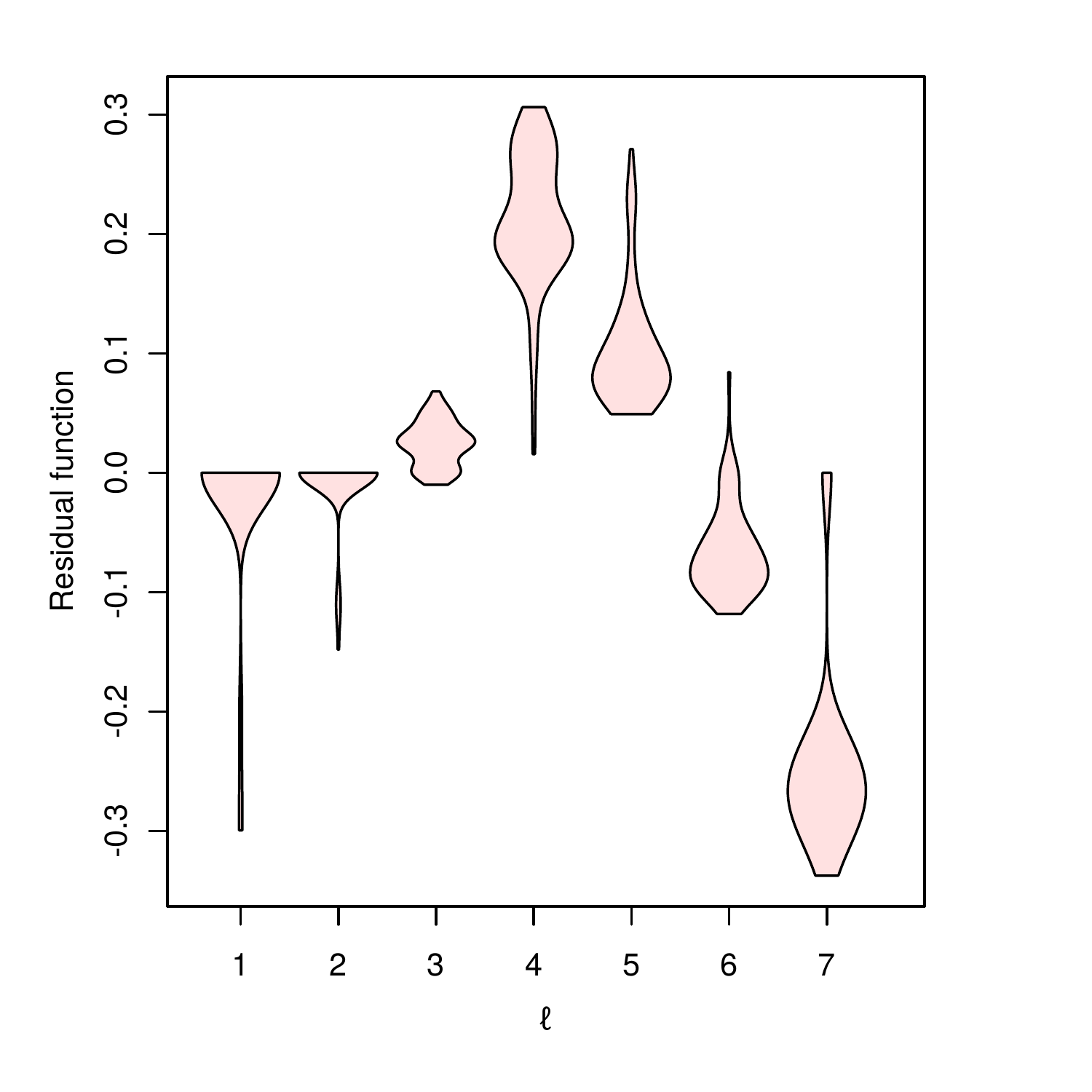}
\end{tabular} \vspace*{-2mm}
\caption{Diagnostic plot of the residual functions $\widehat{\mathcal{E}}_i(p)$ when TRENDS is fit to data from each underlying setting R$_1$, R$_2$, R$_3$ ($N_\ell = 1, n_i = 1000, \sigma = 0.1$).  For each batch $i$, the plot depicts a kernel density estimate of the values taken by  $\widehat{\mathcal{E}}_i(p)$ over $p=0.01,0.02,\dots,0.99$. } 
\label{fig:violator}
\end{figure} 

Under this simulation, we can evaluate the performance of our TRENDS estimates of misspecified effects.   
Motivated by our $\Delta$ statistic and Lemma \ref{trendl1}, we employ the $L_1$ Wasserstein distance to define the true overall sequential-progression effect in this simulation as $\Delta_{\text{true}} = \sum_{\ell =2}^{L} d_{L_1}(Q_{\ell-1}, Q_{\ell})$, which is simply 1 for setting R$_3$.  When all $N_\ell =1$ (one batch per level), we can simply incorporate the Wasserstein distances between adjacent observed empirical distributions $\Delta_{\text{emp}} = \sum_{\ell =2}^{L} d_{L_1}(P_{\ell-1}, P_{\ell})$ as a basic estimate of $\Delta_{\text{true}}$.  Note that the batch-effects cause $\Delta_{\text{emp}}$ to have inflated variance beyond random-sampling deviations in the empirical quantile-estimates.  In contrast, the $\Delta_{\text{TRENDS}}$ estimate produced by our TRENDS model is downwardly biased when applied to data from R$_3$, because of our restriction to monotone quantiles.  Even in this misspecified setting, Figure \ref{fig:misspecified} shows that under non-trivial amounts of noise, $\Delta_{\text{TRENDS}}$ remains a far superior estimator of $\Delta_{\text{true}}$ than $\Delta_{\text{emp}}$, which is highly susceptible to variation arising from these  batch-effects.  

\begin{figure}[h!] \centering
\includegraphics[width=0.5\textwidth]{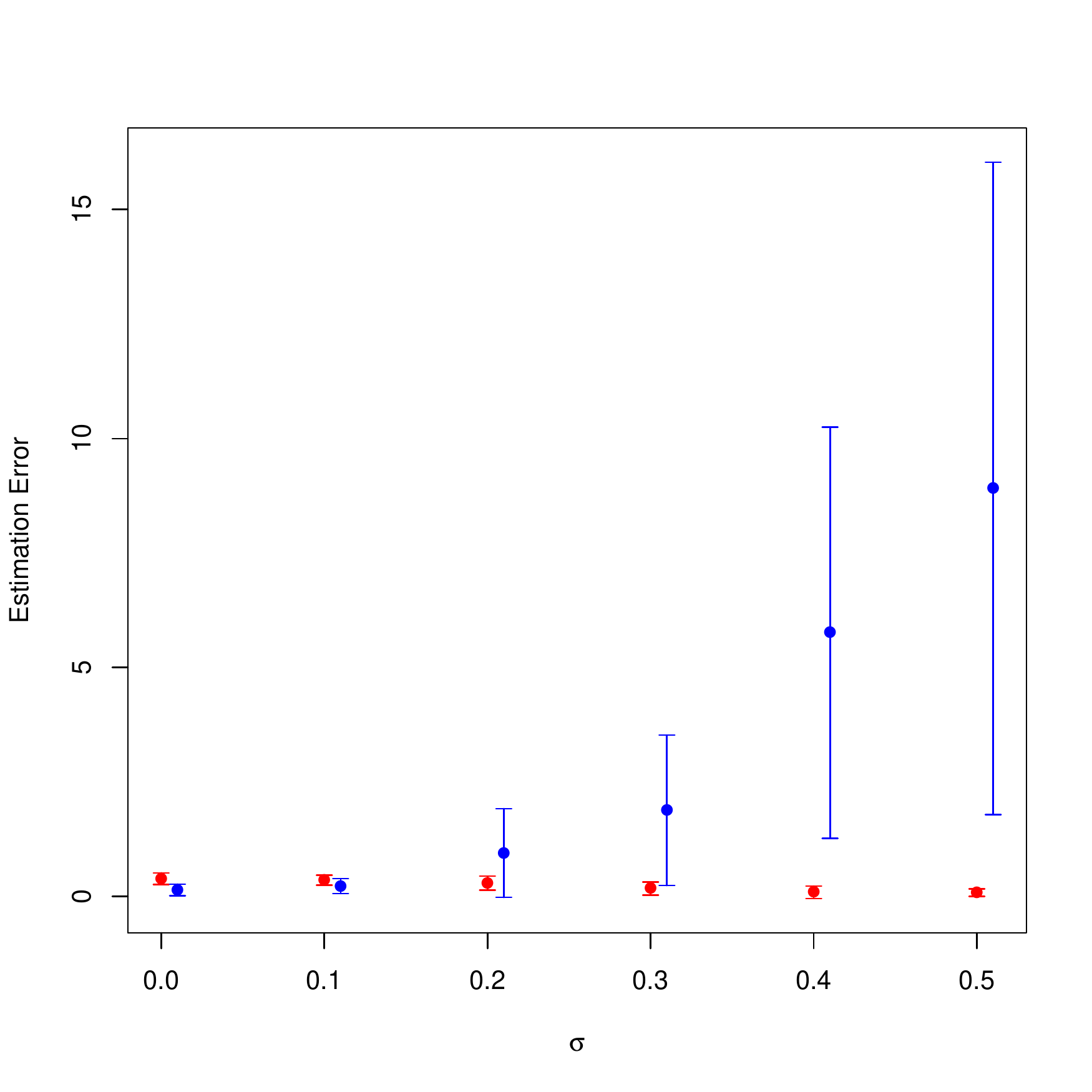}
\caption{The mean/standard-deviation of the squared error of $\Delta_{\text{emp}}$ estimates (blue) and  $\Delta_{\text{TRENDS}}$ estimates (red) over 100 datasets drawn from R$_3$ (under each value of $\sigma$, with $n_i = 100$ for each batch). } 
\label{fig:misspecified}
\end{figure} 

Finally, we investigate the residual functions when TRENDS is fit to the scRNA-seq data from genes known to play a major role in regulating developmental processes.  Figure \ref{fig:genecheck} does not indicate any systematic pattern in the residuals that would suggest our model is inappropriate for these data.

\begin{figure}[h!] \centering
\begin{tabular}{ccc} 
(A) MT2A  & (B) Nestin & (C) TSPYL6 \\
\includegraphics[width=0.33\textwidth]{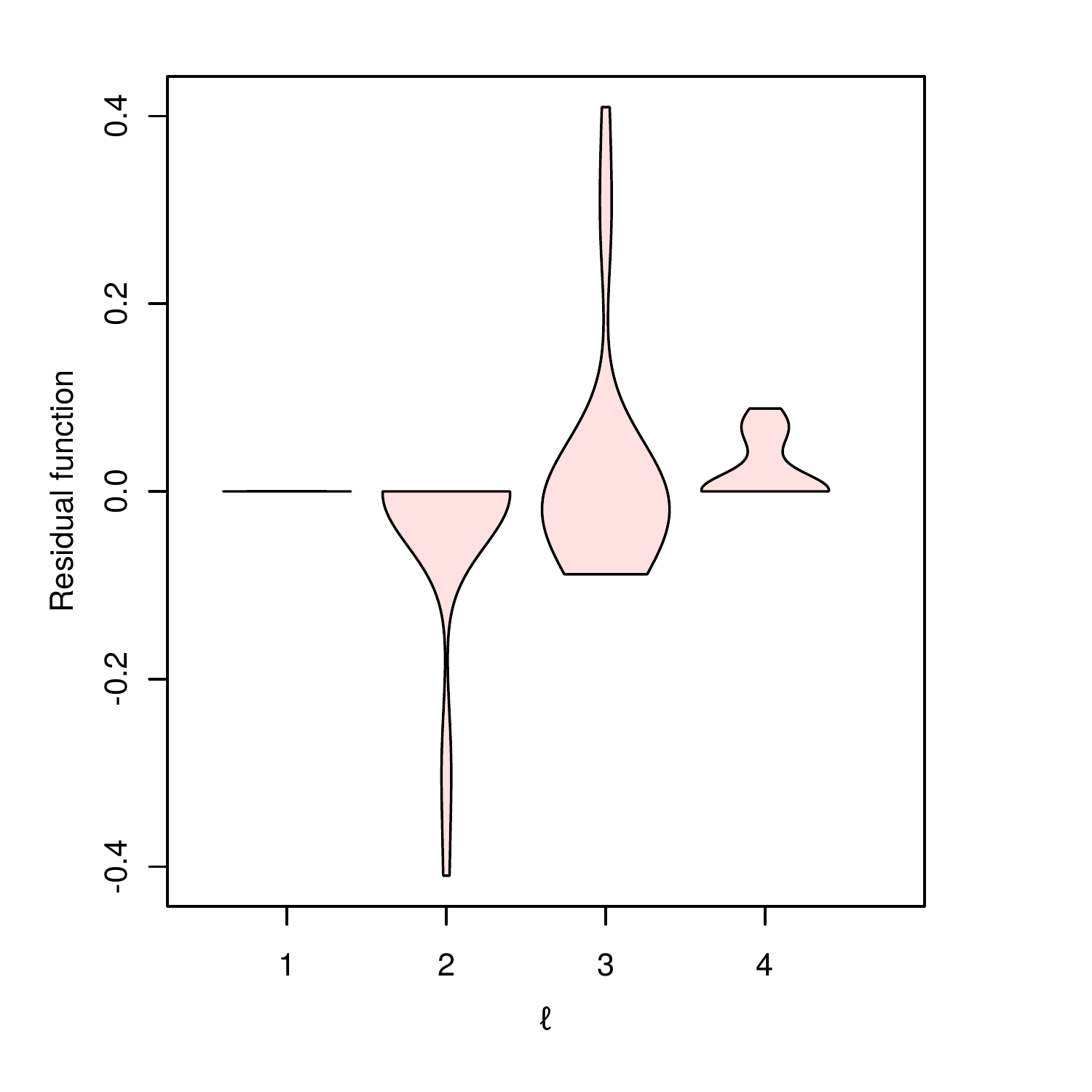} &
\includegraphics[width=0.33\textwidth]{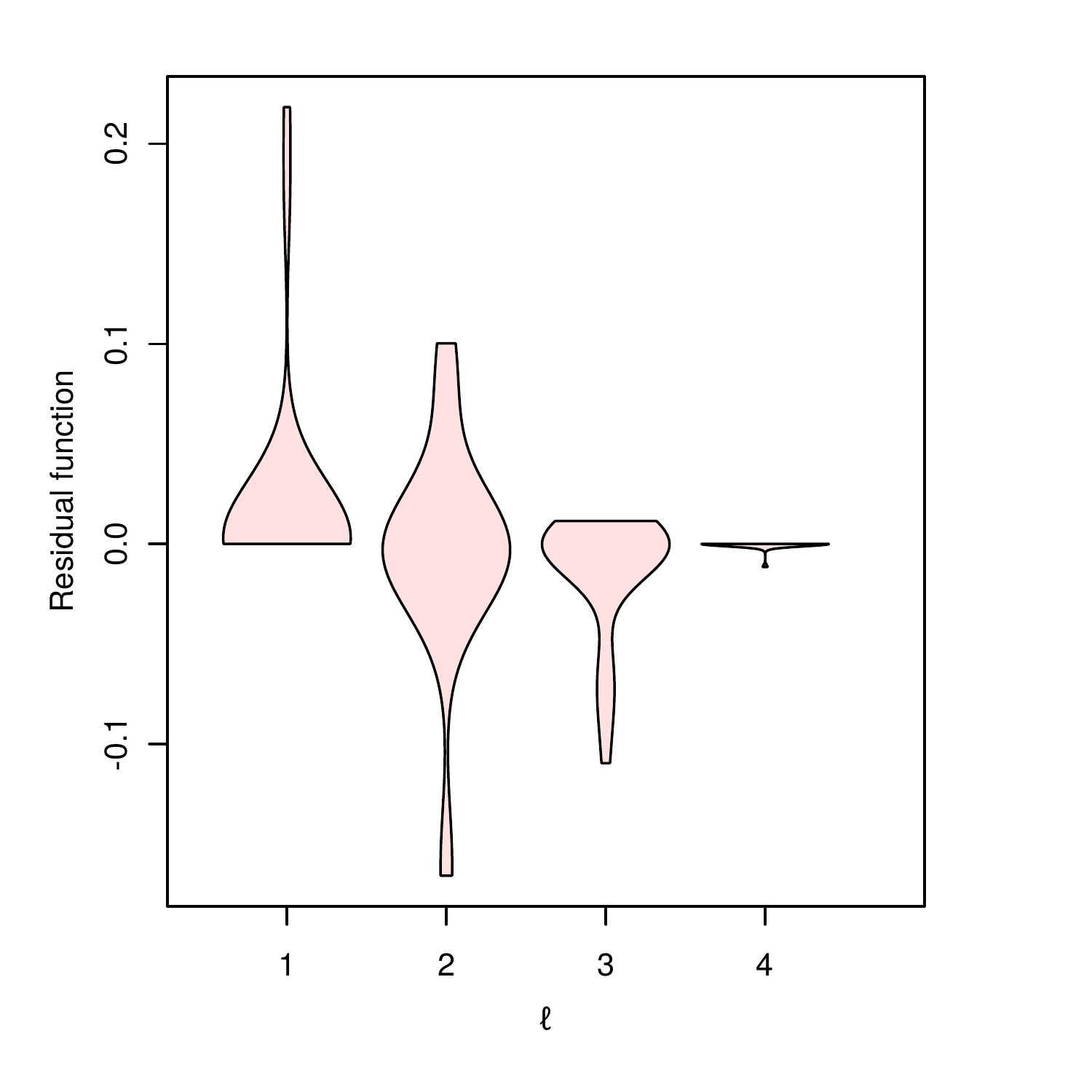} &
\includegraphics[width=0.33\textwidth]{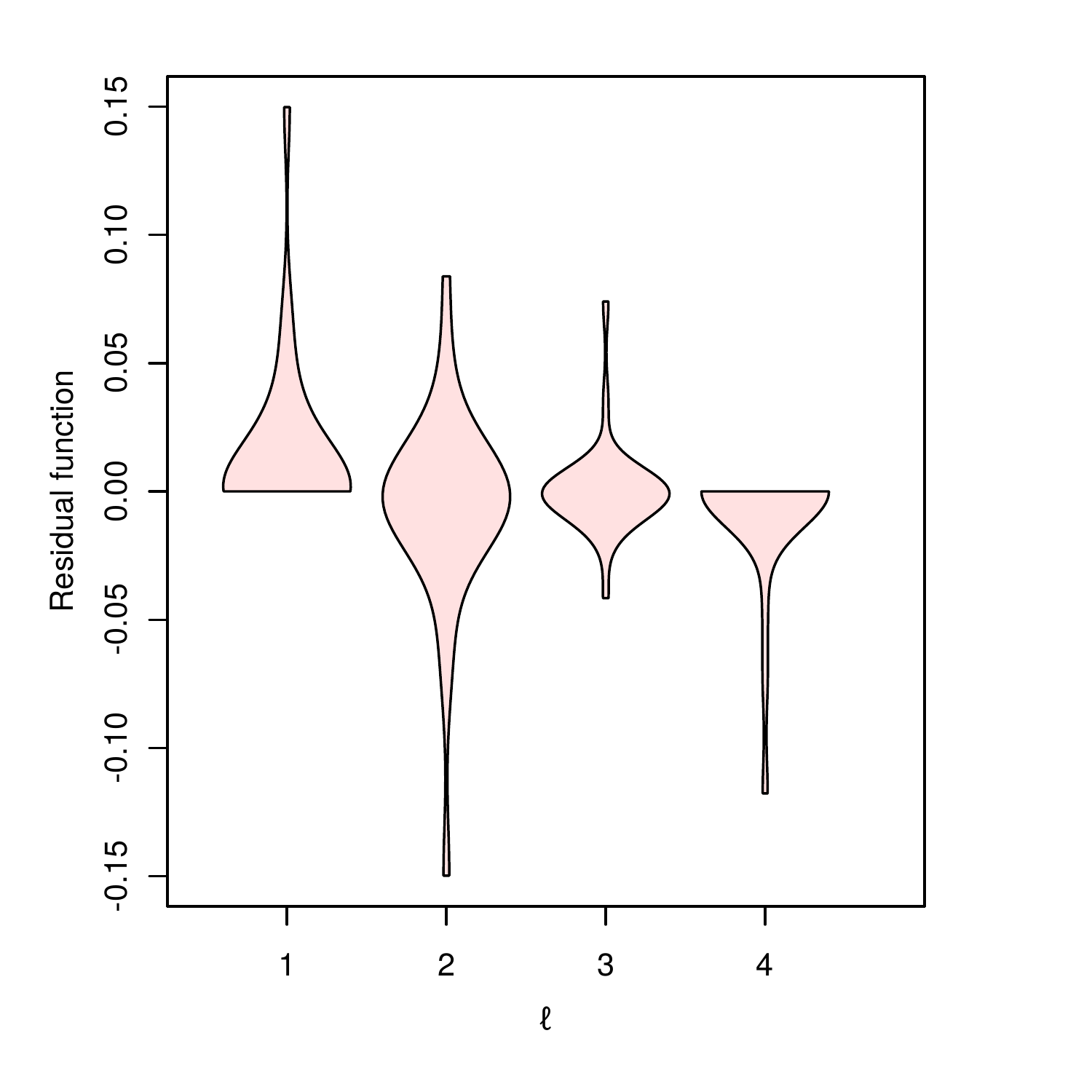}
\end{tabular} \vspace*{-2mm}
\caption{Diagnostic plot of the residual functions $\widehat{\mathcal{E}}_i(p)$ for TRENDS fit to scRNA-seq data from known regulatory genes of myoblast development.  For each batch $i$, the plot depicts a kernel density estimate of the values taken by  $\widehat{\mathcal{E}}_i(p)$ over $p=0.01,0.02,\dots,0.99$.} 
\label{fig:genecheck}
\end{figure} 

\clearpage

\section{ACS income distribution analysis}
\label{sec:acs}

To demonstrate the broader utility of TRENDS beyond scRNA-seq analysis, we present a brief study of incomes in various industries during the years 2007-2013 following the economic recession.  Our goal is to quantify and compare effects across different industries' incomes during this post-recession period.  Rather than measuring ephemeral decline/rebound in this analysis, our interests lie in consistent effects which enduringly altered an industry's incomes through 2013.  American Consensus Survey (ACS) reported income data from 12,020,419 individuals across the USA in the years 2007-2013 were obtained from the Integrated Public Use Microdata Series \citepsi{Ruggle2010}.  After filtering out individuals with missing or \$1 and under  reported income, the data consists of 257 industries from which at least 100 people were surveyed in each of the years under consideration.   We fit TRENDS to the data from each industry separately, treating the observations from each year as a single batch and year-index in this time series as the label ($\ell= 1,\dots, 7$).  

\begin{table}[ht]
\centering {\footnotesize
\begin{tabular}{lrrrr}
  \hline
 Industry & $R^2$ & $p$-value & $\Delta$ \\ 
  \hline
 \textcolor{magenta}{Other information services} & 0.97 & 0.02 & 5465  \\ 
 \textcolor{magenta}{Software publishers} & 0.78 & 0.10  & 2991 \\ 
 \textcolor{magenta}{Electronic auctions} & 0.86 & 0.04  & 2584 \\ 
\textcolor{blue}{Oil and gas extraction } & 0.78 & 0.12  & 2454 \\ 
\textcolor{blue}{Miscellaneous petroleum and coal products} & 0.52  & 0.38 & 2415 \\ 
 \textcolor{magenta}{Other telecommunication services} & 0.80 & 0.07 & 2414 \\ 
\textcolor{red}{Pharmaceutical and medicine manufacturing} & 0.98 & 0.04 & 2220 \\ 
\textcolor{green}{Management of companies and enterprises} & 0.66 & 0.12 & 2194 \\ 
Metal ore mining & 0.89  & 0.02 & 2074 \\ 
\textcolor{blue}{Support activities for mining} & 0.88 & 0.03 & 1915 \\ 
 \textcolor{blue}{Electric and gas, and other combinations} & 0.82  & 0.03 & 1910 \\ 
\textcolor{green}{Non-depository credit and related activities} & 0.92 & 0.06   & 1860 \\ 
Sound recording industries & 0.51 & 0.38 & 1731 \\ 
\textcolor{red}{Electronic component and product manufacturing} & 0.99 & 0.02 & 1719  \\ 
\textcolor{green}{Securities, commodities, funds, trusts, and other financial investments} & 0.57 & 0.23 & 1665  \\ 
\textcolor{red}{Agricultural chemical manufacturing} & 0.77 & 0.09 & 1635 \\ 
\textcolor{red}{Communications, and audio and video equipment manufacturing} & 0.72  & 0.09 & 1628 \\ 
\textcolor{blue}{Pipeline transportation} & 0.70  & 0.14 & 1620 \\ 
\textcolor{blue}{Coal mining} & 0.90 & 0.04  & 1573 \\ 
\textcolor{blue}{Natural gas distribution} & 0.69  & 0.11 & 1546 \\ 
   \hline
\end{tabular}}
\caption{The 20 industries with annual incomes most affected by temporal progression from 2007-2013 (as inferred by TRENDS).  Broader sectors are:  manufacturing (red), business/finance (green), energy (blue), technology (magenta).}
\label{IPUMStop20}
\end{table}

\begin{figure}[h!] \centering
\includegraphics[width=0.8\textwidth]{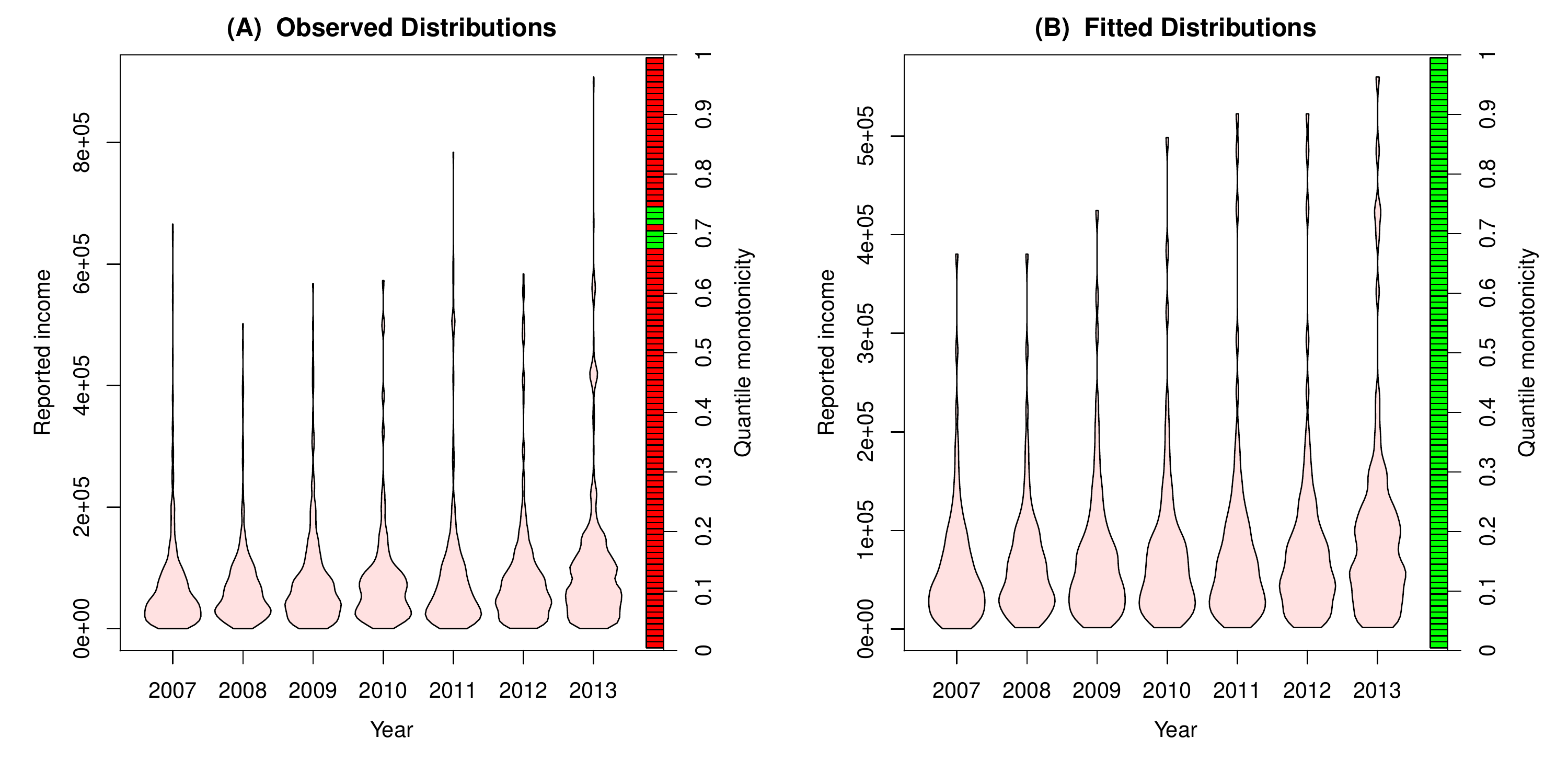}
\caption{Distributions of reported income of individuals in the ``other information services'' industry. (A) kernel density estimates applied to the ACS survey results from each year (B) corresponding TRENDS fitted distributions.}
\label{informationincomes}
\end{figure}

Table \ref{IPUMStop20} lists the industries which according to TRENDS are subject to the largest trending temporal effects in income distribution over this post-recession period.  The table contains numerous industries from the business/financial and manufacturing sectors, which were known to be particularly affected by the recession.  Interestingly, many industries from the energy sector are also included in the table\footnote{Reflecting the enactment of the Energy Independence and Security Act of 2007, which sought to move the U.S. toward greater energy efficiency and reduce reliance on imported oil.}.  The other industries in which income distributions were subject to the largest  temporal progression effects are predominantly technology-related, representing the continued growth in incomes in this sector, which has been unaffected by the recession.  

Of particular note is the ``other information services'' industry (includes web search, internet publishing/broadcasting), where we observe the emergence of a distinct subgroup with reported incomes in the hundreds of thousands.  While a few of the extreme reported incomes fell from 07-08, TRENDS conservatively estimates the underlying effects as consistently increasing all quantiles rather than including this change in $\Delta$ (such extrema are highly-variable, even at our large sample size).  For reference, the average reported incomes of this industry in 2007-13 were: \$65.8k, \$66.6k, \$77.9k, \$78.7k, \$82.1k, \$84k.

\clearpage

\section{Proofs and auxiliary lemmas}
\label{sec:proofs}
\begin{singlespace}
\addtocontents{toc}{\protect\setcounter{tocdepth}{1}}

\subsection{Proof of Lemma \ref{frechet}}
\begin{proof} Given any $G^{-1} \in \mathcal{Q}$, we can define function $H : [0,1] \rightarrow \mathbb{R}$ such that $ \displaystyle G^{-1} \equiv H +  \frac{1}{N} \sum_{i=1}^N F^{-1}_{i}$.  We have:
 \begin{align*}
 & \sum_{j=1}^N \int_0^1 \left( F^{-1}_{j}(p) -  G^{-1}(p) \right)^2 \mathrm{d}p  \\
 = &  \int_0^1 \sum_{j=1}^N \left( F^{-1}_{j}(p) - H(p) - \frac{1}{N} \sum_{i=1}^N F^{-1}_{i}(p) \right)^2 \mathrm{d}p \\ 
 \ge &  \int_0^1 \sum_{j=1}^N \left( F^{-1}_{j}(p) - \frac{1}{N} \sum_{i=1}^N F^{-1}_{i}(p) \right)^2 \mathrm{d}p \\
 & \text{ regardless of the value taken by $H(p)$ for each $p \in [0, 1]$}
 \end{align*} 
 \end{proof}
 
 \subsection{Proof of Lemma \ref{trendl1}}
\begin{proof} For any  $i < j \in \{1,\dots, L\}$:
\begin{align*}
& d_{L_1}(P_i, P_j) = \int_0^1 \left| F_i^{-1}(p) -   F_j^{-1}(p) \right| dp = \int_0^1 \sum_{\ell=i+1}^j \left| F_{\ell}^{-1}(p) -   F_{\ell-1}^{-1}(p) \right| dp = \sum_{\ell = i+1}^j d_{L_1}(P_{\ell-1}, P_\ell)   \\
\end{align*} 
where the second equality follows from the fact that $ F_i^{-1}(p), F_{i+1}^{-1}(p) \dots, F_j^{-1}(p) $  \\
is assumed to be monotone for each $p$.
 \end{proof}

\subsection{Proof of Theorem \ref{tfoptimal}}
\begin{proof}  We have: 
 \begin{align*}
 & \argmin_{G_1^{-1}, \dots,G_L^{-1}}  \bigg\{ \sum_{\ell = 1}^L  \sum_{i \in I_\ell} w_i  \sum_{k=1}^{P-1} \left( \widehat{F}^{-1}_i (p_k) - G^{-1}_\ell(p_k) \right)^2 \left[ \frac{p_{k+1} - p_{k-1}}{2} \right] \bigg\}  \\
 & \hspace*{60mm} \ \text{ where $G_1,\dots, G_L$ follow a trend} 
 \\
\equiv & \argmin_{v^{(1)}, \dots, v^{(L)}} \bigg\{ \sum_{k=1}^{P-1}  \left( \frac{p_{k+1} - p_{k-1}}{2} \right)  \sum_{\ell = 1}^L  \sum_{i \in I_\ell}  w_i  \left( \widehat{F}^{-1}_\ell (p_k) - v_{ k}^{(\ell)} \right)^2 \bigg\}  \\
&  \hspace*{40mm}  \ \text{ for }  v^{(\ell)} \in \mathbb{R}^{P-1} \text{ with entry }  v^{(\ell)}_k \text{ at $k$th index} \\
&  \text{ s.t.\ } \forall \ k < k' \in \{1,\dots, P-1\} : \left\{
     \begin{array}{l}
        \forall \ell : \ v_{k}^{(\ell)} < v_{ k'}^{(\ell)} \ \ \text{ since } G^{-1}_\ell \text{ must be a valid quantile function} \\
        v_{k}^{(1)},\dots,   v_{k}^{(L)} \text{ is a monotone sequence whose direction $ = \delta[k]$}
     \end{array}
   \right. 
 \\
 &\hspace*{61mm}  \text{ for one of the $\delta$ constructed in Step 6 or 8 of the procedure. }
 \end{align*}
 This is because the set of all $\delta$ considered by the TF algorithm contains every possible increasing/decreasing configuration \\ (mappings from $k \in  \{1,\dots, P-1\} \rightarrow \{ \text{``nonincreasing'', ``nondecreasing''}\}$) whose corresponding quantile-sequence satisfies the second condition of the trend definition.
 \begin{align*} 
  = & \argmin_{v^{(1)}, \dots, v^{(L)}} \bigg\{ \sum_{k=1}^{P-1}   \left( \frac{p_{k+1} - p_{k-1}}{2} \right)    \sum_{\ell = 1}^L   w^*_\ell  \left( \overline{ \widehat{F}}^{-1}_\ell (p_k) - v_{ k}^{(\ell)} \right)^2 \bigg\} \numberthis \label{apopt}\\
&  \hspace*{5mm} \text{ s.t.\ } \forall \ k < k' \in \{1,\dots, P-1\} : \left\{
     \begin{array}{l}
        \forall \ell : \ v_{k}^{(\ell)} < v_{ k'}^{(\ell)} \ \ \text{ since } G^{-1}_\ell \text{ must be a valid quantile function} \\
        v_{k}^{(1)},\dots,   v_{k}^{(L)} \text{ is a monotone sequence whose direction $ = \delta[k]$} 
     \end{array}
   \right. \\
 & \text{where we defined } \ w^*_\ell := \sum_{i \in I_\ell} w_i \ , \  \overline{ \widehat{F}}^{-1}_\ell (p) := \frac{1}{w^*_\ell} \sum_{i \in I_\ell} w_i \hspace*{0.4mm} \widehat{F}_i^{-1}(p_k) \\
\end{align*}
We will now show that for any $\delta$ constructed in Step 6 or 8, the corresponding $y_\ell$ produced by the AlternatingProjections algorithm are the optimal valid quantile-functions if we impose the additional constraint that for any $k$, the $p_k$th quantile-sequence must be increasing/decreasing as specified by $\delta[k]$.   Establishing this fact completes the proof because the trends-condition is simply the union of $2P$ such constraints, each of which is tested by the TF procedure.  Therefore, one of corresponding $y_1, \dots, y_L$ sequences must be the global minimum.  

Having fixed an increasing/decreasing configuration $\delta$, let $\mathcal{H}$ denote the Hilbert space of all $L \times (P-1)$ matrices, and  $\mathcal{X}$ be the vector-space of all sequences (a.k.a.\  $L \times (P-1)$ matrices) $[v^{(1)}, \dots, v^{(L)}]$ s.t. $\forall \ell \in \{1,\dots,L\}, k \in \{1,\dots, P-1\}: $ $v^{(\ell)} \in \mathbb{R}^{P-1}$  and $v^{(\ell)}_1 , \dots , v^{(\ell)}_{P-1}$ is a nondecreasing sequence.  Similarly, define $\mathcal{Y}$ to be the vector-space of all sequences $ [v^{(1)}, \dots, v^{(L)}]$ s.t. $\forall \ell , k: $ $v^{(\ell)} \in \mathbb{R}^{P-1}$  and $v^{(1)}_k , \dots , v^{(L)}_{k}$ is a monotone sequence which is increasing if and only if $\delta[k]$ specifies it.  Finally, we also define the following metric over these sequences
\begin{equation}
d_W \left( [v^{(1)}, \dots, v^{(L)}] , [w^{(1)}, \dots, w^{(L)}]  \right) = \sum_{k=1}^{P-1}   \left( \frac{p_{k+1} - p_{k-1}}{2} \right)   \sum_{\ell = 1}^L w^*_\ell  \left(v^{(\ell)}_k - w^{(\ell)}_k \right)^2
\end{equation}
Lemmas \ref{xyproj} and \ref{yxproj} show that our AlternatingProjections algorithm is equivalent to Dykstra's method of alternating projections  \citepsi{Boyle1986} between $\mathcal{X}$ and $\mathcal{Y}$ under metric $d_W$.  \\
Furthermore, both $\mathcal{X}$ and $\mathcal{Y}$ are closed and convex, and the initial point (i.e.\ sequence) $\left[x^{(1)}, \dots, x^{(L)} \right] $ must lie in $\mathcal{X}$ because $\forall \ell, k: $ the TF algorithm initializes  $x^{(\ell)}$ as a (weighted) average of valid quantile-functions (assuming the quantile-estimators do not produce invalid quantile-functions), and thus itself must be nondecreasing in $k$. \\
 Therefore, we can apply the celebrated result stated in \citesi{Combettes2011, Boyle1986} which implies that Dykstra's  algorithm must converge to the projection of the initial-sequence onto ${\mathcal{X}} \cap {\mathcal{Y}}$.   \\
By construction, this projection (under metric $d_W$) exactly corresponds to the solution of the constrained optimization in (\ref{numericaltrends}) under the additional constraint imposed by $\delta$.
\end{proof} 

\begin{lem}[\citesi{DeLeeuw1977}]
Given weights $w_1, \dots, w_N \ge 0$ and pairs $(\ell_1, y_1),\dots, (\ell_N, y_N)$ where each $\ell \in \{1,\dots, L\}$ appears at least once, the fitted values $\widehat{y}_1,\dots, \widehat{y}_L$ produced by tertiary-variant of PAVA are guaranteed to be the best-fitting nondecreasing sequence in the least-squares sense, i.e.\
\begin{equation*} \vspace*{-3mm}
\widehat{y}_1,\dots, \widehat{y}_L = \arg \min_{z_1\le \dots \le z_L} \ \sum_{\ell=1}^L \sum_{i \in I_\ell} w_i (z_\ell - y_i)^2   
\end{equation*}
\label{pavafact}
\end{lem}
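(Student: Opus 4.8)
The plan is to reduce the tied problem to a standard one-observation-per-level weighted isotonic regression and then to verify that weighted PAVA solves the latter by checking its first-order optimality conditions. First I would dispose of the predictor-ties via the bias--variance decomposition. For any fixed scalar $z_\ell$, writing $\bar y_\ell := (w_\ell^*)^{-1}\sum_{i \in I_\ell} w_i y_i$ with $w_\ell^* := \sum_{i \in I_\ell} w_i$, one has
\[ \sum_{i \in I_\ell} w_i (z_\ell - y_i)^2 = w_\ell^* (z_\ell - \bar y_\ell)^2 + \sum_{i \in I_\ell} w_i (y_i - \bar y_\ell)^2 . \]
The last term is independent of $z_\ell$, so summing over $\ell$ shows that minimizing the stated objective over $z_1 \le \dots \le z_L$ is equivalent to minimizing $\sum_{\ell=1}^L w_\ell^* (z_\ell - \bar y_\ell)^2$ subject to the same ordering constraint. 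This is precisely the weighted isotonic regression solved by the tertiary variant of PAVA, which collapses each level to the pooled value $\bar y_\ell$ carrying weight $w_\ell^*$. Because each level appears at least once, every $w_\ell^* > 0$.

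Next I would establish existence, uniqueness, and a usable optimality characterization. The objective $f(z) := \sum_\ell w_\ell^*(z_\ell - \bar y_\ell)^2$ is strictly convex (all $w_\ell^* > 0$) and the feasible cone $C = \{z : z_1 \le \dots \le z_L\}$ is closed and convex, so a unique minimizer $z^*$ exists. Introducing multipliers $\mu_\ell \ge 0$ for the constraints $z_\ell \le z_{\ell+1}$, KKT stationarity reads $2w_\ell^*(z^*_\ell - \bar y_\ell) = \mu_{\ell-1} - \mu_\ell$ with $\mu_0 = \mu_L = 0$; summing from $1$ to $k$ gives $\mu_k = -2S_k$ where $S_k := \sum_{\ell=1}^k w_\ell^*(z^*_\ell - \bar y_\ell)$. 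Hence $z^*$ is the minimizer if and only if it is nondecreasing, the partial sums satisfy $S_k \le 0$ for all $k$ with $S_L = 0$ (dual feasibility), and complementary slackness holds, i.e.\ $S_k < 0 \Rightarrow z^*_k = z^*_{k+1}$.

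Finally I would verify that the PAVA output meets this characterization, cleanest through the cumulative-sum diagram. PAVA partitions $\{1,\dots,L\}$ into consecutive blocks and assigns each block its weighted average, a construction identical to reading off the slopes of the greatest convex minorant of the points $\bigl(\sum_{\ell \le k} w_\ell^*,\ \sum_{\ell\le k} w_\ell^* \bar y_\ell\bigr)$. On each block the fitted value equals the block average, so within-block residuals sum to zero and $S_k = 0$ at every block boundary, which also accommodates the strict increase of consecutive block values. The convex-minorant property forces every prefix average inside a block to dominate the block average, yielding $S_k \le 0$ in the block interior and hence dual feasibility; complementary slackness is immediate since $S_k < 0$ can occur only strictly inside a block, where $z^*_k = z^*_{k+1}$ by construction.

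The main obstacle is the dual-feasibility step, namely that PAVA's pooling produces blocks in which no prefix average falls below the block average. The crux is an invariant argument: each block is created only by back-averaging a newly violating point with its predecessors, and one must show this operation never destroys the convex-minorant structure already established for earlier blocks --- equivalently, that pooling an adjacent violating pair is lossless for the optimization. I would prove this by showing a single back-average step carries the optimal configuration of the processed prefix to the optimal configuration of the extended prefix, so that by induction on the number of processed levels the terminal block structure realizes the global minimizer; the $O(L)$ runtime then follows since at most $L-1$ merges occur.
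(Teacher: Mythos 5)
The paper offers no proof of this lemma at all: it is stated as an imported result, attributed to \citesi{DeLeeuw1977} (whose note establishes the correctness of Kruskal's tertiary approach to ties), and is then used as a black box in the proofs of Lemmas \ref{xyproj} and \ref{yxproj}. So there is no internal proof to compare against; the question is whether your self-contained argument is sound, and in outline it is, following the classical Barlow--Brunk route. Your within-level decomposition $\sum_{i\in I_\ell} w_i(z_\ell-y_i)^2 = w_\ell^*(z_\ell-\bar y_\ell)^2 + \sum_{i\in I_\ell} w_i(y_i-\bar y_\ell)^2$ is exactly the content of the tertiary-approach reduction that De Leeuw formalizes, and your KKT characterization is correct: stationarity gives $\mu_k=-2S_k$, so optimality of a feasible $z^*$ is equivalent to $S_k\le 0$ for all $k$, $S_L=0$, and $S_k<0\Rightarrow z^*_k=z^*_{k+1}$, which is necessary and sufficient here because the problem is a convex quadratic program over a polyhedral cone. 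The greatest-convex-minorant verification of these conditions is likewise the standard certificate.

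Two caveats. First, the lemma only assumes $w_i\ge 0$, so ``each level appears at least once'' does not give $w_\ell^*>0$; a level whose observations all carry zero weight makes the objective independent of $z_\ell$, strict convexity and uniqueness fail, and the displayed argmin is a set. You should either assume strictly positive weights (the regime in which the paper actually uses the lemma, since there $w^*_\ell=N_\ell\ge 1$ or a sum of positive $w_i$) or handle degenerate levels by convention. Second, and more substantively, your step asserting that PAVA's blocks coincide with the slopes of the greatest convex minorant of the cumulative-sum diagram is itself equivalent to the correctness claim, so as written the argument is circular until the invariant in your closing paragraph is actually carried out. To your credit you identify this as the crux, and the invariant you propose is the right one and is true: after each back-averaging pass, the current fitted values are the exact isotonic regression of the processed prefix. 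The no-violation case is immediate (appending $z_k=\bar y_k$ adds zero residual while preserving feasibility, so prefix-optimal plus zero is optimal), and the violation case needs the usual pooling lemma --- if $\bar y_k$ violates the last fitted block value, then every optimal fit of the extended prefix assigns them a common value, so merging the violator into the last block with combined weight is lossless, and at most $L-1$ merges occur. Until that lemma is proved rather than announced, the argument has a genuine hole, but it is a hole with a standard, workable patch rather than a flaw in the approach.
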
 
 
\begin{lem} Recall the definitions from the TF algorithm and the proof of Theorem \ref{tfoptimal}.  Given any $[x^{(1)}, \dots, x^{(L)}] \in \mathcal{X}$, its projection onto $\mathcal{Y}$ under metric $d_W$, $[y^{(1)}, \dots, y^{(L)}]$, may be computed  $\forall k \in \{1, \dots, P-1\}$ as
$$  y^{(1)}_k, \dots, y^{(L)}_k  = {\normalfont \textbf{PAVA}}\left( (x^{(1)}_k, w*_1), \dots, (x^{(L)}_k , w^*_L) ;  \delta[k] \right) 
$$
\label{xyproj}
\end{lem}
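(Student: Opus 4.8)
The plan is to exploit the fact that both the metric $d_W$ and the constraint set $\mathcal{Y}$ \emph{separate} across the quantile index $k$, thereby reducing the projection to $P-1$ independent weighted isotonic regressions, each of which is solved exactly by PAVA via Lemma \ref{pavafact}.

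First I would write the squared $d_W$-distance from the given $[x^{(1)},\dots,x^{(L)}]$ to an arbitrary candidate $[y^{(1)},\dots,y^{(L)}] \in \mathcal{Y}$ as
\[
d_W\!\left([x^{(\ell)}],[y^{(\ell)}]\right) = \sum_{k=1}^{P-1} c_k \sum_{\ell=1}^L w^*_\ell \left(x^{(\ell)}_k - y^{(\ell)}_k\right)^2, \qquad c_k := \frac{p_{k+1}-p_{k-1}}{2},
\]
and observe that this is a sum of $P-1$ terms, the $k$th of which depends only on the $k$th column $(y^{(1)}_k,\dots,y^{(L)}_k)$. The crucial point is that membership in $\mathcal{Y}$ is likewise a per-column condition: by definition $[y^{(\ell)}] \in \mathcal{Y}$ if and only if, for every $k$, the sequence $y^{(1)}_k,\dots,y^{(L)}_k$ is monotone in the direction prescribed by $\delta[k]$, with no constraint coupling distinct values of $k$. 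Since neither the objective nor the constraint links different columns, minimizing the total distance over $\mathcal{Y}$ decomposes into $P-1$ independent subproblems, the $k$th being to minimize $c_k \sum_\ell w^*_\ell (x^{(\ell)}_k - y^{(\ell)}_k)^2$ subject to $\delta[k]$-monotonicity of the $k$th column.

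For each fixed $k$ I would then note that $c_k > 0$: from $p_0 := 2p_1 - p_2 < p_1$ and $p_P := 2p_{P-1}-p_{P-2} > p_{P-1}$ (using $p_1 < p_2$ and $p_{P-2} < p_{P-1}$), every factor $\frac{p_{k+1}-p_{k-1}}{2}$ is strictly positive. Hence $c_k$ is a positive scalar that can be dropped without altering the minimizer, leaving precisely the weighted least-squares monotone-regression problem $\min \sum_\ell w^*_\ell (x^{(\ell)}_k - y^{(\ell)}_k)^2$ subject to $y^{(1)}_k,\dots,y^{(L)}_k$ being nonincreasing or nondecreasing as dictated by $\delta[k]$. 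By Lemma \ref{pavafact}, its unique solution is exactly $\textbf{PAVA}\left((x^{(1)}_k,w^*_1),\dots,(x^{(L)}_k,w^*_L);\delta[k]\right)$, which is the claimed formula. Assembling these column-wise solutions produces a single point of $\mathcal{Y}$ that simultaneously minimizes every term of the separated objective, and therefore minimizes $d_W$ over $\mathcal{Y}$; since $\mathcal{Y}$ is closed and convex and $d_W$ derives from a positive-definite quadratic form (all $w^*_\ell, c_k > 0$), this minimizer is the unique projection.

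The genuinely routine part is the separability bookkeeping, and I expect no substantive obstacle. The only steps deserving care are verifying that every weight $c_k$ is strictly positive (so that it may be discarded from each column's objective) and confirming that, after discarding it, the residual objective is in the exact weighted-squared-error form required to invoke Lemma \ref{pavafact} with weights $w^*_\ell$ and direction $\delta[k]$. In essence the lemma is the statement that a squared-error objective and a constraint that both decouple over the index $k$ can be optimized block-by-block, with PAVA supplying the per-block optimizer.
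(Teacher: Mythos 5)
Your proposal is correct and follows essentially the same route as the paper's proof: both reduce the projection onto $\mathcal{Y}$ to $P-1$ independent per-quantile weighted monotone regressions solved by Lemma \ref{pavafact}, the paper phrasing this as a column-wise inequality against an arbitrary $[z^{(1)},\dots,z^{(L)}] \in \mathcal{Y}$ summed with the nonnegative weights $\frac{p_{k+1}-p_{k-1}}{2}$, while you phrase it as separability of the objective and constraints over $k$. Your added checks (strict positivity of the $c_k$ and uniqueness of the projection) are correct refinements of the same argument.
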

\begin{proof}[Proof of Lemma \ref{xyproj}]
Choose any $[z^{(1)}, \dots, z^{(L)}] \in \mathcal{Y}$.  By consequence of Lemma \ref{pavafact} 
\begin{align*}
& {\normalfont \textbf{PAVA}}\left( (x^{(1)}_k, w^*_1), \dots, (x^{(L)}_k , w^*_L) ; \delta[k] \right) \\
= & \ \argmin_{\text{monotone } \lambda_1,\dots, \lambda_L}\left\{ \sum_{\ell = 1}^L w^*_\ell \left( x^{(\ell)}_k - \lambda_\ell \right)^2 \right\} \ \ \text{ where the $\lambda_\ell$ are only increasing if specified by $\delta[k]$} \\
\implies & \sum_{\ell = 1}^L w^*_\ell \left(y^{(\ell)}_k - x^{(\ell)}_k  \right)^2 \le  \sum_{\ell = 1}^L w^*_\ell \left( z^{(\ell)}_k - x^{(\ell)}_k \right)^2 \ \ \forall k \\
& \hspace*{40mm} \ \text{ since } z^{(1)}_k, \dots, z^{(L)}_k \text{ have monotonicity specified by $\delta$} \\
\implies & \sum_{k=1}^{P-1} \left( \frac{p_{k+1} - p_{k-1}}{2} \right)  \sum_{\ell = 1}^L w^*_\ell \left( y^{(\ell)}_k - x^{(\ell)}_k \right)^2 \le  \sum_{k=1}^{P-1} \left( \frac{p_{k+1} - p_{k-1}}{2} \right)  \sum_{\ell = 1}^L w^*_\ell \left( z^{(\ell)}_k - x^{(\ell)}_k\right)^2 
\end{align*}
\end{proof}

\begin{lem} Recall the definitions from the TF algorithm and the proof of Theorem \ref{tfoptimal}.  Given any $[y^{(1)}, \dots, y^{(L)}] \in \mathcal{Y}$, its projection onto $\mathcal{X}$ under metric $d_W$, $[x^{(1)}, \dots, x^{(L)}]$, may be computed $\forall \ell \in \{1, \dots, L\}$ as
$$  x^{(\ell)}_1, \dots, x^{(\ell)}_{P-1}  = {\normalfont \textbf{PAVA}}\left( \left(y^{(\ell)}_1, \frac{p_{2} - p_0}{2} \right), \dots, \left(y^{(\ell)}_{P-1} , \frac{p_P - p_{P-2}}{2} \right) ; \text{``nondecreasing''} \right)
$$
\label{yxproj}
\end{lem}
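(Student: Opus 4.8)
The plan is to mirror the proof of Lemma \ref{xyproj}, exploiting the structural duality between the two projections: whereas $\mathcal{Y}$ couples the components across the level index $\ell$ (for each fixed quantile $k$) and is separable in $k$, the set $\mathcal{X}$ couples the components across the quantile index $k$ (for each fixed $\ell$) and is separable in $\ell$. Accordingly, the projection onto $\mathcal{X}$ should decompose into $L$ independent isotonic regressions run \emph{over} $k$, which is exactly what the stated PAVA call computes. The one structural difference from Lemma \ref{xyproj} is that the weights entering PAVA are now the quadrature weights $\frac{p_{k+1}-p_{k-1}}{2}$ rather than the level weights $w^*_\ell$.

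First I would rewrite the metric by interchanging the order of summation,
\[
d_W\big([v^{(1)},\dots,v^{(L)}],[y^{(1)},\dots,y^{(L)}]\big) = \sum_{\ell=1}^L w^*_\ell \sum_{k=1}^{P-1} \left(\frac{p_{k+1}-p_{k-1}}{2}\right)\left(v^{(\ell)}_k - y^{(\ell)}_k\right)^2,
\]
so that the objective is a sum of per-level terms. The defining constraint of $\mathcal{X}$ --- that for each $\ell$ the sequence $v^{(\ell)}_1,\dots,v^{(\ell)}_{P-1}$ be nondecreasing in $k$ --- involves only the $\ell$-th block of coordinates and places no coupling between distinct blocks. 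Hence minimizing $d_W(\cdot,[y])$ over $[v]\in\mathcal{X}$ separates into $L$ independent subproblems, one per $\ell$.

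Second, for a fixed $\ell$ the factor $w^*_\ell > 0$ scales the entire $\ell$-th subproblem and therefore does not affect its minimizer, so the subproblem reduces to the weighted isotonic regression
\[
\min_{x^{(\ell)}_1 \le \dots \le x^{(\ell)}_{P-1}} \ \sum_{k=1}^{P-1}\left(\frac{p_{k+1}-p_{k-1}}{2}\right)\left(x^{(\ell)}_k - y^{(\ell)}_k\right)^2,
\]
whose observations are the $y^{(\ell)}_k$ and whose weights are the quadrature weights $\frac{p_{k+1}-p_{k-1}}{2}$ (equal to $\frac{p_2-p_0}{2},\dots,\frac{p_P-p_{P-2}}{2}$ for $k=1,\dots,P-1$). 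By Lemma \ref{pavafact} the minimizer of this problem is returned exactly by \textbf{PAVA} with these weights and the ``nondecreasing'' direction, which is precisely the displayed expression for $x^{(\ell)}$. Assembling the solutions over all $\ell$ yields the projection onto $\mathcal{X}$.

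The only point requiring care --- and it is minor --- is confirming that $\mathcal{X}$ is genuinely separable across $\ell$ and that the positive scalar $w^*_\ell$ drops out of each per-level argmin, so that the weights entering PAVA are the quadrature weights and not $w^*_\ell$; this is exactly the role reversal relative to Lemma \ref{xyproj}. Once this separation is in place, the conclusion is an immediate consequence of the PAVA fact, paralleling the short implication chain used there for $\mathcal{Y}$.
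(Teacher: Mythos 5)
Your proposal is correct and follows essentially the same route as the paper's proof: both decompose the projection onto $\mathcal{X}$ into $L$ independent weighted isotonic regressions over $k$ (the $w^*_\ell$ factor playing no role in each per-level argmin) and invoke Lemma \ref{pavafact} with the quadrature weights $\frac{p_{k+1}-p_{k-1}}{2}$ in the ``nondecreasing'' direction. The paper merely phrases the same separability by comparing the PAVA output against an arbitrary competitor $[z^{(1)},\dots,z^{(L)}] \in \mathcal{X}$ per level and then summing the inequalities weighted by $w^*_\ell$, which is equivalent to your blockwise-argmin argument.
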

\begin{proof}[Proof of Lemma \ref{yxproj}]
Choose any $[z^{(1)}, \dots, z^{(L)}] \in \mathcal{X}$.
By Lemma \ref{pavafact} 
\begin{align*}
& {\normalfont \textbf{PAVA}}\left( \left(y^{(\ell)}_1,  \frac{p_{2} - p_0}{2} \right), \dots, \left(y^{(\ell)}_{P-1} , \frac{p_P - p_{P-2}}{2} \right) ;  \text{``nondecreasing''}  \right) \\
& = \ \argmin_{\lambda_1 \le \dots \le \lambda_{P-1}} \left\{ \sum_{k = 1}^{P-1}  \left( \frac{p_{k+1} - p_{k-1}}{2} \right)   \left( y^{(\ell)}_k - \lambda_k \right)^2 \right\} \ \ \text{ for each } \ell \\
\implies & \sum_{k = 1}^{P-1}  \left( \frac{p_{k+1} - p_{k-1}}{2} \right)   \left(x^{(\ell)}_k -  y^{(\ell)}_k \right)^2 \le  \sum_{k = 1}^{P-1}  \left( \frac{p_{k+1} - p_{k-1}}{2} \right)   \left(z^{(\ell)}_k -  y^{(\ell)}_k \right)^2 \ \ \forall \ell \\
& \hspace*{5mm} \text{ since } [z^{(1)}, \dots, z^{(L)}] \in \mathcal{X} \implies \ \ \forall \ell : z^{(\ell)}_1 \le \dots \le z^{(\ell)}_{P-1} \\
\implies & \sum_{k=1}^{P-1} \left( \frac{p_{k+1} - p_{k-1}}{2} \right)  \sum_{\ell = 1}^L w^*_\ell \left( x^{(\ell)}_k - y^{(\ell)}_k \right)^2 \le  \sum_{k=1}^{P-1} \left( \frac{p_{k+1} - p_{k-1}}{2} \right)  \sum_{\ell = 1}^L w^*_\ell \left( x^{(\ell)}_k - z^{(\ell)}_k\right)^2 
\end{align*}
\end{proof}

\subsection{Proof of Theorem \ref{consistency}}
\begin{proof}
Recalling that $G^{-1}(p)$ denotes the $p$th quantile of  $Q_\ell \equiv f(\ell)$, we also define: 
\begin{equation} \widebar{F}_\ell^{-1}(p) := \frac{1}{N_\ell} \displaystyle \sum_{i \in I_\ell} F^{-1}_i(p)
\label{meanqf}
\end{equation}
By a standard application of the Chernoff bound \citepsi{Vershynin2012, boucheron2013}:
$$ \Pr \left( \left| \widebar{F}^{-1}(p) - G_\ell^{-1}(p) \right| > \eta  \right) =  \Pr \left(\left| \frac{1}{N_\ell} \sum_{i \in I_\ell}\mathcal{E}_i(p) \right| > \eta   \right) \le 2 \exp \left( - \frac{\eta^2 N_\ell}{2 \sigma^2}  \right) \ \forall \eta > 0 
$$
Recall that we compute the Wasserstein integral using $P - 1$ equally-spaced quantiles and the midpoint approximation, so
$$ d \left(\widebar{F}^{-1}_\ell , G^{-1}_\ell \right)^2 \approx d_W \left(\widebar{F}^{-1}_\ell , G^{-1}_\ell \right)^2 = \sum_{k=1}^{P-1} \frac{1}{P} \left(\widebar{F}^{-1}_\ell (k/P) - G^{-1}_\ell (k/P)  \right)^2 
$$
\begin{align*}
\Pr \left( \sum_{\ell = 1}^L d_W \left( \widebar{F}^{-1}_\ell , G^{-1}_\ell \right)^2 > \eta \right)  & \le     \sum_{\ell = 1}^{L} \sum_{k =1}^{P-1}  \Pr \left( \frac{1}{P} \left( \widebar{F}^{-1}_\ell (k/P) - G^{-1}_\ell (k/P) \right)^2 > \frac{\eta}{P L} \right) \\
& \hspace*{60mm} \text{ by a union-bound} \\
& = L \cdot P \cdot \Pr \left( \left| \widebar{F}^{-1}_\ell (k/P) - G^{-1}_\ell (k/P) \right| > \sqrt{\frac{\eta}{L}} \right)   \\ 
&   \le        2 P L \exp\left(-\frac{\eta N_\ell}{2 \sigma^2 L}  \right) \numberthis \label{etaterm}
\end{align*}
Note that $\widehat{G}^{-1}_1, \dots, \widehat{G}^{-1}_L$ form the best trending approximation to the $F^{-1}_i$ by Theorem \ref{tfoptimal}, and since $G^{-1}_1, \dots, G^{-1}_L$ are valid quantile functions which also follow a trend, this implies:
\begin{align*}
& \sum_{\ell = 1}^L \sum_{i \in I_\ell} d_W \left(F^{-1}_i , \widehat{G}^{-1}_\ell \right)^2 \le \sum_{\ell = 1}^L  \sum_{i \in I_\ell}  d_W \left(F^{-1}_i , G^{-1}_\ell \right)^2 \\ 
\Rightarrow & \sum_{\ell = 1}^L  d_W \left(\widebar{F}^{-1}_\ell , \widehat{G}^{-1}_\ell \right)^2 \le \sum_{\ell = 1}^L  d_W \left(\widebar{F}^{-1}_\ell , G^{-1}_\ell \right)^2 \ \ \text{ by Lemma \ref{frechet}} \\ 
\Rightarrow & \forall \ell: \  d_W \left(\widebar{F}^{-1}_\ell , \widehat{G}^{-1}_\ell \right)^2 \le \sum_{\ell = 1}^L  d_W \left(\widebar{F}^{-1}_\ell , G^{-1}_\ell \right)^2  
\end{align*}
Thus, by the triangle-inequality: 
$$ d_W \left( \widehat{G}^{-1}_\ell, G^{-1}_\ell \right)      \le d_W \left(\widebar{F}^{-1}_\ell , G^{-1}_\ell \right) + d_W \left(\widebar{F}^{-1}_\ell , \widehat{G}^{-1}_\ell \right)     \le        2 \left[ \sum_{\ell = 1}^L  d_W \left(\widebar{F}^{-1}_\ell , G^{-1}_\ell \right)^2 \right]^{1/2} \ \forall \ell \\
$$
which implies $ \ \forall \epsilon > 0$ we can combine this result with (\ref{etaterm}) setting $\eta := \epsilon^2 / 4$  to get:
\begin{align*}
& \Pr \left( \exists \ell: d_W(\widehat{G}^{-1}_\ell, G^{-1}_\ell) > \epsilon \right) \le \Pr \left( \sum_{\ell = 1}^L  d_W \left(\widebar{F}^{-1}_\ell , G^{-1}_\ell \right)^2  > \frac{\epsilon^2}{4} \right) \le  2 P L \exp\left(-\frac{\epsilon^2 N_\ell}{8 \sigma^2 L}  \right)
\end{align*}
\end{proof}

\subsection{Proof of Theorem \ref{badfinitesample}}
\begin{proof} We proceed similarly as in the proof of Theorem \ref{consistency}.  Defining 
\begin{equation} \displaystyle \overline{\widehat{F}}^{-1}_\ell(p) := \frac{1}{N_\ell} \sum_{i \in I_\ell} \widehat{F}^{-1}_i (p)
\label{meanestimatedqf} 
\end{equation}
by Theorem \ref{numericaltrends} and Lemma \ref{frechet}, we have:
\begin{align*} & \sum_{\ell = 1}^L d_W \left(\widehat{G}^{-1}_\ell , \overline{\widehat{F}}^{-1}_\ell \right)^2 \le \sum_{\ell = 1}^L d_W \left(G^{-1}_\ell , \overline{\widehat{F}}^{-1}_\ell \right)^2 \\
\Rightarrow & \ \ d_W \left(\widehat{G}^{-1}_\ell , \overline{\widehat{F}}^{-1}_\ell \right)^2 \le \sum_{\ell = 1}^L d_W \left(G^{-1}_\ell , \overline{\widehat{F}}^{-1}_\ell \right)^2 \ \ \ \forall \ell
\end{align*}
since $G^{-1}_1,\dots,G^{-1}_L$ are valid quantile functions which follow a trend.   Thus: 
\begin{align*} \forall \ell: \ d_W \left(\widehat{G}^{-1}_\ell, G^{-1}_\ell \right) &      \le        d_W \left(\widehat{G}^{-1}_\ell, \overline{\widehat{F}}^{-1}_\ell \right) + d_W \left(\overline{\widehat{F}}^{-1}_\ell, G^{-1}_\ell \right) \ \ \text{ by the triangle-inequality}  \\ 
& \le 2 \left[ \sum_{\ell = 1}^L d_W \left(\overline{\widehat{F}}^{-1}_\ell, G^{-1}_\ell \right)^2  \right]^{1/2} \\
& \le 2 \left[ \sum_{\ell = 1}^L \left( d_W \left( \widebar{F}^{-1}_\ell, G^{-1}_\ell \right)+ d_W \left( \overline{\widehat{F}}^{-1}_\ell , \widebar{F}^{-1}_\ell \right) \right)^2   \right]^{1/2} \ \ \text{ by the triangle-inequality}   \\
& \le 2\sqrt{2}  \left[ \sum_{\ell = 1}^L d_W \left( \widebar{F}^{-1}_\ell, G^{-1}_\ell \right)^2+ \sum_{\ell = 1}^L d_W \left( \overline{\widehat{F}}^{-1}_\ell , \widebar{F}^{-1}_\ell \right)^2   \right]^{1/2} \ \text{ by Cauchy-Schwartz}   \\
\end{align*}
Therefore $\forall \epsilon > 0$: 
\begin{align*} 
& \Pr \left( \exists \ell : d_W \left( \widehat{G}^{-1}_\ell, G^{-1}_\ell \right) > \epsilon  \right) \le            \Pr \left( \sum_{\ell = 1}^L d_W \left( \widebar{F}^{-1}_\ell, G^{-1}_\ell \right)^2+ \sum_{\ell = 1}^L d_W \left( \overline{\widehat{F}}^{-1}_\ell , \widebar{F}^{-1}_\ell \right)^2 > \frac{\epsilon^2}{8}  \right) \\
&  \le            \Pr \left( \sum_{\ell = 1}^L d_W \left( \widebar{F}^{-1}_\ell, G^{-1}_\ell \right)^2 > \frac{\epsilon^2}{16} \right)    +     \Pr \left( \sum_{\ell = 1}^L d_W \left( \overline{\widehat{F}}^{-1}_\ell , \widebar{F}^{-1}_\ell \right)^2 > \frac{\epsilon^2}{16}  \right) \ \ \text{ by the union-bound}
\end{align*}
and we can use (\ref{etaterm}) to bound the first summand, resulting in the following bound \\
\begin{equation}
\Pr \left( \exists \ell : d_W \left( \widehat{G}^{-1}_\ell, G^{-1}_\ell \right) > \epsilon  \right) \le  2 PL \exp \left( \frac{- \epsilon^2 N_\ell }{ 32 \sigma^2 L } \right) +       \Pr \left(  \sum_{\ell = 1}^L d_W \left( \overline{\widehat{F}}^{-1}_\ell , \widebar{F}^{-1}_\ell \right)^2 > \frac{\epsilon^2}{16}  \right) \ \numberthis  \label{twoterm} 
\end{equation}
Finally, Lemma \ref{simpleboundlemma} implies:
$$  \Pr \left(  \sum_{\ell = 1}^L d_W \left( \overline{\widehat{F}}^{-1}_\ell , \widebar{F}^{-1}_\ell \right)^2 > \frac{\epsilon^2}{16}  \right)       \le       2 N_\ell P L \exp \left( -2 n R \left( \frac{\epsilon}{4\sqrt{L}} \right)^2  \right)
$$
which produces the desired bound when combined with (\ref{twoterm}).  
\end{proof}

\subsection{Proof of Theorem \ref{boundedfinitesample}}
\begin{proof}
By Lemma \ref{boundedlemma}, (A.\ref{assumptionbounded}) $\Rightarrow$ (A.\ref{assumptiongoodquantileestimator}), so we only need to show the result assuming (A.\ref{assumptiongoodquantileestimator}) holds.  Lemma \ref{boundedtermlemma} then implies:
$$ \Pr \left(  \sum_{\ell = 1}^L d_W \left( \overline{\widehat{F}}^{-1}_\ell , \widebar{F}^{-1}_\ell \right)^2 > \frac{\epsilon^2}{16}  \right) \le 2 P \exp \left(   -  \frac{c^2}{8} N_\ell \hspace*{0.6mm}  n \epsilon^2    \right)
$$
Note that the bound in (\ref{twoterm}) only requires the assumptions from Theorem \ref{consistency}, so we can combine it with the above expression to obtain the desired bound.
\end{proof}

\subsection{Proof of Theorem \ref{bestfinitesample}}
\begin{proof}
\begin{align*}
\text{Consider  } \  \ & \Pr \left( \widehat{F}_i^{-1}(k/P) - F^{-1}_i(k/P) > \epsilon \right) \\
 = & \Pr \left( \widehat{F}_i \left( F^{-1}_i(k/P) +\epsilon \right) \le \frac{k}{P} \right) \\
 = & \Pr \left( \sum_{j=1}^n \mathds{1} \left[X_{i,j} \le F_i^{-1}(k/P) +\epsilon \right] \le \frac{n k}{P} \right) \numberthis \label{hoeffdingbound}
\end{align*}
This is the CDF evaluated at $\widetilde{x} := \frac{n k}{P}$ of a binomial random variable with success probability $\widetilde{p} := F_i \left( F_i^{-1}(k/P) +\epsilon \right)$ in $n$ trials. \\
Now assume  $\epsilon + F^{-1}_i (k/P) \ge B > 0$, which implies $n \widetilde{p} \ge \widetilde{x}$. \\
Letting $D(\alpha \mid \mid \beta)$ denote the relative entropy between the Bernoulli($\alpha$) and Bernoulli($\beta$) distributions, we can thus apply a tail-inequality for the binomial CDF which \citesi{Arratia1989} derived from the Chernoff bound to upper-bound (\ref{hoeffdingbound}) by
\begin{flalign*}
\le & \exp \left( -n D\left( \frac{ \widetilde{x}}{n}  \mid \mid \widetilde{p}  \right) \right) \\
= & \exp \left( -n \left[  \frac{ \widetilde{x}}{n} \log \left( \frac{ \widetilde{x} / n }{ \widetilde{p}} \right)  + \left( 1 - \frac{ \widetilde{x}}{n} \right) \log \left(\frac{1 - \widetilde{x} / n }{1 - \widetilde{p}} \right) \right]  \right) \\
= & \exp \left( -n \left[ \frac{k}{P} \log \left( \frac{ k /P }{ F_i \left( F_i^{-1}(k/P) +\epsilon \right)} \right)  + \left( 1 - \frac{ k}{P} \right) \log \left(\frac{1 - k / P }{1 - F_i \left( F_i^{-1}(k/P) +\epsilon \right)} \right) \right]  \right) \\
\le & \exp \left( -n  \left[ \frac{k}{P} \log \left( \frac{ k} {P } \right) + \left( 1 - \frac{ k}{P} \right) \log \left(\frac{1 - k / P }{1 - F_i \left( F_i^{-1}(k/P) +\epsilon \right)} \right) \right] \right)  \ \ \text{ since } F_i(\cdot) \le 1 \\
= & e^{-n C(k)}    \cdot            \exp \left ( n \left( 1 - \frac{ k}{P} \right) \log \left(1 - F_i \left( F_i^{-1}(k/P) +\epsilon \right) \right) \right)  \\
& \text{ where } C(k) := \frac{k}{P} \log \left( \frac{ k} {P } \right)  +    \left( 1 - \frac{ k}{P} \right) \log \left(1 - \frac{k} { P} \right) \ge -1 
&
\end{flalign*}
\begin{flalign*}
\le & e^{n}    \cdot            \exp \left ( n \left( 1 - \frac{ k}{P} \right) \log \left(1 - F_i \left( F_i^{-1}(k/P) +\epsilon \right) \right) \right)  \\
&  \text{ since the fact } \log x \ge \frac{x-1}{x} \ \ \forall x > 0 \text{ implies } C(k) \ge -1\ \ \forall k \in \{1,\dots, P-1 \} \\
\le &  e^{-n }    \cdot            \exp \left ( n \left( 1 - \frac{ k}{P} \right) \log \left(1 - z \right) \right) \ \  \ \text{ where } z := 1 - \exp\left(-a (F_i^{-1}(k/P) +\epsilon - B + b)^2 \right)  \\
& \text{ because } 1 - k/P > 0 \text{ and by (A.\ref{assumptionbest}): }   F_i \left(F_i^{-1}(k/P) +\epsilon \right) \ge z \ \\
& \text{ since we've assumed } F_i^{-1}(k/P) +\epsilon \ge B
&
\end{flalign*}
\begin{flalign*}
= &  e^{-n }   \cdot            \exp \left (- 2an \left( 1 - \frac{ k}{P} \right)  \left( F_i^{-1}(k/P) +\epsilon - B + b \right)^2 \right) 
\\
\le &  e^{-n}    \cdot            \exp \left (- 2an \left( 1 - \frac{ k}{P} \right) \frac{\min \left\{ b^2, \left( B -F_i^{-1}(k/P) \right)^2 \right\} }{(B - F_i^{-1}(k/P))^2}  \epsilon^2 \right)  \\
& \text{ because }  \epsilon \ge  B - F_i^{-1}(k/P)  \text{ implies } \\
&  \hspace*{20mm} \ \frac{\min \left\{ b^2, \left( B -F_i^{-1}(k/P) \right)^2 \right\} \hspace*{0.5mm} \epsilon^2}{(B - F_i^{-1}(k/P))^2}  \le \left( F_i^{-1}(k/P) +\epsilon - B + b \right)^2 
&
\end{flalign*}
\begin{flalign*}
 = &   \exp \left (-n \left[ 2a \left( 1 - \frac{ k}{P} \right) \frac{\min \left\{ b^2, \left( B -F_i^{-1}(k/P) \right)^2 \right\} }{(B - F_i^{-1}(k/P))^2}  \epsilon^2 - 1 \right] \right) \\
 \le & \exp \left (-n \left( \frac{ 2a \left( 1 - \frac{ k}{P} \right) \min \left\{ b^2, \left( B -F_i^{-1}(k/P) \right)^2 \right\}  -  1}{(B - F_i^{-1}(k/P))^2} \right)  \epsilon^2 \right) \ \ \\
 & \hspace*{50mm} \text{ since we assumed }  \epsilon \ge  B - F_i^{-1}(k/P)    \\
  \le & \exp \left (-n \left( \frac{ 2a \left( 1 - \frac{ k}{P} \right) b^2 -  1}{4B^2} \right)  \epsilon^2 \right) \ \ \  \text{ because by (A.\ref{assumptionbest}) and (A.\ref{assumptionbest3}): } \\
  & \hspace*{65mm}  - F_i^{-1}(k/P) \le B \text{ and } 0 < b \le B  
  \end{flalign*}
  And finally, we can use the fact that $k \le P-1$ to obtain the following bound
  \begin{equation}
\Pr \left( \widehat{F}_i^{-1}(k/P) - F^{-1}_i(k/P) > \epsilon \right) \le  \exp \left (-n \left( \frac{ 2a b^2 -  1}{4PB^2} \right)  \epsilon^2 \right)     
 \numberthis \label{bigepsbound}  
\end{equation}
Following the proof of Lemma \ref{boundedlemma}, one can show that (A.\ref{assumptionbest}) implies
\begin{equation}
\Pr \left( \widehat{F}_i^{-1}(k/P) - F^{-1}_i(k/P) > \epsilon \right)  \le \exp (- 2 n c^2 \epsilon^2) \ \text{ if } 0 < \epsilon < B -  F_i^{-1}(k/P) 
\label{smallepsbound}
\end{equation}
Combining (\ref{smallepsbound}) with (\ref{bigepsbound}), we thus have 
\begin{equation*}
\Pr \left( \widehat{F}_i^{-1}(k/P) - F^{-1}_i(k/P) > \epsilon \right)  \le \exp \left( -n r \epsilon^2  \right) \ \ \forall \epsilon > 0
\end{equation*}
where $r := \min \left\{ 2c^2 \ , \  \frac{ 2a b^2 -  1}{4PB^2} \right\} > 0$ by (A.\ref{assumptionbest2}).  \\
\\
One can show by an identical argument that 
\begin{equation*}
\Pr \left(F^{-1}_i(k/P) -  \widehat{F}_i^{-1}(k/P) > \epsilon \right)  \le \exp \left( -n r \epsilon^2  \right) \ \ \forall \epsilon > 0
\end{equation*}
and therefore
\begin{equation}
\Pr \left( \left| \widehat{F}_i^{-1}(k/P) - F^{-1}_i(k/P) \right|  > \epsilon \right)  \le 2 \exp \left( -n r \epsilon^2  \right) \ \ \forall \epsilon > 0
\end{equation}
\\
\\
$ \widehat{F}_i^{-1}(k/P) - F^{-1}_i(k/P) $ is thus sub-Gaussian with parameter $\frac{1}{2nr}$ and independent of $ \widehat{F}_j^{-1}(k/P) - F^{-1}_j(k/P) \ \ \forall j \neq i $ because we assumed the simple quantile-estimator defined in (A.\ref{assumptionecdf})  is used.  Following the proof of Lemma \ref{boundedtermlemma}, $\forall \gamma > 0$:
\begin{equation}
 \Pr \left(  \sum_{\ell = 1}^L d_W \left( \overline{\widehat{F}}^{-1}_\ell , \widebar{F}^{-1}_\ell \right)^2 > \frac{\epsilon^2}{16} \right) \le 2 P \exp \left( -  \frac{r}{16}  \hspace*{0.5mm}  N_\ell \hspace*{0.8mm} n \hspace*{0.5mm}  \epsilon^2     \right)
\end{equation}
Note that the bound in (\ref{twoterm}) only requires the assumptions from Theorem \ref{consistency}, so we can combine it with the above inequality to obtain the desired bound.
\end{proof}
 \vspace*{3mm}
\begin{lem}[\citesi{Serfling1980}: Theorem 2.3.2]
For $p \in (0,1)$: if $\exists$ unique $x$ s.t.\ $F(x) = p$ and $\widehat{F}^{-1}(p)$ is estimated using $n$ i.i.d.\ samples from CDF $F_i$, then $\forall \gamma > 0$: 
$$ \Pr \left( \left| \widehat{F}^{-1}_i(p) - F^{-1}_i(p)  \right|    > \gamma  \right) \le 2 \exp \left( -2 n R(\gamma, i, p)^2  \right)
$$
where $R(\gamma, i, p) := \min \left\{  F_i \left(F^{-1}_i(p) + \gamma \right) - p \ , \ p - F_i \left(F^{-1}_i(p) - \gamma \right)  \right\}$
\label{serflinglemma}
\end{lem}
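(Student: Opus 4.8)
The plan is to convert the probabilistic statement about the \emph{random quantile} $\widehat{F}^{-1}_i(p)$ into statements about the \emph{random empirical CDF} $\widehat{F}_i$ evaluated at the two fixed, deterministic points $F^{-1}_i(p)\pm\gamma$. Since $\widehat{F}_i(x)=\tfrac1n\sum_{j=1}^n\mathds{1}[X_{i,j}\le x]$ is an average of $n$ i.i.d.\ Bernoulli variables with mean $F_i(x)$, each resulting event is directly amenable to Hoeffding's inequality, which produces precisely the exponential-in-$n$, squared-gap form of the bound. The bridge between the two viewpoints is the standard inversion identity for the generalized inverse $\widehat{F}^{-1}_i(p)=\inf\{x:\widehat{F}_i(x)\ge p\}$: because $\widehat{F}_i$ is nondecreasing and right-continuous, one has $\widehat{F}^{-1}_i(p)>t\iff\widehat{F}_i(t)<p$ and, complementarily, $\widehat{F}^{-1}_i(p)\le t\iff\widehat{F}_i(t)\ge p$, for every real $t$.

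First I would split the two-sided deviation by a union bound into an upper and a lower one-sided deviation. For the upper deviation, the inversion identity at $t=F^{-1}_i(p)+\gamma$ gives the exact equality $\{\widehat{F}^{-1}_i(p)>t\}=\{\widehat{F}_i(t)<p\}$; writing $q_{+}:=F_i(F^{-1}_i(p)+\gamma)$ and applying Hoeffding to the centered average $\widehat{F}_i(t)-q_{+}$ yields $\Pr(\widehat{F}_i(t)<p)\le\exp(-2n(q_{+}-p)^2)$, where uniqueness of the $p$-quantile forces $q_{+}-p>0$. For the lower deviation, the complementary identity at $t=F^{-1}_i(p)-\gamma$ together with $\{\widehat{F}^{-1}_i(p)<t\}\subseteq\{\widehat{F}^{-1}_i(p)\le t\}$ bounds the event by $\{\widehat{F}_i(t)\ge p\}$; with $q_{-}:=F_i(F^{-1}_i(p)-\gamma)<p$, Hoeffding gives $\Pr(\widehat{F}_i(t)\ge p)\le\exp(-2n(p-q_{-})^2)$.

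Summing the two one-sided bounds and replacing each exponent by the smaller of the two gaps, $R(\gamma,i,p)=\min\{q_{+}-p,\,p-q_{-}\}$, collapses the sum into the claimed $2\exp(-2nR(\gamma,i,p)^2)$. I expect the main obstacle to be bookkeeping the inversion identity and its strict-versus-nonstrict inequalities precisely at the threshold: a careless treatment of whether $\widehat{F}^{-1}_i(p)>t$ corresponds to $\widehat{F}_i(t)<p$ or to $\widehat{F}_i(t)\le p$ would shift the argument fed to Hoeffding and destroy the clean gap $q_{+}-p$ (respectively $p-q_{-}$). The uniqueness hypothesis on the $p$-quantile is exactly what excludes the degenerate case in which $F_i$ is flat at level $p$, where both gaps could vanish and the bound would become vacuous.
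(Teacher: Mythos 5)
Your proof is correct and takes essentially the same route as the proof of this result in Serfling (1980, Theorem 2.3.2), which the paper cites rather than reproving: translate the quantile-deviation events into empirical-CDF events via the generalized-inverse identities $\widehat{F}^{-1}_i(p)>t\iff\widehat{F}_i(t)<p$ and $\widehat{F}^{-1}_i(p)\le t\iff\widehat{F}_i(t)\ge p$ at $t=F^{-1}_i(p)\pm\gamma$, apply Hoeffding's inequality to the resulting binomial averages, and combine the two one-sided bounds by a union bound using the minimum gap $R(\gamma,i,p)$. Your attention to the strict-versus-nonstrict inequalities and to the uniqueness hypothesis forcing $F_i(F^{-1}_i(p)+\gamma)-p>0$ and $p-F_i(F^{-1}_i(p)-\gamma)>0$ matches the standard argument exactly.
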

 \vspace*{3mm}
\begin{lem} Under the assumptions of Theorem \ref{badfinitesample} and definitions (\ref{rgamma}), (\ref{meanqf}),  (\ref{meanestimatedqf})

$$ \forall \gamma > 0 : \ \ \ \Pr \left(  \sum_{\ell = 1}^L d_W \left( \overline{\widehat{F}}^{-1}_\ell , \widebar{F}^{-1}_\ell \right)^2 > \gamma \right) \le 
 2 N_\ell P L \exp \left( -2 n R \left(\sqrt{\gamma / L} \right)^2  \right)
$$ \label{simpleboundlemma}
\end{lem}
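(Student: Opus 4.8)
The plan is to mimic the union-bound decomposition that produced the bound labeled (\ref{etaterm}) in the proof of Theorem \ref{consistency}, but now routing all the randomness through the per-quantile estimation error controlled by Serfling's Lemma \ref{serflinglemma} rather than through the sub-Gaussian batch noise. First I would expand $d_W$ via its midpoint definition over the $P-1$ equally spaced quantiles $k/P$, writing
\[
\sum_{\ell=1}^L d_W\!\left(\overline{\widehat F}^{-1}_\ell,\widebar F^{-1}_\ell\right)^2 = \sum_{\ell=1}^L \sum_{k=1}^{P-1}\frac1P\left(\overline{\widehat F}^{-1}_\ell(k/P)-\widebar F^{-1}_\ell(k/P)\right)^2.
\]
Since a sum of $L(P-1)$ nonnegative terms can exceed $\gamma$ only if some single term exceeds $\gamma/(PL)$, a union bound over the pairs $(\ell,k)$ gives
\[
\Pr\!\left(\sum_{\ell=1}^L d_W\!\left(\overline{\widehat F}^{-1}_\ell,\widebar F^{-1}_\ell\right)^2 > \gamma\right)\le \sum_{\ell=1}^L\sum_{k=1}^{P-1}\Pr\!\left(\left|\overline{\widehat F}^{-1}_\ell(k/P)-\widebar F^{-1}_\ell(k/P)\right| > \sqrt{\gamma/L}\right).
\]

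Next I would reduce each averaged deviation to individual per-batch errors. By definitions (\ref{meanqf}) and (\ref{meanestimatedqf}) the quantity inside the probability equals $\frac1{N_\ell}\sum_{i\in I_\ell}\bigl(\widehat F^{-1}_i(k/P)-F^{-1}_i(k/P)\bigr)$, and because the absolute value of an average never exceeds the maximal absolute summand, the event that this average exceeds $t:=\sqrt{\gamma/L}$ forces at least one individual error to exceed $t$. A second union bound over $i\in I_\ell$ (with $|I_\ell|=N_\ell$) followed by Lemma \ref{serflinglemma} bounds each summand by $2\exp\!\bigl(-2nR(t,i,k/P)^2\bigr)$; replacing $R(t,i,k/P)$ by its minimum $R(t)$ from (\ref{rgamma}) produces the uniform bound $2N_\ell\exp\!\bigl(-2nR(t)^2\bigr)$. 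Composing the two union bounds and using $P-1\le P$ yields $2N_\ell P L \exp\!\bigl(-2nR(\sqrt{\gamma/L})^2\bigr)$, exactly as claimed.

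The step I expect to be the crux is this reduction from the averaged quantile error to the individual errors. Because Theorem \ref{badfinitesample} is stated for arbitrary distributions obeying only the nonzero-density condition (A.\ref{assumptionnonzerodensity}), no usable concentration of the average $\frac1{N_\ell}\sum_{i\in I_\ell}(\widehat F^{-1}_i-F^{-1}_i)$ is available, so I must settle for the crude estimate $|\text{average}|\le\max_i|\cdot|$ and a union bound over the batches. This is precisely what introduces the leading factor $N_\ell$, rather than the averaging-driven improvement one might hope for, and it explains why the overall bound fails to tend to $1$ unless $n$ grows fast enough relative to $\log N_\ell$, consistent with the remark following Theorem \ref{badfinitesample}. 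The remaining checks are routine: the hypotheses (A.\ref{assumptionsamplequantiles})--(A.\ref{assumptionecdf}) guarantee that the simple empirical quantile estimator is being used on $n$ i.i.d.\ samples, and (A.\ref{assumptionnonzerodensity}) supplies the unique-preimage condition $F_i(x)=k/P$ required for Lemma \ref{serflinglemma} to apply at each quantile of interest.
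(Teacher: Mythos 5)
Your proof is correct and follows essentially the same route as the paper's: expand the midpoint approximation of $d_W$, union-bound down to individual per-batch, per-quantile deviation events at threshold $\sqrt{\gamma/L}$, apply Lemma \ref{serflinglemma}, and uniformize via the minimum $R(\cdot)$ from (\ref{rgamma}). The only cosmetic difference is how the within-level average is handled---the paper passes (implicitly via Jensen's inequality, written there as an equality) to the average of squared errors before its union bound, whereas you use $\left| \text{average} \right| \le \max_{i \in I_\ell} \left| \widehat{F}^{-1}_i(k/P) - F^{-1}_i(k/P) \right|$ and a union bound over batches; both steps introduce the same factor of $N_\ell$ and arrive at the identical bound.
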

\begin{proof}[Proof of Lemma \ref{simpleboundlemma}]
\begin{align*}
 & \Pr \left(  \sum_{\ell = 1}^L d_W \left( \overline{\widehat{F}}^{-1}_\ell , \widebar{F}^{-1}_\ell \right)^2 > \gamma \right) \\
= &  \Pr \left(  \sum_{\ell = 1}^L \frac{1}{N_\ell} \sum_{i \in I_\ell} \sum_{k=1}^{P-1} \frac{1}{P}  \left( \widehat{F}^{-1}_i(k/P) - {F}^{-1}_i(k/P) \right)^2 > \gamma \right)  \\
\le & N_\ell L \sum_{k=1}^{P-1} \Pr \left( \left| \widehat{F}^{-1}_i(k/P) - {F}^{-1}_i(k/P)  \right| > \sqrt{\frac{\gamma}{L}} \right) \ \ \text{ by the union-bound} \\
\le & 2 N_\ell L \sum_{k=1}^{P-1} \exp \left( - 2 n R \left( \sqrt{\gamma/ L}, i, k/P \right)^2 \right) \ \ \text{ by (A.\ref{assumptionnonzerodensity}) and Lemma \ref{serflinglemma}} \\
\le & 2 N_\ell L P \exp \left( - 2 n R \left( \sqrt{\gamma/L} \right) ^2 \right)  \ \ \text{  by definition (\ref{rgamma})} 
\end{align*}
\end{proof}
\vspace*{3mm}

\begin{lem} If we assume (A.\ref{assumptionsamplequantiles}) and (A.\ref{assumptionnonzerodensity}), then condition (A.\ref{assumptionbounded}) implies condition (A.\ref{assumptiongoodquantileestimator}).
\label{boundedlemma}  
\end{lem}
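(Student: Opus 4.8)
The plan is to verify the two requirements of (A.\ref{assumptiongoodquantileestimator}) in turn. The first requirement — that the quantiles of each $P_i$ are estimated independently across batches — is immediate: under (A.\ref{assumptionsamplequantiles}) the estimator $\widehat{F}^{-1}_i$ specified in (A.\ref{assumptionecdf}) is a deterministic function of the samples $X_{1,i},\dots,X_{n,i}$ alone, and these sample-sets are drawn independently for different $i$. Hence $\widehat{F}^{-1}_i(k/P)$ and $\widehat{F}^{-1}_j(k/P)$ are independent whenever $i \neq j$, giving part 1 with no further work.

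For the second requirement (the sub-Gaussian rate), I would bound the two one-sided deviations separately and then combine by a union bound, rather than invoking the two-sided bound of Lemma \ref{serflinglemma} directly. Consider the upper deviation. Using the definition $\widehat{F}^{-1}_i(p) = \inf\{x : \widehat{F}_i(x) \ge p\}$, one has the equivalence $\widehat{F}^{-1}_i(p) > t \iff \widehat{F}_i(t) < p$, so that
\[
\Pr\!\left(\widehat{F}^{-1}_i(k/P) - F^{-1}_i(k/P) > \gamma\right) = \Pr\!\left(\widehat{F}_i\!\left(F^{-1}_i(k/P)+\gamma\right) < \tfrac{k}{P}\right),
\]
and the right-hand side is the lower tail of an average of $n$ i.i.d.\ Bernoulli indicators with mean $q := F_i(F^{-1}_i(k/P)+\gamma)$. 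The key step is to lower-bound the gap $q - k/P$. Since (A.\ref{assumptionbounded}) makes $F_i$ continuous with $F_i(F^{-1}_i(k/P)) = k/P$ and $f_i \ge c$ throughout $[-B,B]$, whenever the interval $[F^{-1}_i(k/P), F^{-1}_i(k/P)+\gamma]$ stays inside the support we get $q - k/P = \int_{F^{-1}_i(k/P)}^{F^{-1}_i(k/P)+\gamma} f_i \ge c\gamma$, and Hoeffding's inequality then yields the bound $\exp(-2nc^2\gamma^2)$. The remaining case, $\gamma \ge B - F^{-1}_i(k/P)$, is handled by the observation that the simple estimator $\widehat{F}^{-1}_i(k/P)$ is always an order statistic and therefore lies in $[-B,B]$, so the deviation event is then impossible and its probability is $0 \le \exp(-2nc^2\gamma^2)$. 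A symmetric argument (using $f_i \ge c$ on $[F^{-1}_i(k/P)-\gamma, F^{-1}_i(k/P)]$, and the impossibility of the event once $\gamma \ge F^{-1}_i(k/P)+B$) controls the lower deviation by the same quantity. Adding the two bounds gives $\Pr(|\widehat{F}^{-1}_i(k/P) - F^{-1}_i(k/P)| > \gamma) \le 2\exp(-2nc^2\gamma^2)$ for every $\gamma > 0$, which is precisely the rate demanded in (A.\ref{assumptiongoodquantileestimator}), with the same constant $c$.

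The main obstacle is exactly the large-$\gamma$ regime: the naive lower bound $q - k/P \ge c\gamma$ cannot hold for all $\gamma$, since $q - k/P$ saturates at $1 - k/P$ once $F^{-1}_i(k/P)+\gamma$ exits the support, and this is why a direct appeal to the two-sided bound of Lemma \ref{serflinglemma} (whose exponent $R(\gamma,i,k/P)$ would likewise saturate) does not deliver the desired $\exp(-2nc^2\gamma^2)$ form. Splitting into one-sided events and exploiting the boundedness of the support — which makes each one-sided deviation literally impossible beyond $[-B,B]$ — resolves this cleanly, so that the linear-in-$\gamma$ bound is only ever needed on the range where it genuinely holds. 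A secondary point requiring care is the correct translation between the $\inf$-definition of the estimator and the empirical-CDF event, which must be carried out with the right strict/non-strict inequalities to keep the Hoeffding step valid; once that equivalence is pinned down, the argument for (A.\ref{assumptionbest}) in the proof of Theorem \ref{bestfinitesample} can reuse essentially the same skeleton, now with the heavier-tailed correction term in the large-$\gamma$ case.
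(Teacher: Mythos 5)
Your proof is correct and follows essentially the same route as the paper's: a one-sided decomposition of the deviation event, a lower bound of $c\gamma$ on the CDF gap via the density bound $f_i \ge c$ on $[-B,B]$ (the paper packages this as $R(\gamma,i,p) \ge c\gamma$ and invokes the one-sided bounds from Serfling's Lemma~\ref{serflinglemma}, while you re-derive them directly via the binomial/Hoeffding argument), and the observation that the bounded support makes each one-sided deviation impossible in the large-$\gamma$ regime --- exactly the paper's equations (\ref{lowerf}) and (\ref{upperf}). Your explicit verification of the independence clause of (A.\ref{assumptiongoodquantileestimator}) and your remark on why the two-sided bound of Lemma~\ref{serflinglemma} saturates are both accurate and consistent with how the paper uses these facts.
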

\begin{proof}[Proof of Lemma \ref{boundedlemma}]
Assume WLOG that $F^{-1}_i(k/P) \ge 0$ and note that $F^{-1}_i(k/P) \le B$ by (A.\ref{assumptionbounded}). \\
Then, by a bound established in the proof of Lemma \ref{serflinglemma} given in \citepsi{Serfling1980}, $\forall \epsilon  > 0:$
\begin{equation}  
\Pr \left( \widehat{F}^{-1}_i(k/P) - F^{-1}_i(k/P)   > \epsilon \right) \le \exp \left(- 2 n \hspace*{0.5mm} R(\epsilon, i, k/P)^2 \right) 
 \label{serflinglower}
 \end{equation} 
and 
\begin{equation}
\Pr \left( F^{-1}_i(k/P)  - \widehat{F}^{-1}_i(k/P)  > \epsilon \right) \le \exp \left(- 2 n \hspace*{0.5mm} R(\epsilon, i, k/P)^2 \right) 
 \label{serflingupper}
\end{equation}
By (A.\ref{assumptionbounded}):  $ \ f_i(x) = \frac{d}{dx} F_i(x) \ge c \ \forall x \in (-B , B) \ \ $ which implies   \\
\begin{equation}
R(\gamma, i, p) \ge c \gamma > 0 \ \text{ if } F_i^{-1}(p) \pm \gamma \in (-B , B)
\label{deriv}
\end{equation}
because recall that we defined $R(\gamma, i, p) := \min \left\{  F_i \left(F^{-1}_i(p) + \gamma \right) - p \ , \ p - F_i \left(F^{-1}_i(p) - \gamma \right)  \right\}$.  \\
Together with (\ref{deriv}), (\ref{serflinglower}) and (\ref{serflingupper}) imply
\begin{equation}  
\Pr \left( \widehat{F}^{-1}_i(k/P) - F^{-1}_i(k/P)   > \epsilon \right) \le \exp(- 2 n c^2 \epsilon^2) \ \text{ if } F_i^{-1}(k/P) +  \epsilon < B    \label{lowernonzero}
 \end{equation}
and 
\begin{equation}
\Pr \left( F^{-1}_i(k/P)  - \widehat{F}^{-1}_i(k/P)  > \epsilon \right) \le \exp(- 2 n c^2 \epsilon^2) \ \text{ if } F_i^{-1}(k/P) -  \epsilon > -B       \label{uppernonzero}
\end{equation}
Note that because $f_i(x) = 0 \ \forall x \ge B$, we have 
\begin{align*}
& \Pr \left( \widehat{F}^{-1}_i(k/P) > F^{-1}_i(k/P) + \epsilon \right) = 0 \ \ \text{ if  } \epsilon \ge B - F_i^{-1}(k/P) \\
\implies & \Pr \left( \widehat{F}^{-1}_i(k/P) - F^{-1}_i(k/P)   > \epsilon \right)  = 0 \ \ \text{ if  } \epsilon \ge B - F_i^{-1}(k/P)\numberthis \label{lowerf} 
\end{align*}
as well as 
\begin{align*} & \Pr \left( \widehat{F}^{-1}_i(k/P) < F^{-1}_i(k/P) - \epsilon \right) = 0 \ \ \text{ if  } \epsilon \ge B + F_i^{-1}(k/P) \\
\implies & \Pr \left( F^{-1}_i(k/P)  - \widehat{F}^{-1}_i(k/P)  > \epsilon \right) = 0  \ \ \text{ if  } \epsilon \ge B + F_i^{-1}(k/P) \numberthis \label{upperf} 
\end{align*}
Putting together (\ref{lowernonzero}), (\ref{uppernonzero}), (\ref{lowerf}), and (\ref{upperf}), we thus have \\
$$  \Pr \left( \widehat{F}^{-1}_i(k/P) - F^{-1}_i(k/P)   > \epsilon \right) \le  \exp(- 2 n c^2 \epsilon^2) \ \ \ \ \forall \epsilon > 0
$$ 
and 
$$
\Pr \left( F^{-1}_i(k/P) -  \widehat{F}^{-1}_i(k/P)   > \epsilon \right) \le  \exp(- 2 n c^2 \epsilon^2) \ \ \ \ \forall \epsilon > 0 
$$
\text{which implies }
$$
 \ \Pr \left( \left| F^{-1}_i(k/P) -  \widehat{F}^{-1}_i(k/P) \right|   > \epsilon \right) \le 2 \exp(- 2 n c^2 \epsilon^2) \ \ \ \ \forall \epsilon > 0 
$$
\end{proof}
\vspace*{3mm}

\begin{lem} 
Under condition (A.\ref{assumptiongoodquantileestimator}) and definitions (\ref{rgamma}), (\ref{meanqf}),  (\ref{meanestimatedqf})
$$ \forall \gamma > 0 : \ \ \ \Pr \left(  \sum_{\ell = 1}^L d_W \left( \overline{\widehat{F}}^{-1}_\ell , \widebar{F}^{-1}_\ell \right)^2 > \gamma \right) \le 
 2 P \exp \left( -2 n c^2 N_\ell \gamma  \right)
$$ 
\label{boundedtermlemma} 
\end{lem}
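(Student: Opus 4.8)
The plan is to follow the same two bounds already assembled in Lemma~\ref{simpleboundlemma} and in the noise term of Theorem~\ref{consistency}, but to cash in the two extra structural facts granted by condition (A.\ref{assumptiongoodquantileestimator}): independence of the per-batch quantile errors across $i$, and their sub-Gaussian rate. Writing $W_{i,k} := \widehat{F}^{-1}_i(k/P) - F^{-1}_i(k/P)$, I first expand, via the midpoint rule and definitions (\ref{meanqf})--(\ref{meanestimatedqf}), $\sum_{\ell=1}^L d_W(\overline{\widehat{F}}^{-1}_\ell,\widebar{F}^{-1}_\ell)^2 = \sum_{\ell=1}^L\sum_{k=1}^{P-1}\frac1P Z_{\ell,k}^2$, where $Z_{\ell,k} := \frac{1}{N_\ell}\sum_{i\in I_\ell} W_{i,k}$ is the batch-averaged estimation error at quantile $k/P$ and level $\ell$. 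The goal is then a tail bound on this sum of squares.

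The crux, and the one place the argument departs from Lemma~\ref{simpleboundlemma}, is to obtain for each average $Z_{\ell,k}$ a sub-Gaussian tail at the improved rate $nN_\ell$ in place of $n$. By part~2 of (A.\ref{assumptiongoodquantileestimator}) each $W_{i,k}$ is sub-Gaussian with variance proxy $1/(4nc^2)$, and by part~1 the $W_{i,k}$ for $i\in I_\ell$ are independent; so a Chernoff bound for the average of $N_\ell$ independent sub-Gaussians yields $\Pr(|Z_{\ell,k}|>\eta)\le 2\exp(-2nc^2 N_\ell \eta^2)$ for every $\eta>0$. This is the same maneuver that appears at the end of the proof of Theorem~\ref{bestfinitesample}, where the quantile error is declared sub-Gaussian and independent across batches, and it is exactly the quantitative gain unavailable in the general setting of Lemma~\ref{simpleboundlemma}, which only has the individual-error tail of (A.\ref{assumptionnonzerodensity})/Lemma~\ref{serflinglemma}.

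Finally I would aggregate over the $(\ell,k)$ coordinates by a union bound, as in (\ref{etaterm}): the event $\{\sum_{\ell,k}\frac1P Z_{\ell,k}^2 > \gamma\}$ forces some coordinate to have $Z_{\ell,k}^2$ exceed a threshold, and summing the step-two tail over the at most $PL$ coordinates finishes the estimate. The main obstacle is landing precisely on the stated form $2P\exp(-2nc^2N_\ell\gamma)$. Spending the budget $\gamma$ evenly across all $L(P-1)$ coordinates — the route used in (\ref{etaterm}) and in Lemma~\ref{simpleboundlemma} — gives instead $2PL\exp(-2nc^2N_\ell\gamma/L)$, carrying the same factor of $L$ in the prefactor and in the exponent that shows up in the first summand of Theorem~\ref{consistency}. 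To strip that $L$ one would, for each fixed quantile $k$, control the full inner sum $\sum_{\ell} Z_{\ell,k}^2$ by a chi-squared / sub-exponential Bernstein inequality and union bound only over the $P-1$ quantiles. I expect this to be the delicate step: the upper tail of a sum of $L$ independent squared sub-Gaussians carries an irreducible offset of order $L/(nc^2N_\ell)$ (its mean, as in Laurent--Massart), so the cleanest honest output of the method is the $L$-dependent bound above, which reduces to the stated one exactly in the short-sequence regime $L=O(1)$ that the paper targets.
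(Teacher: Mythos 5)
Your derivation is sound, and your suspicion about the factor of $L$ is well founded: the $L$-free bound stated in Lemma \ref{boundedtermlemma} is not what your (valid) argument yields, and it is not what the paper's own argument legitimately yields either. Writing $W_{i,k} := \widehat{F}^{-1}_i(k/P) - F^{-1}_i(k/P)$, the paper's proof first rewrites the event $\bigl\{\sum_\ell d_W(\overline{\widehat{F}}^{-1}_\ell,\widebar{F}^{-1}_\ell)^2 > \gamma\bigr\}$ as $\bigl\{\frac{1}{LN_\ell}\sum_{\ell,i,k}\frac{1}{P}W_{i,k}^2 > \gamma/L\bigr\}$ — this is really only an inclusion, since $\bigl(\frac{1}{N_\ell}\sum_{i\in I_\ell}W_{i,k}\bigr)^2 \le \frac{1}{N_\ell}\sum_{i\in I_\ell}W_{i,k}^2$, but that direction is harmless — and then union-bounds over $k$ alone, replacing for each $k$ the event $\bigl\{\frac{1}{LN_\ell}\sum_{\ell,i}W_{i,k}^2 > \gamma/L\bigr\}$ by $\bigl\{\bigl|\frac{1}{LN_\ell}\sum_{\ell,i}W_{i,k}\bigr| > \sqrt{\gamma/L}\bigr\}$, whose tail for a single grand average of $LN_\ell$ independent sub-Gaussians is $2\exp(-2nc^2LN_\ell\cdot\gamma/L)$, producing the clean constant. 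That replacement is Jensen run in the wrong direction: the square of an average is at most the average of squares, so a large average of squares does not force a large absolute average (signs can cancel), and the claimed event inclusion fails. Your chi-squared remark pinpoints why no repair exists: taking the $W_{i,k}$ i.i.d.\ $N(0,\tfrac{1}{4nc^2})$, which saturates (A.\ref{assumptiongoodquantileestimator}), the statistic concentrates near its mean of order $L/(nc^2N_\ell)$, and at $\gamma$ equal to half that mean the stated bound $2P\exp(-2nc^2N_\ell\gamma)$ is exponentially small in $L$ while the true probability tends to one; so the lemma as stated fails once $L \gg \log P$.

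Your own route — establishing $\Pr(|Z_{\ell,k}|>\eta)\le 2\exp(-2nc^2N_\ell\eta^2)$ for the batch-averaged errors via independence across $i\in I_\ell$, then union-bounding over all $(\ell,k)$ coordinates exactly as in (\ref{etaterm}) and Lemma \ref{simpleboundlemma} — is the correct repair and gives $2PL\exp(-2nc^2N_\ell\gamma/L)$. Substituting this in the proof of Theorem \ref{boundedfinitesample} changes its second summand to $2PL\exp\bigl(-\frac{c^2}{8}N_\ell\, n\,\epsilon^2/L\bigr)$, which structurally matches the first summand (that term already carries $L$ both in the prefactor and in the exponent), so all qualitative conclusions — consistency and exponential rates in $n$ and $N_\ell$ — survive, and in the short-sequence regime $L=O(1)$ targeted by the paper the loss is absorbed into constants (though your ``reduces exactly'' is a slight overstatement: even at $L=2$ the bounds differ by constants). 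In short: your proof is correct, it aggregates differently from the paper (over all $(\ell,k)$ rather than over $k$ alone), and the single step you could not reproduce is precisely the step in the paper's proof that is erroneous.
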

\begin{proof}[Proof of Lemma \ref{boundedtermlemma}]
\begin{align*}
 & \Pr \left(  \sum_{\ell = 1}^L d_W \left( \overline{\widehat{F}}^{-1}_\ell , \widebar{F}^{-1}_\ell \right)^2 > \gamma \right) \\
= &  \Pr \left( \frac{1}{L N_\ell} \sum_{\ell = 1}^L \sum_{i \in I_\ell} \sum_{k=1}^{P-1} \frac{1}{P}  \left( \widehat{F}^{-1}_i(k/P) - {F}^{-1}_i(k/P) \right)^2 > \frac{\gamma}{L} \right)  \\
\le & \sum_{k=1}^{P-1} \Pr \left( \left| \frac{1}{L N_\ell}\sum_{\ell = 1}^L \sum_{i \in I_\ell} \widehat{F}^{-1}_i(k/P) - {F}^{-1}_i(k/P)  \right| > \sqrt{\frac{\gamma}{L}} \right) \ \ \text{ by the union-bound} \\
\le & 2  \sum_{k=1}^{P-1} \exp \left( - 2 n c^2 L N_\ell \sqrt{\frac{ \gamma}{ L}}^2     \right) =  2 P \exp \left( - 2 n c^2 N_\ell  \gamma     \right)
\end{align*}
where in the last inequality, we have used the fact that (A.\ref{assumptiongoodquantileestimator}) implies the $\widehat{F}^{-1}_i(k/P) - {F}^{-1}_i(k/P)$ are independent sub-Gaussian random variables with parameter $\frac{1}{4nc^2}$, so the inequality follows from  a standard application of the Chernoff bound \citepsi{Vershynin2012, boucheron2013}.  
\end{proof}
\end{singlespace}

\clearpage
\bibliographystylesi{agsm}
\bibliographysi{TRENDS}


\end{document}